\theoremstyle{plain}
\newtheorem{theorem}{Theorem}
\newtheorem{lemma}[theorem]{Lemma}
\newtheorem{corollary}[theorem]{Corollary}
\newtheorem{proposition}[theorem]{Proposition}
\theoremstyle{remark}
\newtheorem{rem}[theorem]{\bf Remark}
\newtheorem{example}{Example}
\numberwithin{equation}{section}
\DeclareMathOperator{\spec}{spec}
\DeclareMathOperator{\supp}{supp}
\DeclareMathOperator{\loc}{loc}
\DeclareMathOperator{\dist}{dist}
\DeclareMathOperator{\dom}{dom}
\DeclareMathOperator{\vol}{vol}
\DeclareMathOperator{\sing}{sing}
\DeclareMathOperator{\ran}{ran}
\DeclareMathOperator{\link}{Link}
\DeclareMathOperator{\Sd}{\mathcal{S}_{disc}}
\DeclareMathOperator{\Sf}{sf}
\DeclareMathOperator{\err}{err}
\DeclareMathOperator{\Int}{int}
\DeclareMathOperator{\ext}{ext}
\DeclareMathOperator{\per}{per}
\DeclareMathOperator{\clos}{clos}
\DeclareMathOperator{\slim}{s--lim}
\renewcommand{\phi}{\varphi}
\newcommand{\eps}{\varepsilon}
\newcommand{\dint}{\displaystyle\int}
\newcommand{\norm}[1]{\lVert #1 \rVert}
\newcommand{\cip}[2]{\langle #1, #2 \rangle}
\newcommand{\wt}[1]{\widetilde{#1}}
\newcommand{\nn}{\nonumber}
\newcommand{\D}{\mathbb{D}}
\newcommand{\R}{\mathbb{R}}
\newcommand{\C}{\mathbb{C}}
\newcommand{\Z}{\mathbb{Z}}
\newcommand{\T}{\mathbb{T}}
\newcommand{\N}{\mathbb{N}}
\renewcommand{\S}{\mathbb{S}}
\newcommand{\cB}{\mathcal{B}}
\newcommand{\cD}{\mathcal{D}}
\newcommand{\cE}{\mathcal{E}}
\newcommand{\cG}{\mathcal{G}}
\newcommand{\cH}{\mathcal{H}}
\newcommand{\cL}{\mathcal{L}}
\newcommand{\cT}{\mathcal{T}}
\newcommand{\rT}{\mathrm{T}}
\renewcommand{\d}{\mathrm{d}}
\newcommand{\sC}{\mathscr{C}}
\newcommand{\sD}{\mathscr{D}}
\newcommand{\sK}{\mathscr{K}}
\newcommand{\sS}{\mathscr{S}}
\newcommand{\sT}{\mathscr{T}}
\newcommand{\ua}{\underline{\alpha}}
\newcommand{\uS}{\underline{S}}
\newcommand{\bn}{\boldsymbol{n}}
\newcommand{\bA}{\boldsymbol{A}}
\newcommand{\bB}{\boldsymbol{B}}
\newcommand{\bN}{\boldsymbol{N}}
\newcommand{\bT}{\boldsymbol{T}}
\newcommand{\bx}{\boldsymbol{x}}
\newcommand{\be}{\boldsymbol{e}}
\newcommand{\bv}{\boldsymbol{v}}
\newcommand{\bp}{\boldsymbol{p}}
\newcommand{\bu}{\boldsymbol{u}}
\newcommand{\bS}{\boldsymbol{S}}
\newcommand{\bsigma}{\boldsymbol{\sigma}}
\newcommand{\Sig}{\Sigma}
\newcommand{\Cg}{C_{\gamma}}
\newcommand{\bal}{\boldsymbol{\alpha}}
\newcommand{\bet}{\boldsymbol{\beta}}
\newcommand{\bs}{\mathbf{s}}
\newcommand{\Tf}{[0,1]_{\per}}
\begin{document}
\title[]{Analysis of zero modes for Dirac operators with magnetic links}
\date{}

\author[F. Portmann]{Fabian Portmann}
\address[F. Portmann]{QMATH, Department of Mathematical Sciences, University of Copenhagen
Universitetsparken 5, 2100 Copenhagen, DENMARK\newline
Current Address: IBM Switzerland, Vulkanstrasse 106 Postfach, 8010 Z\"{u}rich, Switzerland} 
\email{f.portmann@bluewin.ch}

\author[J. Sok]{J\'er\'emy Sok}
\address[J. Sok]{QMATH, Department of Mathematical Sciences, University of Copenhagen
Universitetsparken 5, 2100 Copenhagen, DENMARK\newline
Current Address: DMI, Universit\"{a}t Basel, Spiegelgasse 1, 4051, Basel, Switzerland} 
\email{jeremyvithya.sok@unibas.ch}

\author[J. P. Solovej]{Jan Philip Solovej}
\address[J. P. Solovej]{QMATH, Department of Mathematical Sciences, University of Copenhagen
Universitetsparken 5, 2100 Copenhagen, DENMARK} 
\email{solovej@math.ku.dk}

\thanks{The authors acknowledge support from the ERC grant Nr.\ 321029 ``The mathematics of the structure of matter" and
VILLUM FONDEN through the QMATH Centre of Excellence grant. nr.\ 10059.
This work has been done when all the authors were working at the university of Copenhagen.}

\begin{abstract}
	In this paper we provide a means to approximate Dirac operators with magnetic fields supported on
	links in $\S^3$ (and $\R^3$) by Dirac operators with smooth magnetic fields.
	Then we proceed to prove that under certain assumptions, the spectral flow of paths along these operators
	is the same in both the smooth and the singular case. We recently characterized the spectral flow of such paths
	in the singular case. This allows us to show the existence
	of \emph{new} smooth, compactly supported magnetic fields in $\R^3$
	for which the associated Dirac operator has a non-trivial kernel. Using Clifford analysis, we also
	obtain criteria on the magnetic link for the non-existence of zero modes.
\end{abstract}

\keywords{Dirac operators, knots, spectral flow, zero modes}

\maketitle
\tableofcontents

\section{Introduction}
In two previous papers \cites{dirac_s3_paper1,dirac_s3_paper2} we have introduced 
Dirac operators on $\S^3$ and $\R^3$ with magnetic fields supported on links, 
and we investigated the spectral properties of these operators. 
These rather singular magnetic fields, which we call magnetic links, are described
by an oriented link $\gamma=\cup_{k=1}^K\gamma_k$ together with a collection 
of fluxes $(2\pi\alpha_k)_{k=1}^K$ carried by its connected components.
Each knot $\gamma_k$ can be thought of
as a field line carrying flux $2\pi\alpha_k$. 

When fixing the oriented link $\gamma$, the associated Dirac operators exhibit a 
$2\pi$-periodicity for each flux $2\pi\alpha_k$, and \cite{dirac_s3_paper2} 
was devoted to the study of the spectral flow of loops of 
Dirac operators (on $\S^3$) when tuning the fluxes $2\pi\alpha_k$'s in the 
corresponding parameter space. 
We recall that the spectral flow is the net number of eigenvalues 
(counting multiplicity) crossing the spectral point $0$ from negative to positive along the path. 
For many types of links we were able to show that
the spectral flow is non-zero, which implies that the kernel of
the corresponding Dirac operator is non-trivial, or in other words that the operator has zero modes.

In this paper we show that these singular Dirac operators are limits 
(in an appropriate sense) of Dirac operators with smooth magnetic fields.
The approximation consists in smearing out
the field lines $\gamma_k$'s while keeping the fluxes $0\le 2\pi\alpha_k<1$ fixed. 
In many cases, the spectral flow of paths in the 
singular case equals the spectral flow of the corresponding smooth versions.
Thus we obtain new examples of zero modes in $\S^3$ for Dirac operators
with smooth fields and finite field energy.
In particular, these smooth magnetic fields have compact support in the vicinity of the oriented link.
Note that by conformal invariance of the kernel, the same result holds in $\R^3$ as well.

\subsection{Overview}
The importance of zero modes for Dirac operators (or Pauli operators) became apparent 
in the study of stability of (non-relativistic) matter in the presence of 
magnetic fields \cite{FroLiebLoss86,LiebLoss86}.
In \cite{FroLiebLoss86}, the authors considered the Hamiltonians
$$
	H_{\bA,Z}:=\big(\sigma\cdot(-i\nabla+\bA)\big)^2-\frac{Z}{|\bx|}+\eps\int |\bB|^2
$$ 
on $L^2(\R^3)^2$. They showed that $\inf_{\bA\in \dot{H}^1}\inf \spec(H_{\bA,Z})=-\infty$ 
if the value of the nuclear charge $Z$ is above a critical $Z_c$. The value of $Z_c$ is given by
$$
	Z_c:=\inf \Big\{\big(\cip{\psi}{\tfrac{1}{|\,\cdot\,|}\psi}_{L^2}\big)^{-1}\eps\int |\bB|^2\Big\},
$$ 
where the infimum runs over solutions to the zero-mode problem
\begin{equation}\label{eq:zero_mode}
	\left\{
	\begin{array}{l}
		(\bA,\psi)\in \dot{H}^1(\R^3,\R^3)\times H^1(\R^3,\C^2),\\
		\sigma\cdot(-i\nabla+\bA)\psi=0,\\
		\int|\psi|^2=1.\\
	\end{array}
	\right.
\end{equation}
Explicit smooth solutions to \eqref{eq:zero_mode} were then 
discovered in \cite{LossYau86}, giving a finite value for $Z_c$.
The problem of stability of matter however
remained open and was only solved much later in 
\cite{Fefferman95,LiebLossSol95,Fefferman96}.

Finding a solution to \eqref{eq:zero_mode} on $\R^3$, or more generally on a 3-manifold, 
is a rather difficult problem. 
As mentioned before, the first solutions were exhibited in \cite{LossYau86}.
These were extended in 
\cite{Adametal99, Adametal00_1, Adametal00_2},
and further examples were exhibited in \cite{Elton00}.
The geometry behind the construction in \cite{LossYau86}
was exploited in \cite{ErdSol01} and in \cite{Adametal00_3}; through Hopf maps one can 
pull back smooth magnetic fields from $\S^2$ to $\S^3$. Then it was shown that the kernel of the Dirac operator 
can have an arbitrary dimension. In \cite{ErdSol01} the authors obtained a full characterization 
of the spectrum of Dirac operators on $\S^3$ constructed from the usual Hopf map.
Combining this result with the invariance of the kernel of the Dirac operator
under the stereographic projection (or in fact general conformal maps) 
gives an explicit description of the kernel of the Dirac operator on
$\R^3$. Some of the zero modes obtained though Hopf maps were related
to projections of non-Abelian fields \cite{JackPi00} and to projections 
to $\S^3$ of $4$D-Dirac operators \cite{DunneMin,Min}, with a straightforward extension to higher odd dimensions.
The degeneracy of these zero modes were related to Hopf indices and magnetic helicity:
we refer the interested reader to \cite{Adametal00_3,JackPi00,DunneMin,Min} and the references therein. 
Using other fibrations of $\S^3$, new zero modes were obtained in \cite{Adametal05}. 
As we can see in explicit formulas \cite[Section 3]{LossYau86} and \cite{JackPi00,AleyTolka},
zero modes are intimately related to such fibrations: a real and divergence-free vector field $U$
with unit $L^1(\R^3)^3$-norm gives rise, at least on a formal level, to a solution $(\psi,\bA)$ to the zero mode equation,
$U$ coinciding with $(\cip{\psi}{\sigma_j \psi}_{\C^2})_{1\le j\le 3}$. Note however that the corresponding magnetic field
has a complicated formula, and it is hard to read from it the key features which produce the zero mode.

In \cite{MR1807254}, the authors adopt another strategy and describe the complement of the set
of magnetic potentials giving rise to zero modes. In particular, they show that the set of $\bB\in L^{3/2}(\R^3)^2$
such that the kernel of the Dirac operator is trivial contains an open dense subset of $L^{3/2}(\R^3)^3$ -- 
this indicates that zero modes are ``sparse". 
In \cite{Elton02}, the set of magnetic potentials (in the appropriate space)
whose Dirac operator has a kernel of dimension $m$ is studied, and for $m\in\{1,2\}$ it is shown to be a 
manifold of co-dimension $m^2$.
In \cite{BengVdB15}, the authors give a criterion for the existence of zero modes for Pauli
operators with fastly decaying fields, and in \cite{MR1874252} it is shown that the dimension of the kernel of the Dirac operator
is bounded by the $L^{3/2}(\R^3)^3$-norm of the magnetic field.

In two dimensions, the situation is somewhat different due to the theorem of Aharonov and Casher 
\cite{AhaCash79}, which gives an explicit description of the kernel of the 
Dirac operator in $\R^2$ in the smooth case.
Then this result has been extended to the sphere $\S^2$ 
(see \cite{Schroe_op_Cycon_et_al}*{Section~6.4} and references therein).
The main tool in dimension two is complex analysis; the kernel is 
described in terms of holomorphic or anti-holomorphic functions
and its dimension is related to the integral part of the total flux. 

After the discovery of the Aharonov-Bohm effect \cite{AhaBohm59},
point-like magnetic fields (or more generally magnetic fields which are Borel measures)
have become widely studied from an operator point of view.
In this singular case, there are many extensions for both the Dirac operator and the Pauli operator, 
and the question arises to decide which one
is the physical correct one. Among other features, the system exhibits a 
$2\pi$-periodicity for the flux carried by
a Dirac point-magnetic field (also called an Aharonov-Bohm (A-B) solenoid).
For the Pauli operator in $\R^2$, two choices are proposed in \cite{GeylGrish02, ErdVug02},
exploiting the idea from \cite{AhaCash79} of factorizing the terms in the expression of the energy. 
These two are compared in \cite{Persson_on_Pauli}. 
The extensions of the Pauli operator for two A-B solenoids in $\R^2$ are explicitly 
computed in \cite{GeylStov04}.
In \cite{Persson_dirac_2d}, the author studies the extension of the Dirac operator
in the case of a smooth magnetic field 
together with a finite sum of A-B solenoids, and for each extension investigates
the dimension of the kernel.

From a physical point of view, it is reasonable to think of an A-B solenoid as the limit of 
spread-out magnetic fields, whose supports
collapse to a point.
This leads us to study the operators which are limits of the corresponding 
family of elliptic operators. For a single A-B solenoid,
the Pauli operator is studied in \cite{BorgPule03}.
The Dirac operator is studied in \cite{Tamura}, and it turns out that only one extension 
arises as a limit in the norm-resolvent sense of Dirac operators
with (smooth) spread out magnetic field. This last result will be crucial for the present paper.

We end this overview by recalling some results on the zero mode space in even dimension. 
In \cite{RozenShiro06} the authors show that in dimension two, under some assumptions 
(including the flux being infinite), the kernel of the Pauli operator is infinite dimensional.
In higher even dimensions, some results have been obtained, we refer the reader 
to \cite{shigekawa91,Persson_even_dim}.

\subsection{Main results}
In this paper we study the Dirac operator with a magnetic field supported on 
links introduced in \cite{dirac_s3_paper1}.
More precisely, given a smooth, oriented link $\gamma=\cup_{k=1}^K \gamma_k \subset \S^3$ 
and a set of fluxes $(2\pi\alpha_k)_{k=1}^{K}\in (0,1)^K$,
the one-current $\bB:=\sum_{k=1}^K2\pi\alpha_k[\gamma_k]$ 
can be seen as a magnetic field on $\S^3$. 
We choose as magnetic potential the two-current
$$
	\bA:=\sum_{k=1}^K 2\pi\alpha_k [S_k],
$$
where $S_k$ is a Seifert surface for $\gamma_k$ 
such that $S_k$ and $\gamma_{k'}$ are transverse in $\S^3$ for $k\neq k'$
(a Seifert surface for a link is a compact
oriented surface whose boundary is the given link).
The transversality property is generic and as shown in \cite{dirac_s3_paper1}
can always be assumed. We recall that a $n$-current is a continuous
linear form on the set of smooth $n$-forms with compact support \cite{Rham55}*{Chapter~3}. Taking such a one-form $\omega^1$
and two-form $\omega^2$, $[\gamma_k]$ and $[S_k]$, seen as singular vector and $2$-vector fields respectively,
act as follows:
\[
([\gamma];\omega^1):=\int_{\gamma_k}\omega^1\quad\&\quad([S];\omega^2):=\int_{S_k}\omega^2.
\]
Here $\bA$ is a magnetic potential
of $\bB$ in the sense that $\partial \bA=\bB$, which is the corresponding condition
to $\d \alpha=\beta$ for a couple $(\alpha,\beta)$ of smooth one- and two-forms. This follows from Stokes' formula: 
\[(\partial [S_k];\omega^1):=([S_k];\d \omega^1)=([\gamma_k];\omega^1).\]

Now we quickly sketch the construction of the main
operator, we refer the reader to \cite{dirac_s3_paper1} for the details.
Recall that the spin structures on $\S^3$ are trivial
and that the Dirac operator acts on 
$L^2$-sections of the trivial bundle $\Psi_0:=\S^3\times\C^2$. 
The spin structure endows it with an isometry $\bsigma$, the Clifford map, from the cotangent 
bundle $\rT^*\S^3$ to the bundle of the traceless Hermitian elements
of $\mathrm{End}(\Psi_0)$. The effect of the formal term $\bsigma(\bA)$ 
in the Dirac operator is only seen in the domain of the operator and imposes a phase jump 
$e^{-2i\pi\alpha_k}$ across $S_k$, which leads to the 
$2\pi$-periodicity mentioned before.
We start with the minimal operator $\cD_{\bA}^{(\min)}$,
which acts like the free Dirac operator on 
$\Omega_{\uS}:=\S^3\setminus(\cup_k S_k)$
and has domain
\begin{multline*}
	H_{\bA}^1(\S^3)^2:=\big\{ \psi\in H^1(\Omega_{\uS})^2,\\
	\ \psi_{|_{(S_k)_+}}=e^{-2i\pi\alpha_k}\psi_{|_{(S_k)_-}}\in 
	H_{\loc}^{1/2}(S_k\setminus (\cup_{k'\neq k}S_{k'}\cup \gamma_k))^2\big\}.
\end{multline*}
Its self-adjoint extensions are investigated in \cite{dirac_s3_paper1};
the domain of each is characterized by the behavior of the 
wave functions of its domain in the vicinity of the knots $\gamma_k$.
We are interested in one particular extension, denoted by $\cD_{\bA}^{(-)}$, 
characterized by the property that the singular part of the wave function
``aligns against the magnetic field" in the following sense.
Given the Seifert frame $(\bT,\bS,\bN)$ of a link 
(that is the Darboux frame defined on the Seifert surface \cite{Spivakvol4}*{Part 7.E}), 
there exist two smooth sections $\xi_{\pm}$, defined (up to a common phase) 
in the neighborhood of each knot $\gamma_k$, 
characterized by
$$
	\left\{
	\begin{array}{ccl}
		\bsigma(\bT^{\flat})\xi_{\pm} &=& \pm\xi_{\pm},\\
		\omega(\bS)+i\omega(\bN)&=&\cip{\xi_-}{\bsigma(\omega)\xi_+},
		\ \forall\,\omega\in\Omega^1(B_\eps[\gamma_k]).
	\end{array}
	\right.
$$
We recall that $\flat$ and $\sharp$ denote the musical isomorphisms transforming 
vectors into 1-forms and 1-forms into vectors respectively. Here, the Seifert frame has been extended
by parallel transportation along the geodesics defined by the distance $\rho_{\gamma_k}$ to $\gamma_k$
in $B_\eps[\gamma_k]$. The real $\eps>0$ is chosen small enough such that we have: 
$B_\eps[\gamma]=\cup_{1\le k\le K} B_\eps[\gamma_k].$

The operator $\cD_{\bA}^{(-)}$ acts like the free Dirac operator on
$\Omega_{\uS}$ and has domain
\begin{multline}\label{def_domain_D}
	\dom\big(\cD_{\bA}^{(-)}\big) 
	:=\big\{\psi \in \dom\big( (\cD_{\bA}^{(\min)})^{*}\big):\\
	\quad \langle \xi_{+},\chi_{\delta,\gamma_k}\psi \rangle \xi_{+} 
	\in \dom\big( \cD_{\bA}^{(\min)}\big), 1 \leq k \leq K\big\},
\end{multline}
where $\chi_{\delta,\gamma_k}\in\sD(\S^3,[0,1])$ is a smooth function with support in $B_\delta[\gamma_k]$
which equals $1$ on $B_{\delta/2}[\gamma_k]$, where $\delta>0$ is a small parameter. We emphasize that the definition does not depend on the choice of
$\chi_{\delta,\gamma_k}$.

The first important result of this paper is Theorem~\ref{thm:conv_dirac}, 
where we prove that the operator $\cD_{\bA}^{(-)}$ is the limit in the norm-resolvent sense 
of a one-parameter family of Dirac operators with smooth magnetic fields.
This establishes the appropriateness of our definition of the singular Dirac operators $\cD_{\bA}^{(-)}$.
This smooth approximation is the adaptation to the three-dimensional situation 
of the smearing of Dirac points considered in \cite{BorgPule03,Tamura}.
By introducing appropriate coordinates in the tubular neighborhood of each
knot, we obtain a trivial Seifert fibration (see Figure~\ref{Seifert_fibration}).
These fibers are chosen to be the field lines of the smooth approximations,
and the smooth gauge is defined through their Seifert surfaces.

Then we use results on the spectral flow from \cite{dirac_s3_paper2} to compute the 
spectral flow of continuous paths with values in the
set of the smooth approximations. The continuity refers to the gap topology 
or topologies introduced in \cite{Wahl08} 
strong enough to ensure the homotopy invariance of the spectral flow. 
Consider the singular gauges $\bA(\ua)$ where we emphasize the dependence
in the flux $\ua\in (0,1)^K$ and let us write $\big(\bA_{\delta}(\ua)\big)_{\delta>0}$ 
the corresponding one-parameter family of smooth magnetic potential converging to $\bA$ (say as currents).
Theorem~\ref{thm:main_zero_existence} states that for a path of fluxes $\ua(t):[0,1]\to [0,1)^K$,
and provided that the endpoints $\cD_{\bA(\ua(0))}^{(-)},\cD_{\bA(\ua(1))}^{(-)}$ are invertible, 
the following equality of spectral flows holds for $\delta>0$ small enough:
$$
	\Sf\big[(\cD_{\bA_{\delta}(\ua(t))})_{0\le t\le 1} \big]
	=\Sf\big[(\cD_{\bA(\ua(t))}^{(-)})_{0\le t\le 1} \big].
$$
Then in Section~\ref{sec:new_results} we give examples when the above spectral flow is non-zero. 
This implies that for $\delta>0$ small enough there exists $t_1(\delta)\in (0,1)$ 
such that $\cD_{\bA_{\delta}(t_1)}$ has a non-trivial kernel. By conformal invariance of the kernel and elliptic regularity
(up to a stereographic projection)
this gives new examples of zero modes in $\R^3$
for certain smooth and compactly supported magnetic fields (see Corollary~\ref{coro:zero_mode}).

We conclude our results on zero modes with Section~\ref{sec:crit_inexistence}, 
where we use Clifford analysis to obtain some criteria for the non-existence of zero modes
for the singular Dirac operator $\cD_{\bA}^{(-)}$, in the case of small fluxes: $0\le \alpha_k<2^{-1}$. 
Theorem~\ref{thm:non_existence} gives a condition on the fluxes $2\pi\alpha_k$ and on the (assumed disjoint) Seifert surfaces 
in terms of the Cauchy operator of a bigger (oriented) surface $\Sig$ containing all of them. 
Some more quantitative statements are given in Corollaries~\ref{coro:inex1}~$\&$~\ref{coro:inex2}.

More precisely: the Cauchy operator associated with $\Sig$ is
the bounded operator $\sC\in\mathcal{B}(L^2(\Sig)^2)$, involved in the Plemelj-Sokhotski jump formula for harmonic spinors
on each component of $\R^3\setminus \Sig$. It is self-adjoint if and only if $\Sig$ is a plane or a sphere. 
The condition of Theorem~\ref{thm:non_existence} is
that $\norm{\sC-\sC^*}_{\mathcal{B}(L^2(\Sig)^2)}$ is bounded from above by a certain function of the fluxes, 
which depends on the relative orientation of the $S_k$'s in $\Sig$.
Then Corollary~\ref{coro:inex1} deals with the case where $\Sig$ is a plane or a sphere. Corollary~\ref{coro:inex2} 
studies the case of a magnetic knot and gives a bound on the flux below which the kernel of the Dirac operator is trivial.
In particular the kernel of the Dirac operator is trivial with a magnetic knot embedded in $\Sig$
\begin{itemize}
	\item if $\Sig$ a sphere or a plane and the flux $2\pi\alpha$ is in $(0,\pi)$,
	\item or in general if the flux $2\pi\alpha$ is in $(0,\pi)$ and satisfies:
	\[
	\norm{\sC-\sC^*}_{\mathcal{B}(L^2(\Sig)^2)}<\mathrm{cot}(\pi\alpha).
	\]
\end{itemize}

\subsubsection*{\bf{Convention:}}
For the remainder of this paper, a link $\gamma \subset \S^3$ 
will always mean a \emph{smooth}, \emph{oriented} submanifold of $\S^3$ 
which is diffeomorphic to the \emph{disjoint} union of finitely many copies of $\S^1$. 
We will often write $\gamma = \bigcup_k \gamma_k$, where the $\gamma_k$ are 
\emph{oriented} knots (the connected components of $\gamma$).

The stereographic projections (say with respect to two antipodal points)
define an atlas of $\S^3$. The canonical volume forms of the two charts $\R^3$
give the orientation on $\S^3$ (by pullback through the stereographic projection).

Furthermore, when the symbol $\nabla$ is used, we mean the Levi-Civita connection if it acts on a vector field,
and we mean the trivial (induced) connection on the (trivial) $Spin^c$ bundle on $\S^3$ if it acts on a spinor. 
Superscripts $\R^2,\R^3$ are added to denote the flat connections in the 
corresponding Euclidean spaces.

\section{Approximation by smooth magnetic fields}

\subsection{Approximate Cartesian coordinates}
Let $\gamma\subset \S^3$ be a knot with Seifert surface $S$. We assume that $S\subset \Sig$,
where $\Sig\subset \S^3$ is a closed, oriented surface. On $\gamma$, the triplet $(\bT(s),\bS(s),\bN(s))$ denotes
the (positively oriented) Darboux frame of $\gamma$ relative to $S$, where $\gamma$ is identified with its arclength-parametrization 
$s\in\R/\ell\Z \mapsto \gamma(s)\in\S^3$. This frame is henceforth referred to as the Seifert frame.
Our convention is that $\bS(s)$ points toward the interior of $S$.

\subsubsection{Differential structures related to $\Sig$}
In general, when an object is carrying a sub- or superscript $\Sig$, we mean 
that the corresponding object is defined through the differential structure of $\Sig$. 
Namely, $\exp^{\Sig}$ denotes the exponential map relative to $\Sigma$ and
$\nabla^{\Sig}$ stands for the Levi-Civita connection on $\rT\Sigma$. 
Furthermore, $\bN_{\Sig}$ denotes the oriented unit normal to $\Sig$ in $\S^3$,
which is extended to a tubular neighborhood of $\Sig$ through parallel transport along the 
$\S^3$-geodesics $\exp_{\bp}(x\bN_{\Sig}(\bp))$ for $\bp\in\Sig$.
In a $\Sig$-tubular neighborhood of $\gamma$, $\bT_{\Sig}$ and $\bS_{\Sig}$
denote the $\nabla^{\Sig}$-parallel transport of $\bT(s)$ and $\bS(s)$ respectively 
along the $\Sig$-geodesics 
\begin{equation}\label{def:c(s,u)}
	x\mapsto c(s,x):=\exp_{\gamma(s)}^{\Sig}(x \bS(s)).
\end{equation}

\begin{rem}
	The knot $\gamma:=\partial S$ is identified with its arc length parametrization.
	For $u_1\in(-\eps,\eps)$ (with $\eps>0$ small enough), the vectors 
	$\bT_{\Sig}(c(s,u_1))$ and $\bS_{\Sig}(c(s,u_1))$ 
	are the parallel transports of themselves 
	along the $\S^3$-geodesics $\big(\exp_{c(s,u_1)}(u_2\bN_{\Sig}(c(s,u_1)))\big)_{u_2}$. 
	Then $(\bT_{\Sig},\bS_{\Sig},\bN_{\Sig})$ 
	smoothly extends the Seifert frame $(\bT(s),\bS(s),\bN(s))$ to
	a tubular neighborhood of $\gamma$. This extension is different from the one considered
	in \cite{dirac_s3_paper1,dirac_s3_paper2} (explained in the introduction), but they agree up to a 
	$\underset{\rho_\gamma\to 0}{\mathcal{O}}(\rho_\gamma)$ where 
	$\rho_\gamma$ is the geodesic distance to $\gamma$. The latter frame was extented along $\S^3$-geodesics. 

	Throughout the paper $(\eta_+,\eta_-)$ denotes two smooth sections on $B_\eps[\gamma]\times\C^2$ 
	of unit length associated with the frame $(\bT_{\Sig},\bS_{\Sig},\bN_{\Sig})$ 
	and defined up to an overall phase by
	\begin{equation}\label{eq:def_sections}
		\left\{
		\begin{array}{l}
			\eta_{\pm}\in\ker(\sigma(\bT_{\Sig}^{\flat})\mp 1),\\
			\omega(\bS_{\Sig}^{\flat})+i\omega(\bN_{\Sig}^{\flat})=\cip{\eta_-}{\sigma(\omega)\eta_+}, 
			\quad \forall\,\omega\in\Omega^{1}(B_\eps[\gamma]).
		\end{array}
		\right.
	\end{equation}
	We emphasize that these sections are different from the sections $(\xi_+,\xi_-)$ 
	considered in \cite{dirac_s3_paper1,dirac_s3_paper2}, but
	on $\gamma$, the sections $ \xi_{\pm}(\gamma(s))$ and $ \eta_{\pm}(\gamma(s))$
	coincide (up to a common phase $e^{i\phi(s)}$). 

	The estimate on the discrepancy between
	$(\bT_{\Sig},\bS_{\Sig},\bN_{\Sig})$ and the frame $(\bT,\bS,\bN)$ defining $\xi_{\pm}$
	implies that the operator $\wt{\cD}_{\bA}^{(-)}$, defined like $\cD_{\bA}^{(-)}$ with $(\eta_+,\eta_-)$
	replacing $(\xi_+,\xi_-)$ in the definition \eqref{def_domain_D}, coincides with $\cD_{\bA}^{(-)}$.
\end{rem}

\subsubsection{Coordinates}\label{sec:def_coord}
In a tubular neighborhood $B_\eps[\Sig]$ of the surface $\Sig$, there are two intrinsic functions:
the projection $\pi_{\Sig}(\bp)$ onto $\Sig$ and the distance from $\Sig$. Through the exponential map
we define a function $u_2:B_\eps[\Sig]\to (-\eps,\eps)$ by
\[
	\bp=\exp_{\pi_{\Sig}(\bp)}\big(u_2(\bp)\bN_{\Sig}(\pi_{\Sig}(\bp)) \big), \quad \forall\bp\in B_\eps[\Sig].
\]
Of course $|u_2(\bp)|$ is nothing else than the distance from $\Sig$, but we can specify from which side of the surface the point $\bp$ lies. 
Thus, in any open set $U\subset \Sig$, giving coordinates in $U$ 
(for instance the projection in $\C^2$ onto $\rT_{\bp}\Sig$ for some $\bp\in\Sig$),
we obtain a set of coordinates on the foliation $(\exp(u_2 \bN_{\Sig})\cdot U\big)_{u_2}$.
In a tubular neighborhood of the knot $B_\eps[\gamma]$ we choose the coordinates $(s,u_1)$
defined through the chart of $\Sig$:
\[
(s,u_1)\mapsto \exp_{\gamma(s)}^{\Sig}\big(u_1\bS(s)\big).
\]
The coordinate $s$ corresponds to the parameter of the $\Sig$-projection onto $\gamma$ and $|u_1|$ is the
$\Sig$-distance to $\gamma$.

For $\eps$ small enough, we obtain a chart $F_\gamma$ defined by
\begin{equation}\label{eq:cart_coord}
	\begin{array}{ccl}
		(\R/\ell\Z)\times \mathbb{D}_\eps &\overset{F_{\gamma}}{\longrightarrow}& \S^3=\{\mathbf{z}\in\C^2,\,|\mathbf{z}|=1\},\\
		(s,\bu)&\mapsto &
		\cos(u_2)\exp_{\gamma(s)}^{\Sig}\big(u_1\bS(s) \big)\\
		&&+\sin(u_2)\bN_{\Sig}\big( \exp_{\gamma(s)}^{\Sig}\big(u_1\bS(s) \big)\big),
	\end{array}
\end{equation}
where $\mathbb{D}_\eps$ denotes the disk in $\R^2$ of center $0$ and radius $\eps$.
The calculation is for vectors in $\C^2$ and we see $u_2$ as an angle for vectors in $\C^2$.
For $w_2$ with $|w_2|$ small enough, we write $\Sig_{w_2}$ for the ``suspended" 
surface $\exp\big( w_2\bN_{\Sig}\big)\cdot \Sig$, that is the surface $\{ \bp,\ u_2(\bp)=w_2\}$. 
Remark that the family $(\Sig_{w_2})_{w_2\in(-\eps,\eps)}$ defines a foliation of $B_{\eps}[\Sig]$.
The expression of the free Dirac operator in terms of the $(s,\bu)$-coordinates is 
given in Proposition~\ref{prop:expr_free_dirac} below.

\subsection{Smoothened magnetic fields \& potentials}
Let $\alpha$ be in $(0,1)$ and $\gamma\subset \S^3$ be a knot with 
Seifert surface $S$. We assume that $S$
is embedded in a closed oriented surface $\Sig\subset \S^3$;
this is possible since we can assume that the Seifert surface $S$ is bicollared \cite{Rolfsen}*{Chapter 5}.

Now we define smooth approximations for a magnetic knot 
$\bB=2\pi\alpha [\gamma]$ with
singular magnetic potential $\bA=2\pi\alpha [S]$.
To do so, we introduce knots which are in some sense parallel to $\gamma$, 
together with associated Seifert surfaces close to $S$. The knots will be the 
\emph{field lines} for the approximating field,
and will be weighted by a smooth density to obtain a 
smooth magnetic field. 
The convergence to the singular magnetic knot is
obtained by concentrating the density around the knot $\gamma$.
From a geometric point of view, these approximations corresponds 
to setting a probability density on a (trivial)
\emph{Seifert fibration} of a tubular neighborhood of $\gamma$.

We recall some notations introduced \cite{dirac_s3_paper1,dirac_s3_paper2}; 
the set of knots and set of Seifert surfaces of knots
are denoted by $\sK$ and $\sS_{\sK}$ respectively. Furthermore, 
for $K\in\N$, $\sS_{\sK}^{(K)}$ denotes the set
\[
	\sS_{\sK}^{(K)}
	:=\big\{\uS\in \prod_{k=1}^K \sS_{\sK},\ S_k\mathrm{\ and\ }\partial S_{k'}\mathrm{\ are\ transverse\ in\ }\S^3,\,k\neq k'\big\}.
\]

Let $0<\delta<\eps/2$. For and $u_1\in (-\delta,\delta)$,  we define $S_{(u_1,0)}\subset \Sig$ as
\begin{equation*}
	\left\{
		\begin{array}{lc}
			S_{(u_1,0)}:=S\setminus\Big\{\exp^{\Sig}_{\gamma(s)}\big(r\bS(s)\big),\ s\in\T_{\ell},\ 0\le r<u_1\Big\},
			&\mathrm{if}\ u_1> 0,\\
			S_{(u_1,0)}:=S\cup\Big\{\exp^{\Sig}_{\gamma(s)}\big(-r\bS(s)\big),\ s\in\T_{\ell},\ 0\le r<|u_1|\Big\},
			&\mathrm{if}\ u_1\le 0.
		\end{array}
	\right.
\end{equation*}
Furthermore, for $u_2\in(-\delta,\delta)$ we define $S_{(u_1,u_2)}\subset \Sig_{u_2}$ 
as the ``suspension" of $S_{(u_1,0)}$ at height $u_2$:
\begin{equation}
	S_{(u_1,u_2)}:=\Big\{ \exp_{\bp}(u_2\bN_{\Sig}(\bp)),\ \bp\in S_{(u_1,0)}\Big\}.
\end{equation}
The surfaces $S_{\bu}$ are oriented like $S$ and we define
\begin{equation}
	\gamma_{\bu}:=\partial S_{\bu}.
\end{equation}

\begin{figure}[!ht]
	\begin{tikzpicture}[scale=3]
		\filldraw[fill=gray!15, draw=black, thick]
		 plot[domain=-0.8:0.8] ({\x},{0.3*sqrt(1-(\x)^2)})--
		 plot[domain=9/50:-9/50] ({10/3*sqrt(9/100-(\x)^2)},{\x})--
		 plot[domain=0.8:-0.8] ({\x},{-0.3*sqrt(1-(\x)^2)})--
		 plot[domain=-9/50:9/50] ({-10/3*sqrt(9/100-(\x)^2)},{\x})--cycle;
		
		\filldraw[fill=gray!30, draw=black, thin]
		 plot[domain=-0.9:0.9] ({0.93*\x},{0.8*0.3*sqrt(1-(\x)^2)})--
		 plot[domain=24*sqrt(19)/1000:-24*sqrt(19)/1000] ({0.93*sqrt(1-625/36*(\x)^2)},{\x})--
		 plot[domain=0.9:-0.9] ({0.93*\x},{-0.8*0.3*sqrt(1-(\x)^2)})--
		 plot[domain=-24*sqrt(19)/1000:24*sqrt(19)/1000] ({-0.93*sqrt(1-625/36*(\x)^2)},{\x})--cycle;

		\filldraw[fill=gray!30, draw=black, thin]
		 plot[domain=-0.9:0.9] ({0.93*\x},{0.7+0.8*0.3*sqrt(1-(\x)^2)})--
		 plot[domain=24*sqrt(19)/1000:-24*sqrt(19)/1000] ({0.93*sqrt(1-625/36*(\x)^2)},{0.7+\x})--
		 plot[domain=0.9:-0.9] ({0.93*\x},{0.7-0.8*0.3*sqrt(1-(\x)^2)})--
		 plot[domain=-24*sqrt(19)/1000:24*sqrt(19)/1000] ({-0.93*sqrt(1-625/36*(\x)^2)},{0.7+\x})--cycle;
		
		\draw (-2.3,-0.5)--(-0.8,1);
		\draw (-2.3,-0.5)--(1.2,-0.5);
		
		\draw[>=latex,very thin,->] (-0.93,0) -- (-0.93,0.7);
		\draw[>=latex,very thin,->] (0.93,0) -- (0.93,0.7);
	
		\node at (-1.2,0) {$S$};
		\draw[>=latex,->] (-1.15,0)--(-1,0); 
		\node at (0,0) {$S_{(u_1,0)}$};
		\node at (0,0.7) {$S_{(u_1,u_2)}$};
	\end{tikzpicture}
	\caption{The suspended surfaces $S_{\bu}$.}
\end{figure}
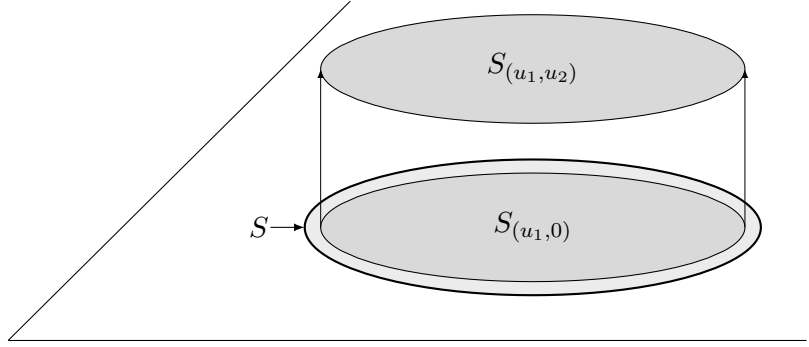

Next, given any smooth probability density $f\in \sD((-\delta,\delta)^2;[0,+\infty))$,
$I_f$ denotes its partial integral
$$
	I_f(u_1,u_2):=\dint_{-\infty}^{u_1} \dint_{-\infty}^{u_2}f(u_1',u_2')\d u_1'\d u_2',
$$
and $m_f$ denotes its marginal
$$
	m_f(u_2):=\dint_\R f(u_1,u_2)du_1.
$$
Then we approximate $\bB$ and $\bA$ by
\begin{equation}\label{eq:def_smooth_knot}
	\left\{
		\begin{array}{l}
			\bB_{f}:=2\pi\alpha\int_{\bu'}f(\bu')[\gamma_{\bu'}]\d\bu',\\\\
			\bA_{f}:=2\pi\alpha\int_{\bu'}f(\bu')[S_{\bu'}]\d\bu'.
		\end{array}
	\right.
\end{equation}
Note that $\supp\,\bB_{f}\subset B_\eps[\gamma]$ and 
that $\supp \bA_{f}\subset B_\delta [S_{(-\delta,0)}]$.
\begin{figure}[!ht]
	\resizebox{0.8\textwidth}{!}{
	\begin{tikzpicture}
		
		\draw[thick] plot[domain=-0.8:0.8] ({0.5*sqrt(1-(\x)^2)},{\x});
		\draw[thick] plot[domain=-0.8:0.8] ({-0.5*sqrt(1-(\x)^2)},{\x});
		\draw[thick] plot[domain=-0.4:0.4] ({\x},{sqrt(1-4*(\x)^2)});
		\draw[thick] plot[domain=-0.4:0.4] ({\x},{-sqrt(1-4*(\x)^2)});
		
		\draw plot[domain=0:1.5] ({\x},0.2*\x*\x);
		\draw[thick,dotted] plot[domain=-1:0] ({\x},{0.2*\x*\x});
		
		\foreach \y in {-3,-2,-1,1,2,3}
		{
			\draw[very thin] plot[domain=0:1.5] ({\x},{0.27*\y+0.2*\x*\x});
			\draw[thin,dotted] plot[domain=-1:0] ({\x},{0.27*\y+0.2*\x*\x});
		}
		\node[right, scale=0.6] at (1.5,{0.2*2.25}) {$\gamma$};
		
		\draw (3.5,0) circle [radius = 1.12];
		
		\foreach \x in {-2,-1,0,1,2}
		{
			\foreach \y in {-2,-1,0,1,2}
			{
			\draw [fill] ({0.3*\x+3.5},{0.3*\y}) circle [radius = 0.03];
			}
		}
		
		\foreach \x in {-3,3}
		{
			\foreach \y in {-1,0,1}
			{
			\draw [fill] ({0.3*\x+3.5},{0.3*\y}) circle [radius = 0.03];
			}
		}
		
		\foreach \x in {-1,0,1}
		{
			\foreach \y in {-3,3}
			{
			\draw [fill] ({0.3*\x+3.5},{0.3*\y}) circle [radius = 0.03];
			}
		}
	\end{tikzpicture}
	}
	\caption{Fibration of the tubular neighborhood by the $\gamma_{\bu}$'s}
	\label{Seifert_fibration}
\end{figure}
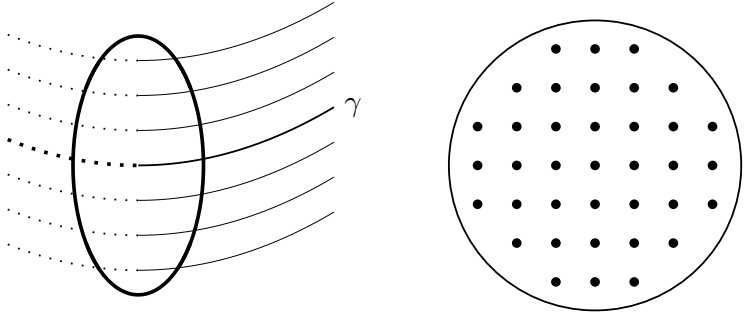

The idea easily carries over to a link. 
Let $K$ be a positive integer and $\uS \in \sS_{\sK}^{(K)}$. 
We can assume that for all $1\le k\le K$, the Seifert surface $S_k$ is embedded in a 
\emph{closed oriented} surface $\Sig_k$. Then the currents 
$$
	\bB=\sum_{k}2\pi\alpha_k[\gamma_k], 
	\quad \bA=\sum_{k}2\pi\alpha_k[S_k]
$$ 
are approximated by
\begin{equation}\label{eq:def_smooth_link}
	\left\{
	\begin{array}{l}
		\bB_{f}:=\sum_{k=1}^K2\pi\alpha_k \int_{\bu'}[(\gamma_k)_{\bu'}]f(\bu')\d\bu',\\\\
		\bA_{f}:=\sum_{k=1}^K2\pi\alpha_k \int_{\bu'}[(S_k)_{\bu'}]f(\bu')\d\bu'.
	\end{array}
	\right.
\end{equation}
By picking $\delta$ small enough, we can assume that we have 
$B_{2\delta}[\gamma]=\cup_k B_{2\delta}[\gamma_k]$.
In that case the support $\supp\,\bB_{f}$ has $K$ connected components, 
each localized around one knot $\gamma_k$.
\begin{proposition}\label{prop:formule_smooth_mag}
	Let $f\in\sD((-\delta,\delta)^2;\R_+)$ be a smooth probability density.
	For $\bA=2\pi\alpha [S]$, $\bA_f$ is a smooth $2$-current. Writing $\alpha_f$ for the associated one-form,
	we have
	\begin{align*}
	\bal_f(\bp)&=2\pi\alpha\, I_f(u_1(\bp),u_2(\bp))(\bN_{\Sig})^{\flat}(\bp),\\
			&=2\pi\alpha\, I_f(u_1(\bp),u_2(\bp))\d u_2(\bp)\in \rT_{\bp}^*\S^3.
	\end{align*}
	Furthermore, $\d \bal_f=\bet_f$ with
	\[
	\bet_f(\bp)=2\pi\alpha\, f(u_1(\bp),\bu_2(\bp))\d u_1(\bp)\wedge \d u_2(\bp),
	\]
	and $\bet_f$ is the $2$-form associated with the one-current $\bB_f$.
	
	Note that the function $I_f(u_1(\bp),u_2(\bp))$ has been canonically extended 
	from $B_{\delta}[\gamma]$ to the bulk $\cup_{u_2\in [-\delta,\delta]} S_{(\delta,u_2)}$ of $B_{\delta}[S]$, since
	\[
		I_f(u_1,u_2)=\dint_{u_1'=-\infty}^{\infty}f(u_1',u_2)\,\d u_1'=m_f(u_2), 
		\quad \forall\,u_1>\delta,\,\forall\,u_2\in\R.
	\]
\end{proposition}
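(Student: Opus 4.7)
The proposition combines three related distributional claims: (a) $\bA_f$ is represented by the smooth 1-form $\bal_f$, (b) $\bB_f$ is represented by the smooth 2-form $\bet_f$, and (c) $\d \bal_f = \bet_f$. My plan is to test (a) and (b) against smooth compactly supported forms in the coordinates $(s,u_1,u_2)$ of Section~\ref{sec:def_coord}, and then verify (c) by direct differentiation of the formula for $\bal_f$. Using a partition of unity, it suffices to treat test forms supported either in $B_\eps[\gamma]$ (where the chart $F_\gamma$ applies) or in the bulk of $B_\delta[S]$ away from $\gamma$, the latter contribution being essentially one-dimensional in $u_2$.

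The key geometric observation is that the suspension $\bp \mapsto \exp_{\bp}(u_2' \bN_{\Sig}(\bp))$ preserves the $u_2$-coordinate, so $S_{\bu'}$ sits inside the level set $\{u_2 = u_2'\}$ and is parametrized near $\gamma$ by $(s, v_1)$ with $v_1 > u_1'$. For a test 2-form $\omega^2 = A\,\d s \wedge \d u_1 + B\,\d s \wedge \d u_2 + C\,\d u_1 \wedge \d u_2$ supported in the chart, only the $A$-component survives pullback to $S_{\bu'}$, giving
\[
  \int_{S_{\bu'}}\omega^2 \;=\; \int_{0}^{\ell}\!\d s \int_{u_1'}^{\eps}\! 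A(s,v_1,u_2')\,\d v_1.
\]
Multiplying by $f(\bu')$, integrating over $\bu'$, and interchanging the $(u_1',v_1)$ integration via Fubini, the $u_1'$-integration produces a cumulative weight on $A(s,v_1,u_2')$. Comparing with $\int \bal_f \wedge \omega^2$ expanded in the same coordinates (only $\d u_2 \wedge \d s \wedge \d u_1$ survives) identifies $\bal_f$ as claimed. In the bulk of $B_\delta[S]\setminus B_\delta[\gamma]$ one has $u_1 > \delta$, so the cumulative weight saturates to the marginal $m_f(u_2)$, which depends smoothly on $\bp$ and yields the asserted smooth extension.

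Claim (c) then follows by direct computation: since $I_f$ is independent of $s$, only the $u_1$-derivative survives, and $\partial_{u_1} I_f = f$ inside the support, giving $\d \bal_f = 2\pi\alpha f\,\d u_1 \wedge \d u_2$. To certify that $\bet_f$ genuinely represents $\bB_f$, I repeat the Fubini argument with a test 1-form $\omega^1 = a\,\d s + b\,\d u_1 + c\,\d u_2$: only the $a$-component survives pullback to $\gamma_{\bu'}$, so $\int_{\gamma_{\bu'}}\omega^1 = \int_0^\ell a(s,u_1',u_2')\,\d s$, and averaging against $f$ reproduces $\int \bet_f \wedge \omega^1$ exactly. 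The main technical obstacle is orientation bookkeeping across the case split $u_1' \lessgtr 0$ in the definition of $S_{(u_1',0)}$ and the correct sign of the pullbacks to the suspended surfaces; the Fubini exchanges themselves are immediate since $f$ and $\omega^{1,2}$ are smooth of compact support, and smoothness of $\bal_f$ and $\bet_f$ follows from that of $I_f$ via the smoothness of the chart $F_\gamma$.
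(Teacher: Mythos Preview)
Your approach is essentially the paper's: pair $\bA_f$ with a test 2-form in the $(s,\bu)$-chart and use Fubini to turn the average over the family $(S_{\bu'})$ into a cumulative $u_1$-weight; the paper streamlines slightly by invoking $\partial\bA_f=\bB_f$ (immediate from the definitions of the currents) to deduce $\d\bal_f=\bet_f$ rather than testing $\bB_f$ against 1-forms separately. One correction: your claim $\partial_{u_1}I_f=f$ is false with the paper's definition of $I_f$ as the \emph{double} cumulative integral --- what both your Fubini computation and the paper's proof actually produce is the \emph{single} integral $\int_{-\infty}^{u_1}f(v_1,u_2)\,\d v_1=\partial_{u_2}I_f$ (the statement of the proposition carries a notational slip here; compare \eqref{eq:formule_phase_Phi} and the closing note, which are consistent with the single integral), so the argument goes through once this is reconciled.
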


Before proving this result, let us study this gauge in more details. We introduce the phase:
\begin{equation}\label{eq:phase_Phi}
	\Phi_{f}(\bu):=2\pi\alpha\dint_{u=-\infty}^{u_1}\dint_{v=-\infty}^{u_2}f(u,v)\d u\d v.
\end{equation}
We immediately see that $\bal_f$ coincides with
\begin{equation}\label{eq:formule_phase_Phi}
\bal_f=(\partial_{u_2}\Phi_f )\d u_2.
\end{equation}
In the region $u_1>\delta$ of the neighborhood of the Seifert surface, the function $\Phi_f$
does no longer depend on $u_1$, and so $\bal_f$ coincides with the gradient of $\Phi_f$. Indeed, in this region
the phase function coincides with the integrated marginal:
 \begin{equation}\label{eq:def_M_f}
 	M_f(u_2):=\dint_{\R}\d u\dint_{v=-\infty}^{u_2}f(u,v)\d u \d v.
 \end{equation}

\begin{proof}[Proof of Proposition~\ref{prop:formule_smooth_mag}]
	By the definition of the currents $\bA_f$ and $\bB_f$, it is clear that 
	$\partial \bA_f=\bB_f$, so it suffices to prove that
	$\bA_f$ is smooth (as a $2$-vector field). By the musical isomorphisms and the hodge dual we 
	can transform it back to a one-form $\bal_f$, and we will get
	$\d \bal_f=\boldsymbol{\beta}_f$ automatically, where $\boldsymbol{\beta}_f$ is the 2-form associated with $\bB_f$. 
	If we accept the formula for $\bal_f$, then the formula for $\boldsymbol{\beta}_f$ follows immediately.

	Let $\omega\in\Omega^2(\S^3)$. We have
	\begin{align*}
		(2\pi\alpha)^{-1}\big(\bA_f,\omega_i \big)
		&:=\dint_{\bu'\in\R^2}\d\bu' f(\bu')\dint_{S_{\bu'}}\omega_i,\\
		&=\dint_{\bu'\in\R^2}\d\bu' f(\bu')\dint_{S_{\bu'}}g_3^*(\omega_i,\vol_{\Sig_{u_2'}})\vol_{\Sig_{u_2'}}.
	\end{align*}
	Here $g_3^*$ denotes the metric on $2$-forms induced by $g_3$. 
	This quantity equals:
	\begin{multline*}
		\dint_{u_2}\d u_2\dint_{\bp\in S_{(-\delta,u_2)}}g_3^*(\omega_i,\vol_{\Sig_{u_2}})\vol_{\Sig_{u_2}}
		\Big(\dint_{-\delta}^{u_1(\bp)}f(v_1,u_2(\bp))\d v_1\Big)\\
		=\dint_{\bp\in B_\delta[S_{(-\delta,0)}]}\Big(\dint_{-\delta}^{u_1(\bp)}f(v_1,u_2(\bp))\d v_1\Big)
		g_3^*(\omega_i,\vol_{\Sig_{u_2}})\vol_{g_3}.
	\end{multline*}
	Remark that the Hodge dual of $\vol_{\Sig_{u_2}}$ is $(\bN_{\Sig})^{\flat}=\d u_2$
	and that for $\bp\in\Sig_{u_2}$, there holds $\vol_{g_3}(\bp)=\vol_{\Sig_{u_2}}(\bp)\wedge \d u_2(\bp)$.
	This last equality follows from Corollary~\ref{coro:volume_form}.
\end{proof}

We remark that both $\bal_f$ and $\bet_f$ depend on $\bp$ only through $\bu(\bp)$, 
which allows us to use the results of the Dirac operator in two dimensions. 
Observe however that we cannot use the results of \cite{ErdSol01},
because the function 
$$
	\bp\mapsto \bu(\bp)
$$ 
is \emph{not} a Riemannian submersion.

\subsection{Approximation results}
With the previous preparations, we are in a position to state the main theorems of this section.
The proofs of the following theorems are rather involved and require some technical preparations;
they are given in Section~\ref{sec:proof_thm_conv_dirac}. 
In the following two theorems, by a smooth, compactly supported mollifier $\big(\phi_\delta\big)_{\delta>0}$ 
we mean a family
\[
\phi_{\delta}(\bu'):=\frac{1}{\delta^2}\phi_1\big(\frac{\bu'}{\delta}\big),\ \bu'\in\R^2,
\]
where $\phi_1\in \sD((-1,1)^2;[0,+\infty))$ is a probability density with support in $[-1,1]^2$.
\begin{theorem}\label{thm:conv_dirac}
	Fix $K\in\N$, $(\uS,\ua)\in\sS_{\sK}^{(K)}\times (0,1)^K$ and let $\bA=\sum_k 2\pi\alpha_k [S_k]$.
	Let $\big(\phi_\delta\big)_{0<\delta\le 1} \subset \sD((-1,1)^2;[0,+\infty))$ be a smooth mollifier.
	There exists $\delta_0>0$ depending on $S$
	such that for all $0<\delta \leq \delta_0$, the one-form $\bal_{\phi_\delta}$ associated with
	$$
		\bA_{\phi_\delta}=\sum_{k=1}^{K}2\pi\alpha_k \dint_{\bu'}\phi_\delta(\bu')[(S_k)_{\bu'}]\d\bu',
	$$
	is well-defined and smooth. Furthermore the associated family of elliptic operators
	$$
		\big(\cD_{\bA_{\phi_\delta}}\big)_{\delta>0}
		:=\big(\bsigma(-i\nabla+\bal_{\phi_\delta})\big)_{\delta>0}	
	$$
	converges to $\cD_{\bA}^{(-)}$ in the norm-resolvent sense as $\delta$ tends to $0^+$. 
\end{theorem}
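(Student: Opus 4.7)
The approach has three components: (i) a gauge reduction that identifies the smoothed operator with the singular one away from the link, (ii) a local two-dimensional analysis inside each tubular neighborhood reducing the problem to the smeared Aharonov--Bohm result of \cite{Tamura}, and (iii) an IMS-style patching. For $\delta_0$ smaller than both the injectivity radius of the coordinates $F_{\gamma_k}$ and half the minimal separation between the knots, Proposition~\ref{prop:formule_smooth_mag} shows that $\bal_{\phi_\delta}$ is smooth and supported in $B_{\delta_0}[S]$; thus $\cD_{\bA_{\phi_\delta}}$ is an ordinary smooth magnetic Dirac operator, essentially self-adjoint on $H^1(\S^3)^2$.

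For the gauge reduction, the key is formula \eqref{eq:formule_phase_Phi}: outside the tube $\{|u_1|\le\delta\}$ around $\gamma$, one has $\bal_{\phi_\delta}=d\Phi_{\phi_\delta}$, and $\Phi_{\phi_\delta}$ extends to a smooth real-valued function on $\S^3\setminus\gamma$ which takes the two constant values $0$ and $2\pi\alpha_k$ on either side of each $S_k$ away from $\gamma_k$. The unitary $U_\delta:=e^{-i\Phi_{\phi_\delta}}$ is therefore well-defined on $\S^3\setminus\gamma$, exhibits the phase jump $e^{-2i\pi\alpha_k}$ across $S_k$, and conjugates $\cD_{\bA_{\phi_\delta}}$ on $\S^3\setminus B_\delta[\gamma]$ to the free Dirac operator acting on spinors that satisfy precisely the jump condition defining $H_{\bA}^1(\S^3)^2$.

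Inside each $B_\eps[\gamma_k]$, the coordinates $(s,u_1,u_2)$ and the frame $(\eta_+,\eta_-)$ from \eqref{eq:def_sections} decompose $\cD_{\bA_{\phi_\delta}}$, up to a perturbation $R_k$ stemming from $s$-derivatives and from the $O(|\bu|)$ deviation of the ambient metric from the flat product, into an $s$-fiberwise family of two-dimensional Dirac operators on $\D_\eps$ carrying the smooth flux $2\pi\alpha_k\phi_\delta(u_1,u_2)\,du_1\wedge du_2$. This is exactly the setting of \cite{Tamura}, which gives norm-resolvent convergence of the fiber operators to the unique self-adjoint extension of the singular two-dimensional Dirac operator whose wave function is $L^2$-regular in the $\eta_-$-direction. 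Because $\eta_\pm$ and $\xi_\pm$ agree on $\gamma_k$ up to a common phase, this extension is precisely the two-dimensional avatar of the condition defining $\cD_{\bA}^{(-)}$ in \eqref{def_domain_D}. The two regions are then assembled via an IMS localization with a cutoff vanishing outside $B_{\sqrt\delta}[\gamma]$: the transition commutators $[\bsigma(\nabla),\chi_\delta]$ scale like $\delta^{-1/2}$ but act on spinors of $L^2$-size $O(\delta^{1/2})$ in the annular region by elliptic regularity of $\cD_{\bA}^{(-)}$ away from $\gamma$, and vanish in the limit. Combined with the second resolvent identity this yields the claimed norm-resolvent convergence.

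The principal obstacle is the local two-dimensional step: although the fiber operator is literally the model of \cite{Tamura}, the perturbation $R_k$ is only relatively bounded, and not small in operator norm, precisely near $\gamma_k$ where the resolvents concentrate. One must therefore verify that $R_k$ respects the decomposition of the spinor into its regular and its singular components, so that it acts as a smooth perturbation of the boundary behavior selected by the two-dimensional limit rather than coupling the components and switching the self-adjoint extension. This requires exploiting that the singular part of the wave function lies in the $\eta_-$-subspace and that the Clifford symbols of $R_k$ annihilate or smooth this subspace to leading order.
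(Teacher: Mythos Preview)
Your proposal has two genuine gaps, and the obstacle you single out at the end is not the one that actually bites.

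\textbf{The IMS patching at scale $\sqrt\delta$ does not close.} Elements of $\dom(\cD_{\bA}^{(-)})$ carry a singular component behaving like $\rho_{\gamma_k}^{-\alpha_k}$ in the $\eta_-$ direction near each $\gamma_k$ (this is the $K_{\alpha_k}$ term in \eqref{eq:decomp_dom}). In the transition annulus $\{\rho_{\gamma_k}\sim\sqrt\delta\}$ such a spinor has $L^2$-mass of order $\delta^{(1-\alpha_k)/2}$, not $O(\delta^{1/2})$, so the commutator term is of order $\delta^{-\alpha_k/2}$ and diverges. Elliptic regularity of $\cD_{\bA}^{(-)}$ away from $\gamma$ gives only local $H^1$-bounds on compact subsets of $\S^3\setminus\gamma$; it does not yield the uniform smallness you need in a shrinking annulus. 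For the smooth operators $\cD_{\bA_{\phi_\delta}}$ the domain is $H^1(\S^3)^2$, but precisely because the limiting domain contains singular elements, the $H^1$-norm of $(\cD_{\bA_{\phi_\delta}}+i)^{-1}f$ near $\gamma$ is not uniform in $\delta$, and the same obstruction reappears when you patch resolvents via the second resolvent identity.

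\textbf{The fiber operator is not in Tamura's gauge.} The one-form $\bal_{\phi_\delta}$ of Proposition~\ref{prop:formule_smooth_mag} is $2\pi\alpha I_{\phi_\delta}(\bu)\,\d u_2$, i.e.\ the gauge $\bA_{\phi_\delta}^{\R^2}$ of \eqref{eq:decomp_A_f}, whereas Tamura's theorem~\cite{Tamura} is stated and proved in the scalar gauge $\bA_{\phi_\delta,h}$. The paper shows (Proposition~\ref{prop:gauge_transf_formula} and~\eqref{eq:discrepancy_gauge}) that the gauge transformation $e^{i\alpha\zeta_{\phi_\delta}}$ linking the two converges to $e^{i\alpha\theta}$ only pointwise, with a strictly positive $L^\infty$-defect. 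Hence Tamura's norm-resolvent result does not transfer directly to the gauge that appears in your tube, and you cannot glue it to the exterior gauge $U_\delta$ without facing this non-uniform transition on the tube boundary.

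The paper avoids both issues by a two-step strategy. First (Section~5.3) it establishes only \emph{strong} resolvent convergence, for which pointwise convergence of the gauges suffices: one approximates each graph element of $\cD_{\bA}^{(-)}$ explicitly, using the model-case convergence of Proposition~\ref{prop:strg_res_pour_DT} inside the tubes. Second (Section~5.4) it upgrades to norm-resolvent convergence by contradiction: if $\Lambda(\cD_{\bA_{\phi_\delta}})\not\to\Lambda(\cD_{\bA}^{(-)})$ for some $\Lambda\in\sD(\R)$, one extracts normalized eigenfunctions $\psi_n\rightharpoonup 0$ with bounded energy; strong resolvent convergence and compactness away from $\gamma$ force their $L^2$-mass to concentrate on the link, and this is ruled out by the two-dimensional compactness result Theorem~\ref{thm:defect} (which does use Tamura, but only after passing to the scalar gauge, where norm convergence holds). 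The perturbation you call $R_k$ is dispatched in the paper by the observation that the error matrices $L_{\err}^{(s)},L_{\err}^{(u_1)}$ of Proposition~\ref{prop:expr_free_dirac} carry an explicit factor $u_2$ (see the sketch following Proposition~\ref{prop_link_with_model_case}); it is not the main difficulty.
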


Adapting the tools used to prove the above theorem,
we will also show continuity for a homotopy of paths.

\begin{theorem}\label{thm:cont_homot}
	
	Fix $K\in\N$ and $\uS\in \sS_{\sK}^{(K)}$, and let
	$(\phi_\delta)_{0<\delta\le 1}$ be a smooth mollifier in $\sD((-1,1)^2;[0,+\infty))$.
	Let $(\ua(t))_{t\in [0,1]}$ be a continuous family of fluxes
	in $[0,1)^{K}$ ($1$ is excluded). For $0<\delta \leq \delta_0(\uS)$, let $\bA(t)$ and
	$\bA_{\phi_{\delta}}(t)$ be defined as
	$$
		\bA(t):=\sum_{k=1}^K 2\pi\alpha_k(t)[S_k], 
		\quad \bA_{\phi_{\delta}}(t)
		:= \sum_{k=1}^{K}2\pi\alpha_k(t) \dint_{\bu'}\phi_\delta(\bu')[(S_k)_{\bu'}]\d\bu'.
	$$
	Writing $\bal_{\phi_\delta}(t)$ for the associated one-form, we introduce the map:
	\[
	  \begin{array}{rcc}
	      [0,1]\times[0,\delta_0]&\longrightarrow& \Sd,\\
	      (t,\delta) &\mapsto& \cD_{\delta}(t):=\bsigma(-i\nabla+\bal_{\phi_\delta}(t)),
	  \end{array}
	\]
	with the convention $\cD_0(t):=\cD_{\bA(t)}$. This map is continuous in the norm-resolvent sense.
\end{theorem}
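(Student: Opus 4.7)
My plan is to establish the joint continuity by checking it separately at interior points $(t_0, \delta_0)$ with $\delta_0 > 0$, where only smooth operators are involved, and at boundary points $(t_0, 0)$, where one must glue a uniform-in-$t$ version of Theorem~\ref{thm:conv_dirac} to the already-known continuity of the singular path $t \mapsto \cD_{\bA(t)}^{(-)}$. Throughout I use the key compactness remark: since $\ua : [0,1] \to [0,1)^K$ is continuous on a compact interval, its image is a compact subset of $[0,1)^K$, so $\alpha_* := \max_{t,k} \alpha_k(t) < 1$, and all estimates will be taken uniform over the compact parameter box $[0,\alpha_*]^K$.

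For $(t_0, \delta_0)$ with $\delta_0 > 0$, every operator $\cD_\delta(t)$ in a neighborhood is essentially self-adjoint on $H^1(\S^3)^2$, and the difference
\[
\cD_\delta(t) - \cD_{\delta_0}(t_0) = \bsigma\bigl(\bal_{\phi_\delta}(t) - \bal_{\phi_{\delta_0}}(t_0)\bigr)
\]
is multiplication by a bounded, smooth, compactly supported one-form. Using the explicit expression $\bal_{\phi_\delta}(t) = \sum_k 2\pi\alpha_k(t)\, I_{\phi_\delta}^{(k)}(\bu)\,\d u_2^{(k)}$ from Proposition~\ref{prop:formule_smooth_mag}, I split the difference into a part linear in $\ua(t) - \ua(t_0)$, which is bounded in $L^\infty$ by $C(\delta_0) \sum_k |\alpha_k(t)-\alpha_k(t_0)|$, and a part controlled by $\|I_{\phi_\delta} - I_{\phi_{\delta_0}}\|_\infty$, which goes to $0$ as $\delta \to \delta_0$ by smoothness of the mollifier family for $\delta > 0$. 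The second resolvent identity then converts the $L^\infty$-smallness into norm-resolvent continuity.

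For $(t_0, 0)$, I would establish a uniform version of Theorem~\ref{thm:conv_dirac}: for all $\ua$ in the compact set $[0,\alpha_*]^K$,
\[
\sup_{\ua \in [0,\alpha_*]^K}\, \bigl\| (\cD_{\bA_{\phi_\delta}(\ua)} - i)^{-1} - (\cD_{\bA(\ua)}^{(-)} - i)^{-1} \bigr\|_{\mathcal{B}(L^2)} \xrightarrow[\delta \to 0]{} 0.
\]
Combined with the norm-resolvent continuity of $t \mapsto \cD_{\bA(t)}^{(-)}$ (which must already be in place in \cite{dirac_s3_paper2} in order to speak of a spectral flow along this singular path), a triangle inequality delivers continuity of $\cD_\delta(t) \to \cD_0(t_0)$ as $(t,\delta) \to (t_0, 0)$.

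The main obstacle is the uniform version of Theorem~\ref{thm:conv_dirac} in the previous paragraph. Since the full proof of that theorem is deferred to Section~\ref{sec:proof_thm_conv_dirac}, I expect to have to revisit its argument line by line and check that every constant can be taken independent of $\ua$ ranging in the compact set $[0,\alpha_*]^K$. Conceptually this is plausible because the construction of $\bA_{\phi_\delta}(\ua)$ is linear in $\ua$, the relevant gauge phase $\Phi_{\phi_\delta}$ from \eqref{eq:phase_Phi} is linear in $\alpha_k$, and the self-adjoint realization $\cD_{\bA(\ua)}^{(-)}$ is constructed via a gauge-cutoff whose parameters depend continuously on $\ua$; but isolating the $\ua$-dependence of error terms near the knots, where the singular boundary behaviour of functions in $\dom(\cD_{\bA}^{(-)})$ is $\alpha_k$-dependent, will require some care. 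Once that uniformity is secured, the gluing in the previous paragraph and the smooth estimates in Step~1 assemble into joint continuity on the closed parameter rectangle $[0,1] \times [0,\delta_0(\uS)]$.
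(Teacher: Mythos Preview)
Your overall strategy---split into interior points $\delta_0>0$ and boundary points $\delta_0=0$, and at the boundary combine a $t$-uniform version of Theorem~\ref{thm:conv_dirac} with the known norm-resolvent continuity of $t\mapsto\cD_{\bA(t)}^{(-)}$---is sound and logically equivalent to what the paper does. The interior argument via the second resolvent identity is fine.

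The gap is in your plan for the uniform version of Theorem~\ref{thm:conv_dirac}. You write that you will ``revisit its argument line by line and check that every constant can be taken independent of $\ua$''. But if you look at Section~\ref{sec:proof_thm_conv_dirac}, the norm-resolvent part of Theorem~\ref{thm:conv_dirac} (Part~2) is \emph{not} a quantitative estimate with trackable constants: it is a compactness/contradiction argument. One assumes there is a sequence $\delta_n\to 0$ and normalized eigenfunctions $\psi_n\in\ker(\cD_{\delta_n}-\lambda_n)$ with $\psi_n\rightharpoonup 0$, shows their mass must concentrate on the link $\gamma$, and then rules this out via the two-dimensional analysis (Theorem~\ref{thm:defect}, resting on Tamura's Theorem~\ref{thm:tamura}). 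There are no constants to make uniform.

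The paper's proof of Theorem~\ref{thm:cont_homot} therefore bypasses the detour through a uniform Theorem~\ref{thm:conv_dirac} and runs the same contradiction argument directly with a two-parameter sequence $(\delta_n,\ua_n)\to(0,\ua)$: failure of joint continuity yields a collapsing sequence of eigenfunctions in the sense of \eqref{eq:collapsing_sequence_quasimode}, and the same concentration/2D-compactness mechanism kills it. This is in fact the cleanest way to obtain exactly the uniformity you want; once you see the structure of Section~\ref{sec:proof_thm_conv_dirac}, you should replace your ``check constants'' step by this direct collapsing-sequence argument with varying fluxes.
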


\begin{rem}[On the extensions $\cD_{\bA}^{(\pm\cdots \pm)}$]
	In \cite{dirac_s3_paper1} we also introduced the self-adjoint extension $\cD_{\bA}^{(\underline{e})}$,
	with $\underline{e}\in\{+,-\}^K$. The wave functions of its domain are 
	aligned with the magnetic field close
	to the knots $\gamma_k$ with $e_k=1$ and are aligned against 
	it close to the $\gamma_k$ with $e_k=-1$.
	It was shown that it coincides with the operator $\cD_{\bA'}^{(-)}$ where $\bA'$ is the singular gauge
	\[
		\bA':=\sum_{k:\ e_k=1}2\pi(1-\alpha_k)[-S_k]+\sum_{k:\ e_k=-1}2\pi\alpha_k[S_k]
	\]
	and $-S_k$ denotes the Seifert surface $S_k$ with opposite orientation. 
	Let us consider the case of a single Seifert surface $S$; 
	if the operator $\cD_{2\pi\alpha [S]}^{(-)}$ is approximated by 
	the family of elliptic operators $(\cD_{\bA_{\phi_{\delta}}})_{\delta>0}$, then 
	\[
		\cD_{2\pi(1-\alpha)[S]}^{(+)}=\cD_{2\pi\alpha[-S]}^{(-)}
	\]
	is approximated by the family $(\cD_{-\bA_{\phi_{\delta}}})_{\delta>0}$.
	So from the point of view of the smooth approximation procedure, the natural family of fluxes is not $[0,1)^K$ 
	but rather the whole set $(-1,1)^K$, for a given orientation of the knots.
\end{rem}

\begin{rem}[Choice of approximations]
 We have chosen here to approximate the singular magnetic fields, by convoluting in a cross-section.
 By doing so, we can spot the field lines for the simple reason that we are prescribing them. 
 This makes the analysis easier. We could have instead chosen to convolute directly on $\R^3$.
 The same kind of results should hold, at the price of more technicalities.
 
 Let us note the formula in that case. Consider a knot $\gamma$ with Seifert surface $S$ 
 and a smooth, compactly supported mollifier $(\phi_{\delta}(x)\!:=\!\delta^{-3}\phi(x\delta^{-1}))_{\delta>0}$.
 The families $(\phi_{\delta}*[\gamma])_{\delta>0}$ and $(\phi_{\delta}*[S])_{\delta>0}$ define 
 smooth approximations of the currents $[\gamma],[S]$ respectively. A simple calculation gives the formulas which replace \eqref{eq:def_smooth_link}:
 \[
    \phi_{\delta}*[\gamma]=\int\phi_{\delta}(y)[\gamma+y]\d y\quad\&\quad \phi_{\delta}*[S]=\int \phi_{\delta}(y)[S+y]\d y,
 \]
 where $\gamma+y$ and $S+y$ denote the shifted knot and surface.
\end{rem}

\subsection{New results on zero modes on $\S^3$ and $\R^3$}\label{sec:new_results}
Theorem~\ref{thm:cont_homot} raises a natural question, namely whether a
non-trivial spectral flow for the singular Dirac operators $\cD_{\bA(t)}^{(-)}$ implies 
a non-trivial spectral flow for their smooth version $\cD_{\bA_{\phi_\delta}(t)}$ (for $\delta>0$
small enough). This is confirmed in Theorem~\ref{thm:main_zero_existence} below, 
whose proof is given at the end of this section.

We first recall topologies introduced
in \cite{Wahl08}, which were used in \cite{dirac_s3_paper2} 
to study the spectral flow.

\begin{rem}[Topology on $\Sd$]\label{rem:wahl_topology}
	Let $\Lambda\in\sD(\R;\R_+)$ be a bump function
	(centered at $0$), in the sense that $\Lambda$ is even and on its support $[-a,a]\subset \R$, 
	$\Lambda'(x)>0$ for $x\in(-a,0)$ and $\Lambda'(x)<0$ for $x\in(0,a)$.  

	The topology $\sT_{\Lambda}\subset 2^{\Sd}$ is defined 
	as the smallest topology such that for any $\psi\in L^2(\S^3)^2$, the following maps are continuous:
	$$
	\left\{
	\begin{array}{rcl}
		D\in \Sd&\mapsto &(D\pm i)^{-1}\psi\in L^2(\S^3)^2,\\
		D\in \Sd&\mapsto &\Lambda(D)\in\cB\big(L^2(\S^3)^2\big).
	\end{array}
	\right.
	$$
	For two bump functions $\Lambda,\wt{\Lambda}$ with $\supp\,\wt{\Lambda}\subset(\supp\,\Lambda)^{\circ}$,
	we can factorize $\wt{\Lambda}=f\circ\Lambda$ with $f$ continuous, hence functional calculus
	gives the inclusion $\sT_{\wt{\Lambda}}\subset \sT_{\Lambda}$.
	
	\medskip
	
	In \cite{dirac_s3_paper2}, we introduced the notion of bump-continuity: a map $c:\cL\to \Sd$ from a topological space $\cL$
	to $\Sd$ is said to be bump-continuous
	at $x\in\cL$, if there exists a bump function $\Lambda$ centered at $0$ and an open neighborhhood $U$ of $x$
	such that $c_{|_{U}}$ is $\sT_{\Lambda}$-continuous. It is said to be bump-continuous if it is bump-continuous at all points $x\in\cL$. 
\end{rem}

Let us also recall the topology of the norm-resolvent convergence (which coincides with the gap topology: see Theorem~\ref{thm:norm_resv_conv}).

\begin{theorem}\label{thm:main_zero_existence}
	Let $K \in \N$ and $\uS \in \sS_{\sK}^{(K)}$. For a continuous path 
	$\ua(t):[0,1]\to [0,1)^K$, let $\bA(t)$ be given by
	$$
		\bA(t) := \sum_{k=1}^K2\pi\alpha_k(t)[S_k].
	$$
	Furthermore, let $\bA_{\phi_\delta}(t)$ be the smooth approximation~\eqref{eq:def_smooth_link} of 
	$\bA(t)$ for a mollifier $(\phi_{\delta})_{\delta >0}$.
	If the path $\big(\cD_{\bA(t)}^{(-)}\big)_{t\in [0,1]}$ satisfies
	\begin{equation}\label{eq:cond_lift}
			\ker\big(\cD_{\bA(0)}^{(-)}\big)=\ker\big(\cD_{\bA(1)}^{(-)}\big)=\{0\},
	\end{equation}
	there exists $\delta_0>0$ such that for any $0<\delta\le \delta_0$,
	$$
		\Sf\big[ (\cD_{\bA(t)}^{(-)})_{t\in [0,1]}\big] = \Sf\big[(\cD_{\bA_{\phi_\delta}(t)})_{t\in [0,1]}\big].	 
	$$
\end{theorem}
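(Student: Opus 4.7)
The idea is to use Theorem~\ref{thm:cont_homot} to realize the smooth approximations as a continuous (in the gap topology) deformation of the singular path, and then to invoke the homotopy invariance of the spectral flow. Concretely, Theorem~\ref{thm:cont_homot} provides a jointly norm-resolvent continuous map
\[
H : [0,1]\times [0,\delta_{\star}] \longrightarrow \Sd, \qquad H(t,\delta) := \cD_{\delta}(t),
\]
with $H(t,0)=\cD_{\bA(t)}^{(-)}$ and $H(t,\delta)=\cD_{\bA_{\phi_\delta}(t)}$ for $\delta>0$. Since the norm-resolvent convergence coincides with the gap topology, and that topology is strong enough to make spectral flow a homotopy invariant (cf.\ Remark~\ref{rem:wahl_topology} and the discussion following it), and since the subset of invertible self-adjoint operators is open in this topology, the hypothesis \eqref{eq:cond_lift} that $H(0,0)$ and $H(1,0)$ are invertible together with the continuity of $H$ gives some $\delta_0\in(0,\delta_{\star}]$ such that $H(0,\delta)$ and $H(1,\delta)$ remain invertible for all $\delta\in[0,\delta_0]$.

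Fix any $\delta_1 \in (0,\delta_0]$. Then the restriction $H|_{[0,1]\times[0,\delta_1]}$ is a gap-continuous homotopy in $\Sd$ between the singular path $t\mapsto H(t,0)$ and the smooth path $t\mapsto H(t,\delta_1)$, whose two ``vertical'' boundary paths $\delta\mapsto H(0,\delta)$ and $\delta\mapsto H(1,\delta)$ consist entirely of invertible operators by the previous paragraph. Equivalently, traversing the rectangle $\partial\bigl([0,1]\times[0,\delta_1]\bigr)$ produces a contractible loop in $\Sd$, so its total spectral flow vanishes; each vertical side contributes $0$ because the resolvent at the origin stays uniformly bounded along such a side, so $0$ never enters the spectrum and no eigenvalue can cross the spectral point $0$. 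Homotopy invariance of $\Sf$ then yields
\[
\Sf\bigl[(\cD_{\bA(t)}^{(-)})_{t\in[0,1]}\bigr] \;=\; \Sf\bigl[(\cD_{\bA_{\phi_{\delta_1}}(t)})_{t\in[0,1]}\bigr],
\]
which is precisely the stated identity for every $\delta_1 \in (0,\delta_0]$.

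\textbf{Where the real work sits.} All the analytic substance is packaged into Theorem~\ref{thm:cont_homot}: once joint norm-resolvent continuity of $H$ is established, the remainder is a purely topological argument for spectral flow on a square. The one mild subtlety is to guarantee that the condition ``$H(0,\delta)$ and $H(1,\delta)$ are invertible'' persists throughout $[0,\delta_0]$, but this requires no eigenvalue tracking -- only the openness of the set of invertibles in the gap topology, applied near $\delta=0$ at the two endpoints. I therefore expect the main (and already completed) obstacle to be the smooth-to-singular norm-resolvent approximation of Theorem~\ref{thm:cont_homot}, not anything in the spectral-flow step above.
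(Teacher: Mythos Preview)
Your proof is correct and follows essentially the same approach as the paper: both invoke Theorem~\ref{thm:cont_homot} for joint continuity of the homotopy $H(t,\delta)$, use openness of the invertibles to find $\delta_0>0$ keeping the endpoints $H(0,\delta)$ and $H(1,\delta)$ invertible, and then appeal to homotopy invariance of the spectral flow for open paths (the paper cites Wahl's result directly, while you spell out the equivalent rectangle/loop argument). The only cosmetic difference is that the paper phrases continuity in terms of the $\sT_{\Lambda}$ topology, whereas you use the gap topology; since norm-resolvent continuity (established in Theorem~\ref{thm:cont_homot}) implies $\sT_{\Lambda}$-continuity for every bump function $\Lambda$, this distinction is immaterial.
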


Then the above theorem can be used to show the existence of new zero modes on $\S^3$, and in particular
also $\R^3$. Recall that given a Dirac operator $\cD_{\bA_{\phi_{\delta}}}$ with a smooth magnetic field on $\S^3$, 
one can define a corresponding operator 
on $\R^3$ with the help of the stereographic projection, 
see \cite{ErdSol01}*{Theorem~8.6}.

\begin{corollary}[Zero modes on $\R^3$]\label{coro:zero_mode}
	Take $\big(\cD_{\bA(t)}^{(-)}\big)_{t\in [0,1]}$ as in Theorem~\ref{thm:main_zero_existence}
	with
	$$
		\Sf\big[ (\cD_{\bA(t)}^{(-)})_{t\in [0,1]}\big] \geq 1,
	$$
	and let $(\cD_{\bA_{\phi_\delta}(t)})_{t\in [0,1]}$ be its smooth approximation.
	Then there exists a $t_0 \in [0,1]$ such that
	the Dirac operator $\cD_{\bA_{\phi_\delta}(t_0)}^{\R^3}$ on $\R^3$
	satisfies
	$$
		\dim \ker \big(\cD_{\bA_{\phi_\delta}(t_0)}^{\R^3}\big) \geq 1.
	$$
\end{corollary}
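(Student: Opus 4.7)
The approach is to combine the equality of spectral flows from Theorem~\ref{thm:main_zero_existence} with the defining property of spectral flow (a non-zero net crossing forces a crossing) and with the conformal covariance of the Dirac operator kernel under stereographic projection.

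First I would apply Theorem~\ref{thm:main_zero_existence}: for $\delta$ small enough,
\[
\Sf\big[(\cD_{\bA_{\phi_\delta}(t)})_{t\in[0,1]}\big] = \Sf\big[(\cD_{\bA(t)}^{(-)})_{t\in[0,1]}\big] \ge 1.
\]
In particular the spectral flow of the smooth path is non-zero. By Theorem~\ref{thm:cont_homot}, the path $t \mapsto \cD_{\bA_{\phi_\delta}(t)}$ is continuous in the norm-resolvent sense, and hence bump-continuous in the sense of Remark~\ref{rem:wahl_topology}. For such a path the spectral flow equals the net number of eigenvalues (with multiplicity) crossing zero from negative to positive, so at least one eigenvalue must actually reach $0$. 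That is, there exists $t_0 \in [0,1]$ with $\ker \cD_{\bA_{\phi_\delta}(t_0)} \ne \{0\}$ on $\S^3$; the invertibility of the endpoints (which follows from the hypothesis \eqref{eq:cond_lift} on the singular path together with the norm-resolvent convergence of Theorem~\ref{thm:conv_dirac}, taking $\delta$ possibly smaller) forces in fact $t_0 \in (0,1)$.

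Next I would transfer this $\S^3$-zero mode to an $\R^3$-zero mode via the stereographic projection, using the conformal covariance of the kernel recalled in \cite{ErdSol01}*{Theorem~8.6}. Since $\supp \bB_{\phi_\delta} \subset B_\eps[\gamma]$ is a proper compact subset of $\S^3$, the projection pole can be chosen outside this support, so that the pulled-back magnetic field on $\R^3$ is itself smooth and compactly supported and the operator $\cD_{\bA_{\phi_\delta}(t_0)}^{\R^3}$ is literally the one associated with the pushed-down smooth field. Multiplying a non-trivial $\psi \in \ker \cD_{\bA_{\phi_\delta}(t_0)}$ by the appropriate conformal weight $\Omega$ then produces a non-trivial element of $\ker \cD_{\bA_{\phi_\delta}(t_0)}^{\R^3}$; $L^2$-integrability on $\R^3$ follows from the decay $\Omega(\bx) = \mathcal{O}(|\bx|^{-2})$, and smoothness is automatic by elliptic regularity away from $\supp\bB_{\phi_\delta}(t_0)$.

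The main (rather modest) obstacle is to justify the step ``non-zero spectral flow $\Rightarrow$ existence of an intermediate zero-eigenvalue time $t_0$''. One needs eigenvalues near $0$ to depend continuously on $t$ and the crossings counted by $\Sf$ to correspond to genuine zero eigenvalues at intermediate times; this is guaranteed by the norm-resolvent (hence gap) continuity from Theorem~\ref{thm:cont_homot} combined with the invertibility of the endpoints. Everything else, namely the verification that the conformal correspondence preserves non-triviality as well as the $L^2$- and smoothness properties of the zero mode, is a direct consequence of the choice of projection pole outside the field's support and of standard elliptic regularity.
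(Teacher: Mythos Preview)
Your proposal is correct and follows essentially the same approach as the paper. In fact the paper does not give a separate proof of this corollary: it treats it as an immediate consequence of Theorem~\ref{thm:main_zero_existence} together with the conformal invariance of the kernel under stereographic projection (\cite{ErdSol01}*{Theorem~8.6}) and elliptic regularity, exactly the ingredients you spell out; your write-up is simply more detailed than the paper's own treatment.
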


\begin{rem}[Invertibility of the endpoints]
	The free Dirac operator $\sigma(-i\nabla)$ on $\S^3$ has no zero modes
	(say, by Lichnerowicz' formula). So by bump-continuity \cite{dirac_s3_paper2}*{Theorems~11-13-14} 
	(the definition of bump-continuity is given in Remark~\ref{rem:wahl_topology}),
	any path in the bulk of the torus of fluxes $(0,1)^K$ which starts and ends very close to the point $\underline{0}$
	in the torus $\Tf^K$ has invertible endpoints $\cD_{\bA(0)}^{(-)}$ and $\cD_{\bA(1)}^{(-)}$,
	and Theorem~\ref{thm:main_zero_existence} is applicable. 
\end{rem}

Now we give a generic example for which the above holds.

\begin{example}
 Let $K \in \N$ and $\uS \in \sS_{\sK}^{(K)}$.  From \cite{dirac_s3_paper2}*{Theorems~11-14}, we know that up to changing the knots $\gamma_k$
 into other $\gamma_{k'}$ within their corresponding isotopy class and without changing the family $(\link(\gamma_{k_1},\gamma_{k_2}))_{1\le k_1<k_2\le K}$
 of linking numbers, we can assume that the map
 \[
  \ua\in\Tf^K\mapsto \cD_{\bA(\ua)},\ \bA(\ua)=\sum_{k=1}^K2\pi\alpha_k[S_k]
 \]
is $\sT_{\Lambda}$-continuous at the point $(0,\cdots,0)$ for some bump function $\Lambda$. 
It reduces to changing the \emph{Writhe} of each knot, see Remark~\ref{rem:writhe} below. We can also assume \cite{dirac_s3_paper2}*{Theorems~13-14-19}
that there exists a loop $t\in(\R/\Z)\mapsto \ua(t)$ such that $\ua(0)=0$, with $\ua(t)$ in the bulk $(0,1)^K$ for $t\neq 0$
such that the spectral flow of the $\sT_{\Lambda}$-continuous family $(\cD_{\bA(\ua(t))})_{t\in (\R/\Z)}$ is non-trivial.

Such an assumption can be made for instance for a loop where all the fluxes $\alpha_k$'s equal $0$ but $\alpha_{k_0}$ tuned from $0$ to $1$ \cite{dirac_s3_paper2}*{Theorem~19~and~Corollary~23}.
By $\sT_{\Lambda}$-continuity, there exists $\eps>0$ such that for $t\in \S^1$ with $\textrm{dist}_{\R/\Z}(t,0)\le \eps$, the operator $\cD_{\bA(\ua(t))}$ is invertible.
In particular the open path 
\[
  (\cD_{\bA(\ua(t))})_{t\in [\eps,1-\eps]}
\]
has invertible endpoints and has the same spectral flow as the above loop. 
Then we can apply Theorem~\ref{thm:main_zero_existence} to this open path.

\end{example}

\begin{rem}\label{rem:writhe}[Writhe of a knot]
	Let $\gamma$ be a knot embedded in $\R^3$ (seen as the stereographic projection of $\S^3\setminus\{\bp\}$). 
	Then the writhe $\mathrm{Wr}(\gamma)$ of $\gamma$ is the \emph{real} number obtained
	by taking $\gamma_1=\gamma_2=\gamma$ in the Gauss' formula for the linking number of $\gamma_1$ and $\gamma_2$:
	\[
		\mathrm{Wr}(\gamma):=\frac{1}{4\pi}\int_{\gamma}\int_{\gamma}
			\left\langle  \d \mathbf{r}_1\wedge \d\mathbf{r}_2,\,\frac{\mathbf{r}_1-\mathbf{r}_2}{|\mathbf{r}_1-\mathbf{r}_2|^3}\right\rangle_{\R^3}.
	\]
	By the C\u{a}lug\u{a}reanu-White-Fuller Theorem (see \cite{White69}), it
	coincides with $-\tfrac{1}{2\pi}$ times the total relative torsion $\int_{\gamma}\tau_S$ associated with a Seifert surface $S$ of $\gamma$. 
	Using the latter formula as a definition, the writhe is a conformal invariant: 
	if we conformally change the metric in the vicinity of $\gamma$ (\emph{a fortiori} if we take the $\S^3$-metric),
	the writhe remains unchanged. We refer the reader to \cite{dirac_s3_paper2}*{Appendix} and \cite{DeTurGlu08} for more details.
\end{rem}

\begin{proof}[Proof of Theorem~\ref{thm:main_zero_existence}]
	We consider the homotopy of Theorem~\ref{thm:cont_homot};
	for any $t$ and any $k$ we have $0\le \alpha_k(t)<1$,
	since $1$ is excluded from the set of admissible fluxes 
	(we cannot increases a flux $2\pi\alpha$ up to $2\pi$),
	and the path $(\ua(t))_t$ has to be open in the sense that $\ua(0)\neq \ua(1)$ in 
	$\Tf^K$, otherwise the spectral flow is trivial
	(because any closed path in $[0,1)^K$ has zero spectral flow).
	Therefore we are led to use the homotopy property for open paths \cite{Wahl08}*{Section 2}:
	Let $\Lambda$ be some bump function. If the family
	$$
		(\cD_{(\delta,t)})_{(\delta,t)\in [0,\delta_0]\times [0,1]}
	$$
	is $\sT_{\Lambda}$-continuous in $\Sd$ such that
	for all $\delta\in [0,\delta_0]$, the operators $\cD_{(\delta,0)}$ and $\cD_{(\delta,1)}$ are invertible, 
	then
	$$
		\Sf\big[(\cD_{(0,t)})_{t\in [0,1]} \big]=\Sf\big[(\cD_{(\delta_0,t)})_{t\in [0,1]} \big].
	$$
	In our case, 
	$\cD_{\bA(0)}^{(-)}$ and $\cD_{\bA(1)}^{(-)}$ are invertible, so
	by $\sT_{\Lambda}$-continuity there exists $\delta_0=\delta_0(\uS)>0$ such that 
	$\cD_{\bA_{\phi_\delta}(0)}$ and $\cD_{\bA_{\phi_\delta}(1)}$
	are invertible for any $0<\delta<\delta_0$,
	and the claim follows.
\end{proof}

\subsection{The Pauli operator}\label{sec:Pauli_op}
For $K\in\N$ and $(\uS,\ua)\in \sS_{\sK}^{(K)}$, let $\bA:=\sum_{k=1}^K2\pi\alpha_k[S_k]$ 
be the associated singular magnetic potential. Theorem~\ref{thm:conv_dirac} suggests that
$\cD_{\bA}^{(-)}$ is the correct Dirac operator, hence the Pauli operator $T_{\bA}$, as the square of the Dirac operator,
is naturally defined as 
\[T_{\bA}:=\big( \cD_{\bA}^{(-)}\big)^2.
\]

We can also consider $T_{\bA}$ from the point of view of its 
associated quadratic form $q_{\bA}$ with form domain $\dom(\cD_{\bA}^{(-)})$.
We get the inclusions:
\[
q_{\min}\subset q_{\bA}\subset q_{\max},
\]
where $q_{\min}$ and $q_{\max}$ are the quadratic forms associated with $(\cD_{\bA}^{(\min)})^*\cD_{\bA}^{(\min)}$ and
 $\cD_{\bA}^{(\min)}(\cD_{\bA}^{(\min)})^*$ with form domains $\dom(\cD_{\bA}^{(\min)})$ and $\dom\,(\cD_{\bA}^{(\min)})^*$
 respectively.
We emphasize that $\dom(\cD_{\bA}^{(\min)})$ coincide with $H_{\bA}^1(\S^3)^2$ (see the introduction),
where $H_{\bA}^1(\S^3)$ can be seen as the form domain of the spinless magnetic Schr\"odinger operator.
In order to define the Pauli operator, a careless application of the quadratic form method  
would have led to $\cD_{\bA}^{(\min)}(\cD_{\bA}^{(\min)})^*$. By our discussion above, this is the wrong definition. 
In particular, this operator has the following characteristic, proved in Appendix~\ref{sec:proof_infinite}.
\begin{proposition}\label{prop:infinite}
 The kernel $\ker\,\cD_{\bA}^{(\min)}(\cD_{\bA}^{(\min)})^*=\ker\,(\cD_{\bA}^{(\min)})^*$ is infinite dimensional.
\end{proposition}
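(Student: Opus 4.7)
The plan is to exhibit an infinite linearly independent family $\{\phi_n\}_{n\in\Z}\subset \ker(\cD_{\bA}^{(\min)})^{*}$ by starting from a two-dimensional Aharonov--Bohm zero mode in the transverse slice of a single knot $\gamma_k$ and lifting it to three dimensions through distinct Fourier modes $e^{2\pi i n s/\ell_k}$ along $\gamma_k$, then correcting each candidate to a genuine zero mode via the inverse of some invertible self-adjoint extension. Heuristically, the minimal domain imposes $H^{1}$-regularity along the one-dimensional singular set $\gamma$, whereas the maximal domain allows an infinite-dimensional space of ``boundary data'' on $\gamma_k$ (functions of $s$), and the proof is a construction that witnesses this gap.

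Concretely, I would work near $\gamma_k$ in the coordinates $(s,\bu)$ of Section~\ref{sec:def_coord} with $\bu=\rho e^{i\theta}$. The local profile
\[
	\Psi_{+}(\bu):=\rho^{-\alpha_k}\,e^{i\alpha_k\theta}\,\xi_+
\]
is a distributional zero mode of the transverse Aharonov--Bohm Dirac operator: its phase jump across $S_k$ is precisely the $e^{-2i\pi\alpha_k}$ dictated by the singular gauge $\bA$, and since $0<\alpha_k<1$ the singularity $\rho^{-\alpha_k}$ is sub-critical in $L^{2}$, so integration against spinors in $H_{\bA}^{1}$ produces no boundary term at $\gamma_k$. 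For each $n\in\Z$, set
\[
	\psi_n:=\chi_{\delta,\gamma_k}\,e^{2\pi i n s/\ell_k}\,\Psi_{+}.
\]
Using $\bsigma(\bT^{\flat})\xi_+=+\xi_+$ together with the explicit form of $\cD$ from Proposition~\ref{prop:expr_free_dirac}, a direct computation yields $\cD\psi_n=\tfrac{2\pi n}{\ell_k}\psi_n+R_n$, where $R_n\in L^{2}(\S^{3})^{2}$ is a smooth remainder supported in $B_\delta[\gamma_k]\setminus B_{\delta/2}[\gamma_k]$ (from the cutoff) plus a curvature correction of order $\rho^{1-\alpha_k}$ near $\gamma_k$. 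In particular $\psi_n\in\dom(\cD_{\bA}^{(\min)})^{*}$.

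Now pick an invertible self-adjoint extension $\cD^{\mathrm{ext}}$ of $\cD_{\bA}^{(\min)}$ (which exists after a small generic perturbation of the fluxes, by the analysis in \cite{dirac_s3_paper1}; take for concreteness $\cD^{\mathrm{ext}}=\cD_{\bA}^{(-)}$), and define
\[
	\phi_n:=\psi_n-(\cD^{\mathrm{ext}})^{-1}(\cD\psi_n).
\]
Since $(\cD^{\mathrm{ext}})^{-1}$ maps $L^{2}$ into $\dom\cD^{\mathrm{ext}}\subset\dom(\cD_{\bA}^{(\min)})^{*}$, and the two operators agree on this subspace, one has $(\cD_{\bA}^{(\min)})^{*}\phi_n=0$. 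Linear independence follows by reading off the $\xi_+$-singular coefficient of $\phi_n$ in $B_{\delta/2}[\gamma_k]$: the correction term lies in $\dom\cD_{\bA}^{(-)}$, whose elements forbid exactly the $\xi_+\rho^{-\alpha_k}$ behavior, so the $\xi_+$-singular trace of $\phi_n$ at $\gamma_k$ equals the Fourier mode $e^{2\pi i n s/\ell_k}\in L^{2}(\gamma_k)$, and distinct Fourier modes are independent.

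The main technical hurdle is the first step, namely that $\Psi_{+}$ carries no $\delta$-type contribution on $\gamma_k$ when viewed inside the maximal domain. This is the three-dimensional version of the classical Aharonov--Bohm computation \cite{Persson_dirac_2d,Tamura}: the $H_{\bA}^{1}$-regularity in $\Omega_{\uS}$ of elements of $\dom\cD_{\bA}^{(\min)}$ makes their trace on $\gamma_k$ vanish in the appropriate weighted sense, which in turn kills all candidate boundary integrals near the knot when one integrates $\langle\Psi_+,\cD\varphi\rangle$ by parts. The remaining ingredients (the curvature remainder in $\cD\psi_n$ and the invertibility of $\cD^{\mathrm{ext}}$) are of routine nature.
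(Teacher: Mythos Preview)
Your strategy---produce elements of $\dom(\cD_{\bA}^{(\max)})\setminus\dom(\cD_{\bA}^{(-)})$ via $\xi_+$-singular profiles with distinct longitudinal Fourier modes, then correct into $\ker(\cD_{\bA}^{(\min)})^*$---is sound in outline, but two steps fail as written. First, the profile $\Psi_+=\rho^{-\alpha_k}e^{i\alpha_k\theta}\xi_+$ is \emph{not} in the maximal domain. In the transverse model (Proposition~\ref{prop:expr_free_dirac}, basis $(\eta_+,\eta_-)$) the $\xi_+$-component is hit by $-2i\bar\partial_z$ with $z=u_1+iu_2$; since $\rho^{-\alpha_k}e^{i\alpha_k\theta}=\bar z^{-\alpha_k}$ one gets $\bar\partial_z\bar z^{-\alpha_k}=-\alpha_k\bar z^{-\alpha_k-1}\notin L^2_{\loc}$, so your claimed identity $\cD\psi_n=\tfrac{2\pi n}{\ell_k}\psi_n+R_n$ with $R_n\in L^2$ is false. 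The function $\bar z^{-\alpha_k}$ is a transverse zero mode only on the $\xi_-$ slot, where it lies in $\dom(\cD_{\bA}^{(-)})$ and your independence argument collapses; the correct $\xi_+$-profile is $z^{-(1-\alpha_k)}=\rho^{-(1-\alpha_k)}e^{i(\alpha_k-1)\theta}$, cf.~\eqref{eq:sing_+}. Second, you cannot ``perturb the fluxes'' to obtain an invertible extension of $\cD_{\bA}^{(\min)}$: changing $\ua$ changes the minimal operator itself. When $0\in\spec\cD_{\bA}^{(-)}$ your correction $(\cD_{\bA}^{(-)})^{-1}(\cD\psi_n)$ is undefined; the standard fix (work modulo the finite-dimensional $\ker\cD_{\bA}^{(-)}$, which has finite codimension in $L^2$ as range) is missing.

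The paper's proof is quite different and sidesteps both issues: it works on the Seifert surface $S_k$ away from the knot rather than at $\gamma_k$. The trace map $\psi\mapsto\psi|_{(\wt S_k)_\pm}$ is shown to be continuous in $\norm{D(\cdot)}_{L^2}$ on $H^1(\Omega_{\uS})^2$ via Lichnerowicz; since traces from $\dom(\cD_{\bA}^{(\min)})$ satisfy the phase jump $e^{-2i\pi\alpha_k}$ while those of $\psi\in H^1(\S^3)^2$ coincide on the two sides, any $\psi\in H^1(\S^3)^2$ with nonzero trace on $S_k$ has $D\psi\notin\overline{\ran\cD_{\bA}^{(\min)}}$. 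Choosing infinitely many such $\psi$ with disjoint trace supports on $\wt S_k$ yields an infinite-dimensional complement to $\overline{\ran\cD_{\bA}^{(\min)}}$, hence $\ker(\cD_{\bA}^{(\min)})^*=(\ran\cD_{\bA}^{(\min)})^\perp$ is infinite-dimensional---no invertibility hypothesis and no delicate near-knot computation required.
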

The reader can find in the proof of \cite{dirac_s3_paper1}*{Theorem~29} an element of this kernel which is not a zero mode
for $\bA=2\pi \alpha[D_0]$, where $\partial D_0=\gamma_0$ is a circle and $\alpha\in (\tfrac{1}{2},1)$. In fact in this simple case, 
the whole kernel of the maximal operator can be computed explicitly. 
It is indeed infinite dimensional while the kernel of $\cD_{2\pi\alpha[D_0]}^{(-)}$
is trivial for all $\alpha\in\Tf$ (this is the statement of \cite{dirac_s3_paper1}*{Theorem~29}).

\section{Some criteria for the non-existence of zero modes}\label{sec:crit_inexistence}
From \cite{dirac_s3_paper1}, we know that given a link $\gamma\subset\S^3$, if the 
fluxes carried by the knots are too small,
then the kernel of the corresponding Dirac operators is trivial. 
In this section, we give a more quantitative statement of this fact
by using Clifford analysis in the spirit of \cite{Arrizetal14,HarmAnDir}, 
and using results of \cite{HardSpacSingInt}.
The results presented in this section are derived directly on $\R^3$, 
yet it is clear that one could derive similar results\footnote{Elliptic regularity and conformal invariance of the kernel 
imply a one-to-one correspondence between the kernels of the Dirac operators for the two metrics.} on $\S^3$.
Furthermore, in \cite{dirac_s3_paper1} we gave the construction
for Dirac operators with magnetic links on $\S^3$, but one
could proceed in a very similar way to define them on $\R^3$; many of the proofs
actually become a lot simpler due to the flat geometry of $\R^3$; therefore we 
leave this task to the reader.

We begin by introducing some technical tools which will be the basis for deriving the main
results in the following sections.

\subsection{Some results in Clifford analysis}\label{sec:clifford}
Let us recall some results in Clifford analysis,
in particular the Plemelj--Sokhotski jump formulas.
The following discussion is based on results presented in \cite{HarmAnDir}*{Theorem.~4.4}.

We consider a closed set in $\R^3$ and assume that it is a smooth, connected surface $\Sigma$ 
which is either the boundary of a connected bounded open set $\Omega_{\textrm{int}}$ 
and an unbounded open set $\Omega_{\textrm{ext}}$, or a smooth graph. We write $\bn_\Sigma$ for the normal to 
$\Sigma$ pointing towards $\Omega_{\textrm{ext}}$, and call $\Sigma_{\textrm{ext}}$ the boundary of $\Omega_{\textrm{ext}}$
(``different" from the other side $\Sigma_{\textrm{int}}$ which is the boundary of $\Omega_{\textrm{int}}$).
We denote by $R_0$ the pseudo-differential operator $\frac{1}{\bsigma\cdot(-i\nabla^{\R^3})}$ which acts on (a subset
of) tempered distributions $\mathscr{S}'(\R^3)^2$. 
For any element $u\in C^1(\Sigma)^2$, the limits
$$
	\left\{
		\begin{array}{rcl}
			P_{+}u(\bx)&:=&\lim_{\eps\to 0^+}\Big(R_0 \big(i(\bsigma\cdot \bn_\Sigma) u\big) \Big)
			(\bx+\eps \bn_{\Sigma})\\
			P_{-}u(\bx)&:=&-\lim_{\eps\to 0^+}\Big(R_0 \big(i(\bsigma\cdot \bn_\Sigma) u\big) \Big)
			(\bx-\eps \bn_{\Sigma})
		\end{array}
	\right.
$$
exist in $L^2(\Sigma,d\mu_\Sigma)^2$. Moreover $P_{\pm}$ continuously extend 
to $L^2(\Sigma,d\mu_\Sigma)^2$ and define spectral projections onto the positive (resp. negative) spectrum of a 
bounded operator 
$$
	2\sC :L^2(\Sigma,d\mu_\Sigma)^2 \to L^2(\Sigma,d\mu_\Sigma)^2,
$$
satisfying $4\sC^2 = 1$.
In the literature, $\sC$ is often called the \textit{Cauchy operator}
and the operators $P_{\pm}$ are called the \textit{Hardy projections}. Then the well known
Plemelj--Sokhotski formulas can be summarized as
\begin{equation}\label{eq:ple_sokh}
	P_{\pm} = \frac{1}{2}\pm \sC.
\end{equation}
The respective ranges of $P_{\pm}$ are called the Hardy spaces,
$$
	\mathscr{H}^2_{\pm}(\Sigma) := \ran{P_{\pm}},
$$
and we denote the angle between the two closed subspaces $\mathscr{H}^2_{\pm}(\Sigma)$ by
\begin{align*}
	\theta_\Sigma
	:=\arccos\bigg\{\underset{f_{\pm}\in\mathscr{H}^2_{\pm}(\Sigma)\setminus\{0\}}{\sup}\dfrac{\Re
	\langle f_+,f_-\rangle_{L^2(\Sigma)^2}}{\norm{f_+}_{L^2(\Sigma)^2}\norm{f_-}_{L^2(\Sigma)^2}}\bigg\}\in(0,\pi/2].
\end{align*}
We emphasize that $\sC$ is generally \emph{not} self-adjoint with respect to 
$\langle\cdot\,,\cdot\rangle_{L^2(\Sigma)^2}$, but one has the bound \cite{HardSpacSingInt}*{Theorem~2.8}
\begin{equation}\label{bound_defect_s_a_cauchy_op}
	\norm{\sC-\sC^*}_{\mathcal{B}(L^2(\Sigma)^2)} \leq 2\cot(\theta_\Sigma).
\end{equation}
Furthermore, according to \cite{HardSpacSingInt}*{Theorem~4.17}, for a uniformly rectifiable open set 
$\Omega$ with $\partial\Omega=\partial\overline{\Omega}$ 
(see \cite{HardSpacSingInt}*{Appendix} for the definition), the operator 
$\sC$ is self-adjoint \emph{if and only if} $\partial\Omega$ is a plane or a 
sphere\footnote{Note that through the stereographic projection on $\R^3$, 
planes and spheres on $\R^3$ correspond to spheres on $\S^3$.}.

\subsection{Dirac operators with magnetic knots carrying small fluxes}\label{sec:dirac_op_small_fluxes}

Let $\gamma=\cup_{k=1}^K\gamma_k\subset\R^3$ be a link (where each connected component is identified 
with its $g_{\R^3}$-arclength parametrization, after having chosen a base point). 
Let $\ua\in \Tf^K$; if two or more fluxes are the same,
using Seifert algorithm, we can pick a common Seifert surface for the corresponding sub-links. 
Doing so, we obtain a singular gauge
\begin{equation}\label{gauge_common_seif}
	\bA:=\sum_{m=1}^M2\pi\widetilde{\alpha}_m [S_m]
\end{equation}
for $\bB:=\sum_{k=1}^K2\pi\alpha_k [\gamma_k]$, where $\{\widetilde{\alpha}_m,1\le m\le M\}=\{ \alpha_k,1\le k\le K\}$.\\

\paragraph{\bf{Assumption (A)}} 
	For all $1\le m\le M$ and all $1\le k\le K$, the surface $S_m$ 
	is embedded in a closed compact oriented surface or a smooth graph $\Sigma$
	and $\Sigma$ either contains the corresponding surface $S_{m'}$ to $\gamma_k$ or
	is transverse to $\gamma_k$.

We label the different closed surfaces $\Sigma$ and write $\cup_j \Sigma_j$ their union.
Then we have the following proposition, which is proved in Appendix~\ref{sec:proof_square_integrability}.
\begin{proposition}\label{prop:small_fluxes}
	Let $K\in \N$, and $\gamma=\cup_{k=1}^K\gamma_k\subset\R^3$ be a link. 
	Let $\ua\in [0,\tfrac{1}{2})^K$ and $\bA$ be a singular gauge of type \eqref{gauge_common_seif} 
	satisfying Assumption (A) above.
	Then, for any $\psi\in \dom\big( \cD_{\bA}^{(-)}\big)$ and any surface $\Sigma_j$, 
	the traces of $\psi$ on both sides of $\Sigma_j$ exist and are 
	elements of $L^2(\Sigma_j)^2$. 
\end{proposition}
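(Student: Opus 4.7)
The plan is to localize $\psi$ via a partition of unity separating a tubular neighborhood of $\gamma=\cup_k\gamma_k$ from its complement. Away from $\gamma$, elliptic regularity for $(\cD_{\bA}^{(\min)})^*$ guarantees that any $\psi\in\dom(\cD_{\bA}^{(-)})\subset\dom\bigl((\cD_{\bA}^{(\min)})^*\bigr)$ lies in $H^1_{\loc}\bigl(\R^3\setminus(\gamma\cup_m S_m)\bigr)^2$, with the phase jumps $e^{-2\pi i\widetilde{\alpha}_m}$ prescribed across each $S_m$. Under Assumption~(A), $\Sigma_j$ intersects $\cup_m S_m$ in finitely many smooth curves, so the standard trace theorem, applied componentwise on $\Sigma_j\setminus B_\eps[\gamma]$, yields that $\psi$ restricts to an $L^2$-function on this part of $\Sigma_j$ (using $\psi\in L^2(\R^3)^2$ together with $H^1$-behavior outside a compact set to handle the unbounded case of a smooth graph).

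The crux is the local behavior of $\psi$ near each knot $\gamma_k$. Introducing tubular coordinates $(s,\rho,\theta)$ around $\gamma_k$ and invoking the classification of self-adjoint extensions of $\cD_{\bA}^{(\min)}$ from \cite{dirac_s3_paper1}, any $\psi\in\dom\bigl((\cD_{\bA}^{(\min)})^*\bigr)$ admits near $\gamma_k$ an expansion of the form
\[
\chi_{\delta,\gamma_k}\psi = \chi_{\delta,\gamma_k}\Bigl(c_k^-(s)\,\rho^{-\alpha_k}e^{-i\alpha_k\theta}\xi_-(s)+c_k^+(s)\,\rho^{-(1-\alpha_k)}e^{i(1-\alpha_k)\theta}\xi_+(s)\Bigr)+\psi_{\mathrm{reg}},
\]
with $\psi_{\mathrm{reg}}$ in an $H^1$-type space (compatible with the Seifert-surface phase jumps) and coefficients $c_k^{\pm}$ of sufficient regularity along $\gamma_k$, in particular in $L^2(\gamma_k)$. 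The defining condition \eqref{def_domain_D} of the $(-)$-extension, which requires the $\xi_+$-component to sit in $\dom(\cD_{\bA}^{(\min)})$, then forces $c_k^+\equiv 0$, leaving only the $\xi_-$-aligned singularity of order $\rho^{-\alpha_k}$.

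It then remains to compute the $L^2$-norm on $\Sigma_j\cap B_\eps[\gamma_k]$, and here Assumption~(A) leaves two cases. If $\Sigma_j$ is transverse to $\gamma_k$, their intersection is a finite set of points and, in a transverse cross-section (with area element $\rho\,d\rho\,d\theta$), the singular term contributes $\int_0^\delta \rho^{-2\alpha_k}\rho\,d\rho<\infty$ for any $\alpha_k<1$. If instead $\gamma_k\subset\Sigma_j$, then $\Sigma_j$ locally corresponds to $\{\theta\in\{0,\pi\}\}$ with surface element $ds\,dr$ (no extra factor of $r$), and the singular term contributes
\[
\int_{\gamma_k}|c_k^-(s)|^2\,ds\int_0^\delta \rho^{-2\alpha_k}\,d\rho,
\]
which is finite \emph{precisely} when $\alpha_k<1/2$; this threshold is the raison d'\^etre of the hypothesis $\ua\in[0,1/2)^K$. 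Combined with the trace bound for $\psi_{\mathrm{reg}}$ coming from the first paragraph, this yields $\psi|_{\Sigma_j}\in L^2(\Sigma_j)^2$. The main technical obstacle is justifying the asymptotic expansion above together with the $L^2(\gamma_k)$-regularity of $c_k^-$; this rests on a Fourier-series analysis in $\theta$ reducing the problem fibrewise to a $2$D Aharonov--Bohm Dirac operator, as developed in \cite{dirac_s3_paper1} when classifying the deficiency elements of $\cD_{\bA}^{(\min)}$.
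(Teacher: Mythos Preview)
Your overall architecture---localize, use $H^1$-regularity away from $\gamma$, and analyze the singular behavior near each $\gamma_k$ via the structure of the $(-)$-extension---is exactly the paper's strategy, and your identification of why the threshold $\alpha_k<\tfrac12$ enters (the integral $\int_0^\delta \rho^{-2\alpha_k}\,d\rho$ when $\gamma_k\subset\Sigma_j$) is the correct heuristic.

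The gap is in your near-knot expansion. You write $\chi_{\delta,\gamma_k}\psi = c_k^-(s)\rho^{-\alpha_k}e^{-i\alpha_k\theta}\xi_-+\psi_{\mathrm{reg}}$ with $\psi_{\mathrm{reg}}$ in an ``$H^1$-type space''. But the singular part of $\dom(\cD_{\T_\ell,\alpha}^{(-)})$ in the model operator is
\[
A_{\sing}=\Big\{f_{\sing}(\underline{\lambda})=\tfrac{1}{\sqrt{2\pi\ell}}\sum_{j\in\T_\ell^*}\lambda_j e^{ijs}\begin{pmatrix}0\\ e^{i\alpha\theta}K_\alpha(r\langle j\rangle)\end{pmatrix}:\ \underline{\lambda}\in\ell^2\Big\},
\]
where each $s$-Fourier mode carries a \emph{different} radial profile $K_\alpha(r\langle j\rangle)$. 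If you subtract the common leading term $c^-(s)\rho^{-\alpha}$ (with $c^-(s)=C_\alpha\sum_j\lambda_j\langle j\rangle^{-\alpha}e^{ijs}$), the remainder is \emph{not} in $H^1(\{\theta\neq 0\})$ in general: the $\partial_s$-derivative picks up a factor $j$, and the mode-by-mode $L^2(r\,dr)$-norm of $K_\alpha(r\langle j\rangle)-C_\alpha(r\langle j\rangle)^{-\alpha}$ does not decay fast enough in $j$ to compensate. So the trace theorem cannot be invoked on $\psi_{\mathrm{reg}}$ as you suggest.

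The paper sidesteps this by never attempting such a splitting. It works directly with the decomposition $\dom(\cD_{\T_\ell,\alpha}^{(-)})=\dom(\cD_{\T_\ell,\alpha}^{(\min)})\oplus_{\perp_\cG}A_{\sing}$, uses Lichnerowicz for the minimal part (which \emph{is} $H^1$ off the jump set), and for $f_{\sing}$ computes the trace on $\{u_2=0\}$ mode-by-mode:
\[
\int_{u_2=0}|f_{\sing}|^2\le\sum_{j}\frac{|\lambda_j|^2}{\langle j\rangle}\int_{\R}K_\alpha(u_1)^2\,du_1\le C(\alpha)\norm{f_{\sing}}_{\T_\ell}^2,
\]
with $\int_{\R}K_\alpha^2<\infty$ precisely when $\alpha<\tfrac12$ (and $C(\alpha)=\cO((1-2\alpha)^{-1})$). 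This yields \emph{continuity} of the trace in the graph norm, which is what is actually needed to define the trace on all of $\dom(\cD_{\bA}^{(-)})$. A minor point: the relevant Fourier decomposition is in the longitudinal variable $s$ along the knot, not in $\theta$ as you wrote; the $\theta$-dependence of the singular part is fixed by the phase-jump condition.
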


To prove the main theorem of this section, we need to make slightly stronger assumptions:
\begin{enumerate}
	\item All the $\gamma_k$'s are contained in a smooth surface $\Sigma=\partial\Omega_{\textrm{int}}$ where 
	$\Omega_{\textrm{int}}\subset \R^3$ is a bounded open set (or the bottom of a smooth graph).
	
	\item	There exists a family of \emph{disjoint} surfaces $(S_m)_m$ embedded in $\Sigma$
	such that $\partial S_m= \sum_{k \in I_j}\gamma_k$
	with $\alpha_{k_1} = \alpha_{k_2} =: \widetilde{\alpha}_m$ for all $k_1,k_2 \in I_j$.
	
	\item The fluxes $\alpha_k$ are in $(0,\tfrac{1}{2})$.
\end{enumerate}

The surfaces $S_m$ have a relative orientation with respect to $\Sigma$, which we denote by $\eps_m \in \{\pm1\}$. 
Then the $\gamma_k$'s are split into two subsets $G_{\pm}$ 
with $\gamma_k\in G_{\eps_m}$, $k\in I_j$, and we define
$$
	\delta\alpha :=\sup_{\gamma_{k_{\pm}}\in G_{\pm}}(\alpha_{k_+}+\alpha_{k_-})\in(0,1).
$$
\begin{theorem}\label{thm:non_existence}
	Let $\bB = \sum_{k=1}^K2\pi\alpha_k[\gamma_k]$ be a magnetic link satisfying points (1)-(2)-(3) above with gauge 
	\eqref{gauge_common_seif}. 
	Then the associated Dirac operator $\mathcal{D}_{\bA}^{(-)}$ has no zero modes in the following two cases:
	\begin{enumerate}
		\item All the $S_m$'s share the same orientation in $\Sigma$
		and
		\begin{align}\label{eq:thm_assum_1}
			\norm{\sC-\sC^*}_{\cB(L^2(\Sigma)^2)} < \cot(\pi\sup_k\alpha_k).
		\end{align}
		
		\item If the orientations of the $S_m$'s are different, then we require $\delta\alpha<2^{-1}$ and that
		\begin{align}\label{eq:thm_assum_2}
			\norm{\sC-\sC^*}_{\cB(L^2(\Sigma)^2)} < \cos(\pi\delta \alpha).
		\end{align}
	\end{enumerate}
\end{theorem}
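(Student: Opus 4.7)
Assume for contradiction that $\psi \in \ker \cD_{\bA}^{(-)}$ is non-trivial. By Proposition~\ref{prop:small_fluxes} the traces $\psi_{\text{int}},\psi_{\text{ext}}$ on the two sides of $\Sigma$ exist in $L^2(\Sigma)^2$. Because hypotheses~(1)--(2) place every Seifert surface inside $\Sigma$, the restrictions $\psi|_{\Omega_{\text{int}}}$ and $\psi|_{\Omega_{\text{ext}}}$ are monogenic (in the free sense), and the exterior one decays at infinity; standard Clifford analysis (Cauchy reproducing formula) identifies $f_{-} := \psi_{\text{int}} \in \mathscr{H}^{2}_{-}(\Sigma)$ and $f_{+} := \psi_{\text{ext}} \in \mathscr{H}^{2}_{+}(\Sigma)$. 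The transmission condition encoded in the domain of $\cD_{\bA}^{(-)}$ translates to the pointwise identity $f_{+} = \varphi f_{-}$ on $\Sigma$, with $\varphi(\bx) = e^{-2i\pi\eps_{m}\widetilde{\alpha}_{m}}$ on $S_{m}$ and $\varphi(\bx) = 1$ on $\Sigma \setminus \bigcup_{m} S_{m}$; in particular $|\varphi|=1$, so $\|f_{+}\|=\|f_{-}\|$.

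The plan is to estimate $I := \langle f_{+}, f_{-}\rangle_{L^{2}(\Sigma)^{2}}$ from two opposite directions. For the upper bound, $P_{+}P_{-}=0$ and $P_{+}^{*}-P_{+}=\sC^{*}-\sC$ give
\[
 I = \langle P_{+}f_{+}, P_{-}f_{-}\rangle = \langle f_{+}, P_{+}^{*}P_{-}f_{-}\rangle = -\langle f_{+}, (\sC-\sC^{*})f_{-}\rangle,
\]
whence $|I| \le \|\sC-\sC^{*}\|_{\cB(L^{2}(\Sigma)^{2})}\,\|f_{-}\|^{2}$. For the lower bound, $I=\int_{\Sigma} |f_{-}|^{2}\bar\varphi\,\d\mu$; choosing a rotation $e^{2i\pi\beta}$ that centers the range of $e^{2i\pi\beta}\bar\varphi$ on $1$, case~(2) places this range in an arc of half-width $\pi\delta\alpha<\pi/2$, and the real part bound yields $|I|\ge \cos(\pi\delta\alpha)\|f_{-}\|^{2}$. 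Combined with the upper bound, the hypothesis \eqref{eq:thm_assum_2} immediately forces $f_{-}=0$.

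Case~(1) is more delicate because the same rotation only yields the weaker lower bound $\cos(\pi\alpha_{\max})\|f_{-}\|^{2}$. The extra factor $1/\sin(\pi\alpha_{\max})$ is extracted by exploiting that, with common orientation (say $\eps_{m}=+1$), the imaginary part $\text{Im}\,I = \int\sin(2\pi a)|f_{-}|^{2}$ (where $a(\bx)\in \{0\}\cup\{\widetilde{\alpha}_{m}\}$) has a definite sign. Setting $u:=f_{+}+f_{-}=(1+\varphi)f_{-}$ and $v:=f_{+}-f_{-}=(\varphi-1)f_{-}$, the Hardy identity $\sC|_{\mathscr{H}^{2}_{\pm}}=\pm\tfrac{1}{2}$ gives $v=2\sC u$, hence $2i\,\text{Im}\,I=\langle v,u\rangle=2\langle\sC u,u\rangle$. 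Since $\sC+\sC^{*}$ contributes only a real quadratic form, only the antihermitian part survives, giving
\[
 |\text{Im}\,I| \le \tfrac{1}{2}\|\sC-\sC^{*}\|\,\|u\|^{2} = 2\|\sC-\sC^{*}\|\int_{\Sigma}\cos^{2}(\pi a)|f_{-}|^{2}\,\d\mu.
\]
Combining the identity $\text{Im}\,I=2\int\sin(\pi a)\cos(\pi a)|f_{-}|^{2}$ with the real-part lower bound $\text{Re}(e^{i\pi\alpha_{\max}}I)\ge \cos(\pi\alpha_{\max})\|f_{-}\|^{2}$ and optimizing how the mass of $|f_{-}|^{2}$ is distributed between $\bigcup_{m}S_{m}$ and its complement produces exactly the threshold $\|\sC-\sC^{*}\|<\cot(\pi\alpha_{\max})$ required by \eqref{eq:thm_assum_1}.

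In either case one concludes $f_{-}=0$. Unique continuation for Hardy-space spinors (a monogenic extension to $\Omega_{\text{int}}$ vanishing on an open subset of $\Sigma$ vanishes identically) gives $\psi\equiv 0$ on $\Omega_{\text{int}}$; then $f_{+}=\varphi f_{-}=0$ and the analogous argument on $\Omega_{\text{ext}}$ gives $\psi\equiv 0$ there as well, contradicting $\psi\neq 0$. The main technical hurdle is the sharp case~(1) estimate: passing from the generic $\cos(\pi\alpha_{\max})$ bound to the refined $\cot(\pi\alpha_{\max})$ bound hinges on coupling the sign-definite imaginary part (a consequence of common orientation) with the Hardy relation $v=2\sC u$, and is where the fine Clifford-analytic machinery of \cite{HardSpacSingInt} enters in an essential way.
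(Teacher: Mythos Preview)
Your treatment of case~(2) is essentially the paper's argument: the bound $|I|\le\norm{\sC-\sC^*}\,\norm{f_-}^2$ together with the convex-hull lower bound on $|I|$ gives the $\cos(\pi\delta\alpha)$ threshold.

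Case~(1), however, has a real gap. You apply the quadratic form of $\sC-\sC^*$ to the \emph{sum} $u=f_++f_-$ and obtain
\[
|\mathrm{Im}\,I|\le \tfrac{1}{2}\norm{\sC-\sC^*}\,\norm{u}^2
   =2\norm{\sC-\sC^*}\int_\Sigma \cos^2(\pi a)\,|f_-|^2.
\]
Because $u$ is \emph{not} supported on $\bigcup_m S_m$, the right-hand side contains the off-surface mass $\mu_0=\int_{\Sigma\setminus\cup_m S_m}|f_-|^2$, and no ``optimization'' of the mass distribution recovers the $\cot$ bound from this inequality. For a single knot with flux $\alpha$ and mass split $(\mu_0,1-\mu_0)$, your inequality gives only $\norm{\sC-\sC^*}\ge (1-\mu_0)\sin(\pi\alpha)\cos(\pi\alpha)\big/(\mu_0+(1-\mu_0)\cos^2(\pi\alpha))$, which is at most $\tan(\pi\alpha)$ (attained at $\mu_0=0$) --- never $\cot(\pi\alpha)$. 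Coupling with the ``real-part lower bound'' does not help: that bound is just the $\cos$ estimate again.

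The fix is one line away. You also have the dual relation $u=2\sC v$ (since $4\sC^2=1$), and the \emph{difference} $v=f_+-f_-=(\varphi-1)f_-$ \emph{is} supported on $\bigcup_m S_m$. Repeating your computation with $v$ in place of $u$ yields $\langle(\sC-\sC^*)v,v\rangle=-2i\,\mathrm{Im}\,I$, hence
\[
\Big|\sum_m \sin(2\pi\widetilde\alpha_m)\!\int_{S_m}|f_-|^2\Big|
\le \tfrac12\norm{\sC-\sC^*}\,\norm{v}^2
=2\norm{\sC-\sC^*}\sum_m \sin^2(\pi\widetilde\alpha_m)\!\int_{S_m}|f_-|^2,
\]
and dividing through immediately gives $\norm{\sC-\sC^*}\ge\cot(\pi\alpha_{\max})$ whenever $v\neq 0$. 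This is exactly the paper's argument, phrased there as the pointwise relation $2\sC v=i\cot(\pi\widetilde\alpha_m)v$ on $S_m$ (paper's $u$ is your $v$). No optimization and no extra real-part input are needed.
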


\begin{proof}
	Let us assume that there exists a zero mode $\psi\in \dom(\mathcal{D}_{\bA}^{(-)})$.
	A computation shows that $u_0=\sigma\cdot(-i\nabla^{\R^3})\psi\in L^2(\Sigma)^2$ and 
	writing $u:=-i(\boldsymbol{\sigma}\cdot \bn_{\Sigma})u_0$ we have $\psi_{|_{\Sigma_{\mathrm{ext}}}}=P_{+}u\in L^2(\Sigma)^2$ 
	and $\psi_{|_{\Sigma_{\mathrm{int}}}}=-P_{-}u\in L^2(\Sigma)^2$. 
	We see $u_0$ as a distribution in $H^{-1}(\R^3)^2$. Then a computation (Stokes formula) shows that $u_0$ is in $L^2(\Sig)^2$ and coincides with:
	\[
	 -i\boldsymbol{\sigma}\cdot \mathbf{n}_{\Sig}(\psi_{|_{\Sig_{\ext}}}-\psi_{|_{\Sig_{\mathrm{int}}}}).
	\]
	The distribution $u_0$ is supported on $\cup_mS_m$, since 
	the spinor $\psi$ satisfies
	\begin{align}\label{eq:jump_cond}
		\chi_{S_m}\psi_{|_{\Sigma_{\mathrm{int}}}} &= e^{2i\eps_m\widetilde{\alpha}_m\pi}\chi_{S_m}\psi_{|_{\Sigma_{\mathrm{ext}}}},\nn\\
		\chi_{\Sigma\setminus(\cup S_m)}\psi_{|_{\Sigma_{\mathrm{ext}}}}
		&=\chi_{\Sigma\setminus(\cup S_m)}\psi_{|_{\Sigma_{\mathrm{int}}}}.
	\end{align}
	We recall that $\Sig_{\mathrm{int}},\Sig_{\ext}$ denote the two sides of $\Sig$. 
	We may assume that $\norm{P_+u}_{L^2(\Sigma)^2}=\norm{P_-u}_{L^2(\Sigma)^2}=1$, since the two boundary
	values only differ by a phase.
	
	\smallskip
	\noindent\textit{Assumption~(1):}
	Now all the $S_m$'s have the same orientation in $\Sigma$, $\eps_m = \eps$ for all $m$, where $\eps$ is either $1$ or $-1$. 
	The jump conditions \eqref{eq:jump_cond} 
	can be rewritten as a condition over $u$; using \eqref{eq:ple_sokh} we obtain
	\[
	  \begin{array}{rcl}
		\chi_{\Sig\setminus \cup_m S_m}\left(\textstyle{\frac{1}{2}} + \sC\right)u
		&=&-\chi_{\Sig\setminus \cup_m S_m}\left(\textstyle{\frac{1}{2}} - \sC\right)u\\
		\chi_{S_m}\left(\textstyle{\frac{1}{2}} + \sC\right)u 
		&=& -e^{-2\pi i \eps \widetilde{\alpha}_m}\chi_{S_m}\left(\textstyle{\frac{1}{2}} - \sC\right)u,
	  \end{array}
	\]
	which yields
	\[
	  \begin{array}{rcl}
		\chi_{\Sig\setminus \cup_m S_m}u&=&0,\\
		\chi_{S_m} (2\mathscr{C}) u
		&=&i\cot(\eps\pi\widetilde{\alpha}_m) \chi_{S_m}u.
	\end{array}
	\]
	(We recover $\supp u_0\subset \cup_mS_m$ since $u =-i(\bsigma\cdot \mathbf{n}_S)u_0$).
	Let us assume that $\eps = 1$ (the case $\eps = -1$ can be dealt with in a similar way).
	We have
	\begin{align*}
		-\langle 2i(\sC-\sC^*)u,\,u \rangle_{L^2(\Sigma)^2}&=\langle -2i\sC u,\,u \rangle_{L^2(\Sigma)^2}+\langle u,\,-2i\sC u \rangle_{L^2(\Sigma)^2}\\
		&=2 \sum_m\cot(\pi\widetilde{\alpha}_m)\dint_{S_m}|u|^2\\
		&\ge 2\cot(\pi\sup_k \widetilde{\alpha}_m)\norm{u}^2_{L^2(\Sigma)^2}.
	\end{align*}
	Furthermore,
	\begin{align*}
		\left|\langle 2i(\sC-\sC^*)u,\,u \rangle_{L^2(\Sigma)^2}\right| 
		&\leq 2\norm{\sC-\sC^*}_{\cB(L^2(\Sigma)^2)} \norm{u}_{L^2(\Sigma)^2}^2,
	\end{align*}
	so that $\cot(\pi\sup_m \widetilde{\alpha}_m) \leq \norm{\sC-\sC^*}_{\cB(L^2(\Sigma)^2)}$, 
	contradicting \eqref{eq:thm_assum_1}.
	
	\smallskip
	\noindent\textit{Assumption~(2):}
	From \eqref{eq:jump_cond} it follows that
	$$
		\langle \psi_{|_{\Sigma_{\mathrm{int}}}} ,\psi_{|_{\Sigma_{\mathrm{ext}}}}\rangle_{L^2(\Sigma)^2}
		=\sum_m\left[ e^{2i\pi\eps_m\widetilde{\alpha}_m}\dint_{S_m}|\psi_{|_{\Sigma_{\mathrm{ext}}}}|^2\right]
		+\dint_{\Sigma\setminus(\cup_m S_m)}|\psi_{|_{\Sigma_{\mathrm{ext}}}}|^2.
	$$
	Since $\int_{\Sigma}|\psi_{|_{\Sigma_{\mathrm{ext}}}}|^2 = 1$, one can conclude that
	$$
		\langle \psi_{|_{\Sigma_{\mathrm{int}}}} ,\psi_{|_{\Sigma_{\mathrm{ext}}}}\rangle_{L^2(\Sigma)^2}
		 \in \textrm{Conv}\Big(\{1,e^{2i\pi\eps_m \widetilde{\alpha}_m}\}_{1\le m\le M} \Big).
	$$
	Furthermore, with \eqref{eq:ple_sokh} one can easily show
	\begin{align*}
		\langle (\sC-\sC^*)f_+,f_-\rangle_{L^2(\Sigma)^2} 
		= \langle f_+,f_-\rangle_{L^2(\Sigma)^2},
		\quad \forall f_{\pm} \in \mathscr{H}^2_{\pm}(\Sigma),
	\end{align*}
	so that
	\begin{align*}
		\left|\langle \psi_{|_{\Sigma_{\mathrm{ext}}}} ,\psi_{|_{\Sigma_{\mathrm{int}}}}\rangle_{L^2(\Sigma)^2}\right|
		&\leq \left|\langle (\sC-\sC^*) \psi_{|_{\Sigma_{\mathrm{ext}}}} ,\psi_{|_{\Sigma_{\mathrm{int}}}}\rangle_{L^2(\Sigma)^2}\right|\\
		&\leq \norm{\sC-\sC^*}_{\cB(L^2(\Sigma)^2)}.
	\end{align*}
	(Just observe that $f_{\pm}=(\tfrac{1}{2}\pm \sC)f_{\pm}$ which gives $\sC f_{\pm}=\pm\tfrac{1}{2}f_{\pm}$ since $4\sC^2=1$). 
	Since we have assumed that $\delta\alpha<2^{-1}$, we obtain
	$$
		\mathrm{dist}_{\mathbb{C}}\Big(0;  \textrm{Conv}\Big(\{1,e^{2i\pi\eps_m \widetilde{\alpha}_m}\}_{1\le m\le M} \Big)\Big)
		=\cos(\pi\delta\alpha)
		<\norm{\sC-\sC^*}_{\cB(L^2(\Sigma)^2)},
	$$
	which contradicts \eqref{eq:thm_assum_2}.
\end{proof}

Now we state two rather straightforward consequences from the above theorem, which uses the estimate \eqref{bound_defect_s_a_cauchy_op}.
For the first theorem, we recall that 
the Cauchy operator the surface $\Sig$ is self-adjoint in $L^2(\Sigma)^2$ if and only if $\Sigma$ is a plane or a sphere.
We emphasize that any knot embedded in a plane or a sphere is a Jordan curve, hence is isotopic to the circle in $\S^3$ and $\R^3$.

\begin{corollary}\label{coro:inex1}
	Let $(\gamma_k)_{k}$ be any collection of (non-intersecting) knots in a plane or in a sphere $\Sigma$ in $\R^3$ such that their interiors $S_k$ 
	in $\Sigma$ all share the same orientation with respect to $\Sigma$. For any  $\bA=\sum_k 2\pi\alpha_k [S_k]$, with $\alpha_k \in (0,1/2)$,
	the corresponding Dirac operator $\mathcal{D}_{\bA}^{(-)}$ has no zero modes.
\end{corollary}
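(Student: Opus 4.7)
The plan is to reduce directly to Theorem~\ref{thm:non_existence}, case~(1), by checking that every hypothesis is automatically met in the geometric setting of the corollary, and then exploiting self-adjointness of the Cauchy operator on spheres and planes.

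First, I would put the gauge into the form \eqref{gauge_common_seif} compatible with the points (1)--(3) of Section~\ref{sec:dirac_op_small_fluxes}. Point~(1) holds by assumption: all $\gamma_k$ lie on $\Sigma$, which is either the boundary of a bounded open set (the sphere case) or a smooth graph (the plane case, viewed as in Section~\ref{sec:clifford}). Point~(3) is exactly $\alpha_k\in(0,1/2)$. For point~(2), since the $\gamma_k$ are pairwise disjoint curves in the $2$-manifold $\Sigma$, I would group the knots carrying a common flux $\widetilde{\alpha}_m$ and take $S_m\subset\Sigma$ to be the region of $\Sigma$ whose boundary is $\sum_{k\in I_m}\gamma_k$; in the nested case this region is multiply connected (annular or with several holes), but such a choice is always available in a plane or sphere and makes the family $(S_m)$ pairwise disjoint. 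By hypothesis, all the $S_k$'s share the same orientation with respect to $\Sigma$, and the same therefore holds for the $S_m$'s obtained by union, so the orientation clause of Theorem~\ref{thm:non_existence}(1) is satisfied.

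Second, I would invoke \cite{HardSpacSingInt}*{Theorem~4.17}, already quoted in Section~\ref{sec:clifford}, which states that for a uniformly rectifiable $\Omega$ with $\partial\Omega=\partial\overline{\Omega}$, the Cauchy operator $\sC$ is self-adjoint on $L^2(\partial\Omega)^2$ \emph{if and only if} $\partial\Omega$ is a plane or a sphere. In our setting this gives
\[
\norm{\sC-\sC^*}_{\cB(L^2(\Sigma)^2)}=0.
\]
Since $\sup_k\alpha_k<1/2$, one has $\pi\sup_k\alpha_k\in(0,\pi/2)$, hence $\cot(\pi\sup_k\alpha_k)>0$, so the quantitative bound \eqref{eq:thm_assum_1} holds trivially.

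With both ingredients in place, Theorem~\ref{thm:non_existence}(1) applies and yields $\ker\mathcal{D}_{\bA}^{(-)}=\{0\}$. The only non-automatic step is the geometric regrouping of nested knots into disjoint multi-boundary surfaces $S_m$ of Section~\ref{sec:dirac_op_small_fluxes}; I expect this to be the main (and only mild) obstacle, and it is handled purely topologically inside the $2$-manifold $\Sigma$ without touching the operator-theoretic content.
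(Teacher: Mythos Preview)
Your core argument is correct and matches the paper's: verify the hypotheses of Theorem~\ref{thm:non_existence}(1), invoke \cite{HardSpacSingInt}*{Theorem~4.17} to get $\sC=\sC^*$ on a plane or sphere, and observe that $0<\cot(\pi\sup_k\alpha_k)$ makes \eqref{eq:thm_assum_1} automatic.

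The one place where you go beyond the paper---the regrouping of nested knots into multiply-connected regions $S_m$---is flawed. If $\gamma_1$ lies inside $\gamma_2$ in $\Sigma$ with the \emph{same} orientation (the very hypothesis of the corollary), then no embedded region of $\Sigma$ has oriented boundary $\gamma_1+\gamma_2$: the annulus between them has boundary $\gamma_2-\gamma_1$, and more generally the signs on the boundary components of any compact planar or spherical region alternate with nesting depth. So the surface $S_m$ with $\partial S_m=\sum_{k\in I_m}\gamma_k$ that you postulate need not exist. Even if it did, curves from distinct flux classes could still be nested, preventing the $(S_m)$ from being pairwise disjoint; and the new gauge would differ from the given $\bA$, requiring a gauge-equivalence argument you do not supply.

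This is moot in the paper's reading: the corollary is stated as a direct consequence of Theorem~\ref{thm:non_existence} and sits under the standing hypothesis~(2) that the surfaces are disjoint. With that understood (i.e., the interiors $S_k$ are non-nested), no regrouping is needed and your proof reduces to the paper's two-line argument.
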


If we choose to remove the restrictions on the geometry of the link, 
we obtain the following corollary. Below $E_\Sig(\gamma)$ denotes the set of 
oriented smooth surfaces containing a Seifert surface for $\gamma$ and which are either closed and compact 
or a graph of a smooth function. (As a Seifert surface is bicollared, the set $E_\Sig(\gamma)$ is never empty).
\begin{corollary}\label{coro:inex2}
	Let $\gamma$ be an arbitrary link in $\R^3$. We introduce $\eps(\gamma,\sC)\ge 0$ as
	\[
	 \eps(\gamma,\sC):=\inf_{\Sig\in E_\Sig(\gamma)}\norm{\sC_{\Sig}-\sC_{\Sig}^*}_{\mathcal{B}(L^2(\Sigma)^2)}
	\] 
	where $\sC_{\Sig}$ is the Cauchy operator associated with $\Sig$. 
	Let $0<\alpha(\gamma)\le \tfrac{1}{2}$ be:
	\[
	 \alpha(\gamma):=\min\Big(\frac{1}{\pi}\mathrm{arccot}(\eps(\gamma,\sC)), \frac{1}{2}\Big).
	\]
	Then for all $0<\alpha'<\alpha(\gamma)$ and any Seifert surface $S$ for $\gamma$ the kernel of $\cD_{2\pi\alpha'[S]}^{(-)}$ is trivial.
\end{corollary}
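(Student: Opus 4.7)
The plan is to apply Theorem~\ref{thm:non_existence}(1) to a carefully chosen surface $\Sig_*\in E_\Sig(\gamma)$ supplied by the infimum defining $\eps(\gamma,\sC)$, after first reducing the problem to a Seifert surface $S_*$ that actually lies in $\Sig_*$.

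First I would prove that the dimension of $\ker\cD_{2\pi\alpha'[S]}^{(-)}$ is independent of the Seifert surface $S$ chosen for $\gamma$. Given two such surfaces $S,S'$, the two-current $[S]-[S']$ is a cycle, and since $H_2(\R^3)=0$ there is a three-chain $V$ with $\partial[V]=[S]-[S']$. The multiplication operator $U_V:=\exp(2\pi i\alpha'\chi_V)$ is then a unitary on $L^2(\R^3)^2$ which formally intertwines the two singular gauges: its phase jump $e^{-2\pi i\alpha'}$ across $S$ cancels the one encoded in $\dom\cD_{2\pi\alpha'[S]}^{(-)}$ and transplants it across $S'$. I would verify that $U_V$ is smooth on a tubular neighborhood of $\gamma$ minus $S\cup S'$, so that the singular alignment condition~\eqref{def_domain_D} characterising the $(-)$-extension is preserved, and conclude
\[
 U_V\,\cD_{2\pi\alpha'[S]}^{(-)}\,U_V^{-1}=\cD_{2\pi\alpha'[S']}^{(-)}.
\]
Hence it suffices to prove the triviality of the kernel for a single convenient Seifert surface of $\gamma$.

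Next I would unravel the infimum. Since $\alpha'<\alpha(\gamma)\le\tfrac{1}{2}$, one has $\cot(\pi\alpha')>0$ and $\pi\alpha'<\mathrm{arccot}(\eps(\gamma,\sC))$, hence $\cot(\pi\alpha')>\eps(\gamma,\sC)$. Therefore there exists $\Sig_*\in E_\Sig(\gamma)$ with
\[
 \norm{\sC_{\Sig_*}-\sC_{\Sig_*}^*}_{\mathcal{B}(L^2(\Sig_*)^2)}<\cot(\pi\alpha'),
\]
and by definition of $E_\Sig(\gamma)$ this $\Sig_*$ contains some Seifert surface $S_*$ of $\gamma$. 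I would then invoke Theorem~\ref{thm:non_existence}(1) with $K=M=1$, $\gamma_1=\gamma$, $S_1=S_*$, $\Sig=\Sig_*$ and flux $\alpha'\in(0,\tfrac{1}{2})$: conditions (1)--(3) preceding that theorem hold automatically in this one-knot, one-surface setup, the orientation hypothesis is vacuous for a single $S_m$, and the required bound~\eqref{eq:thm_assum_1} is exactly the inequality just established. This yields $\ker\cD_{2\pi\alpha'[S_*]}^{(-)}=\{0\}$, and combined with the first step it gives $\ker\cD_{2\pi\alpha'[S]}^{(-)}=\{0\}$ for every Seifert surface $S$ of $\gamma$.

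The main obstacle will be the Seifert-surface-independence step. Although the bulk gauge transformation by $U_V$ is classical, the delicate point is to check that $U_V$ respects the singular boundary condition at $\gamma$ distinguishing the $(-)$-extension among all self-adjoint extensions of $\cD_{\bA}^{(\min)}$. Geometrically, near $\gamma$ the three-chain $V$ reduces to a wedge between the two half-surfaces $S$ and $S'$ emanating from $\gamma$, so $U_V$ is piecewise constant and does not affect the $\xi_\pm$-content of a spinor in an arbitrarily small $B_\eps[\gamma]$; but a careful bookkeeping using the domain description of $\cD_{\bA}^{(-)}$ from \cite{dirac_s3_paper1} is needed to turn this into a rigorous proof. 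Once this gauge-equivalence is in place, the remaining two steps amount to a packaging argument around Theorem~\ref{thm:non_existence}.
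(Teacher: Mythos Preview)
Your proof is correct and follows the route the paper has in mind (the paper does not spell out an argument, treating the corollary as a straightforward consequence of Theorem~\ref{thm:non_existence}). Two minor remarks: the Seifert-surface independence you flag as the main obstacle is in fact established in \cite{dirac_s3_paper1} (the operators for different Seifert surfaces of the same link are unitarily equivalent via exactly the gauge transformation you describe), so you may simply cite it; and since $\gamma$ is allowed to be a \emph{link}, $K$ may exceed $1$, but as all components carry the common flux $\alpha'$ one still has $M=1$ and $\sup_k\alpha_k=\alpha'$, so Theorem~\ref{thm:non_existence}(1) applies verbatim.
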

\begin{rem}
	We can use the estimate~\ref{bound_defect_s_a_cauchy_op} to give a (larger, hence weaker) bound depending on the angle between Hardy spaces.
	Let $\theta(\gamma)\in(0,\tfrac{\pi}{2}]$ be the supremum of $\theta_{\Sig}$ over the set $E_\Sig(\gamma)$. Then if
	\[
	0<\alpha<\widetilde{\alpha}(\gamma):=\frac{1}{\pi}\mathrm{arccot}\big(\frac{1}{2}\cot(\theta(\gamma)) \big),
	\]
	then the kernel of $\cD_{2\pi\alpha'[S]}^{(-)}$ is trivial, where $S$ is any Seifert surface for $\gamma$.
\end{rem}

\section{Technical results \& preparations for the proofs of Theorems~\ref{thm:conv_dirac}~\&~\ref{thm:cont_homot}}\label{sec:tech_result}

The first part of this section recalls the definition of different curvatures, needed later.
The second part of this section is devoted to the proof of some technical results on the 
approximate Cartesian coordinates $(s,\bu)$
which will be used in the proof of Theorem~\ref{thm:conv_dirac}, especially the expression of the free Dirac operator in these coordinates 
(Proposition ~\ref{prop:expr_free_dirac}).
The last part contains convergence results for two-dimensional Dirac operators with singular magnetic fields, 
and the extension to a model case for singular three-dimensional Dirac operators.
The latter will be of great importance for the proof of the main theorems.

\subsection{Curvatures on $\Sig$}

Recall $\Sig\subset \S^3$ is a smooth closed oriented surface on $\S^3$.
The curvature tensor \cite{Spivakvol2}*{Chapter 6}, the mean curvature, 
the extrinsic curvature (in $\S^3$) and the intrinsic curvature of $\Sig$ 
are denoted by 
$$
	R_{\Sig},H_{\Sig},K_{\Sig}^{\ext},K_{\Sig}^{\Int}.
$$	
We recall that the shape operator of $\Sig$ (in $\S^3$) is the tensor defined by:
\[
	\begin{array}{rcl}
		\rT\Sig&\longrightarrow& \rT\Sig\\
		(\bp,X(\bp))&\mapsto& \big(\bp,-\nabla_{X(\bp)}\bN_{\Sig}(\bp)\big).
	\end{array}
\]
The extrinsic curvature $K_{\Sig}^{\ext}$ and the mean curvature $H_{\Sig}$ 
correspond to the determinant and to half the trace of the shape operator. 
The intrinsic curvature $K_{\Sig}^{\Int}$ equals the number:
\[
	K_{\Sig}^{\Int}(\bp)=g_3\big(R_{\Sig}(\be_1,\be_2)\be_2,\be_1\big)(\bp)
\]
where $(\be_1(\bp),\be_2(\bp))$ is an orthonormal basis of $\rT_{\bp}\Sig$.
The two curvatures are related by \cite{Spivakvol4}*{Chapter 7}
\[
	K_{\Sig}^{\Int}=K_{\Sig}^{\ext}+K_{\S^3}^{0},
\]
where $K_{\S^3}^{0}$ is the (constant) scalar curvature of $\S^3$.
We write:
\begin{equation}\label{eq:def_a_i_b_j}
	\left\{
		\begin{array}{rcl}
			\nabla_{\bT_{\Sig}}\bN_{\Sig}&=:&a_0\bT_{\Sig}+b_0\bS_{\Sig},\\
			\nabla_{\bS_{\Sig}}\bN_{\Sig}&=:&a_1\bT_{\Sig}+b_1\bS_{\Sig}.
		\end{array}
	\right.
\end{equation}

\subsection{The Coordinates $(s,\bu)$ and the frame $(\bT_{\Sig},\bS_{\Sig},\bN_{\Sig})$}

The main goal of this section is to write down the free Dirac operator in the local coordinates $(s,\bu)$
(Proposition~\ref{prop:expr_free_dirac}).

\begin{lemma}\label{lem:deriv_N}
	Let $(v,w)\in U\subset \R^2\mapsto \phi(v,w)\in \Sig $ be a local chart, then we have
	$$
	\left\{
		\begin{array}{l}
		(\partial_v\bN_{\Sig})(\phi(v,w)) = \big(\nabla_{\partial_v \phi}\bN_{\Sig}\big)(\phi(v,w)),\\\\
		(\partial_w\bN_{\Sig})(\phi(v,w)) = \big(\nabla_{\partial_w \phi}\bN_{\Sig}\big)(\phi(v,w)).
		\end{array}
	\right.
	$$
\end{lemma}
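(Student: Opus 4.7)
The plan is to first disambiguate the notation. With $\S^3$ realized as the unit sphere in $\C^2\cong\R^4$ (cf.\ the chart~\eqref{eq:cart_coord}), the field $\bN_{\Sig}$ takes its values in the ambient $\R^4$, and I will interpret $\partial_v\bN_{\Sig}$ as the componentwise Euclidean derivative of the $\R^4$-valued map $(v,w)\mapsto\bN_{\Sig}(\phi(v,w))$. Once this convention is fixed, the whole content of the lemma is to show that this ambient derivative already coincides with the Levi--Civita covariant derivative on $\S^3$.

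The main tool will be the Gauss formula for the embedding $\S^3\subset\R^4$. Since the outward unit normal to $\S^3$ at a point $p$ is $p$ itself, the Weingarten map of $\S^3$ is $-\mathrm{Id}$, and one obtains
\[
D_XY=\nabla_XY-\langle X,Y\rangle_{\R^4}\,p,
\]
for all $X\in\rT_p\S^3$ and all vector fields $Y$ on $\S^3$, where $D$ is the flat $\R^4$-connection. I will apply this with $X:=\partial_v\phi\in\rT_{\phi(v,w)}\Sig\subset\rT_{\phi(v,w)}\S^3$ and $Y:=\bN_{\Sig}$; combined with the chain rule $D_{\partial_v\phi}\bN_{\Sig}=\partial_v\bN_{\Sig}$, this gives
\[
\partial_v\bN_{\Sig}-\nabla_{\partial_v\phi}\bN_{\Sig}=-\langle\partial_v\phi,\bN_{\Sig}\rangle_{\R^4}\,\phi(v,w).
\]

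The last step will be to observe that this correction vanishes. Indeed, $\partial_v\phi$ is tangent to $\Sig$ at $\phi(v,w)$ because $\phi$ parametrizes $\Sig$, while $\bN_{\Sig}(\phi(v,w))$ is by definition the unit vector normal to $\Sig$ inside $\S^3$. Both vectors lie in $\rT_{\phi(v,w)}\S^3$, where the $\S^3$-metric and the $\R^4$-inner product agree, so their $\R^4$-pairing is zero. This yields $\partial_v\bN_{\Sig}=\nabla_{\partial_v\phi}\bN_{\Sig}$, and the identical reasoning applies verbatim to $\partial_w$.

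I do not anticipate any real obstacle: the only potential pitfall is the interpretation of $\partial_v\bN_{\Sig}$, and once that is pinned down by the ambient embedding, the identity reduces to the vanishing of a single Euclidean inner product. In particular, the parallel-transport extension of $\bN_{\Sig}$ off $\Sig$ plays no role here, since all differentiations are performed at points of $\Sig$ along directions tangent to $\Sig$.
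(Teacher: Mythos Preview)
Your proof is correct and follows essentially the same line as the paper's. The paper does not quote the Gauss formula as you do but instead computes directly that $\langle\partial_v\bN_{\Sig},\phi(v,w)\rangle_{\C^2}=0$ by differentiating the identity $\langle\bN_{\Sig}(\phi),\phi\rangle_{\C^2}\equiv 0$ and using $\langle\bN_{\Sig},\partial_v\phi\rangle=0$; this is exactly your vanishing correction term, obtained from first principles rather than from the ready-made formula.
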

\begin{proof}
	We prove the result only for $v$, the one for $w$ follows by symmetry.
	It suffices to check that in $\C^2$ we have
	$$
		\cip{\partial_v\bN_{\Sig}(\phi(v,w))}{\phi(v,w)}_{\C^2}=0,
	$$
	where $\phi(v,w)\in\S^3\subset\C^2$.
	To do so, we freeze $w$ and differentiate the map
	$$
		v\mapsto \cip{\bN_{\Sig}(\phi(v,w))}{\phi(v,w)}_{\C^2}\equiv 0,
	$$
	which gives
	\begin{align*}
		0&=\cip{\partial_v\bN_{\Sig}(\phi(v,w))}{\phi(v,w)}_{\C^2}+\cip{\bN_{\Sig}(\phi(v,w))}{\partial_v \phi(v,w)}_{\C^2}\\
		&=\cip{\partial_v\bN_{\Sig}(\phi(v,w))}{\phi(v,w)}_{\C^2}+g_3(\bN_{\Sig},\partial_v \bp)(v,w),\\
		&=\cip{\partial_v\bN_{\Sig}(\phi(v,w))}{\phi(v,w)}_{\C^2}.
	\end{align*}
\end{proof}
Now we give the formula for the pushforward of vector fields through $F_{\gamma}$ defined in \eqref{eq:cart_coord}.
\begin{proposition}\label{prop:push_forward}
	Fix $(s,\bu)\in\dom(F_\gamma)$, and let $y_T(s,\cdot)$ be the solution to
	\begin{equation}\label{eq:system_ODE}
		\left\{
		\begin{array}{l}
			(\partial_{u_1}^2y_{T})(s,u_1)+K_{\Sig}^{\Int}(c(s,u_1))y_{T}(s,u_1)=0,\\
			y_{T}(s,0)=1\quad\&\quad (\partial_{u_1}y_T)(s,0)=-\kappa_g(s).
		\end{array}
		\right.
	\end{equation}
	Then one has
	$$
		\left\{
		\begin{array}{rcl}
			(F_\gamma)_*(\partial_s)(F_\gamma(s,\bu))&
			=&y_T\big[\cos(u_2)\bT_{\Sig}+\sin(u_2)\nabla_{\bT_{\Sig}}\bN_{\Sig} \big](c(s,u_1)),\\
			(F_\gamma)_*(\partial_{u_1})(F_\gamma(s,\bu))&=&(\cos(u_2)\bS_{\Sig}+\sin(u_2)\nabla_{\bS_{\Sig}}\bN_{\Sig} )(c(s,u_1)),\\
			(F_\gamma)_*(\partial_{u_2})(F_\gamma(s,\bu))&=&-\sin(u_2)c(s,u_1)+\cos(u_2)\bN_{\Sig}(c(s,u_1)).
		\end{array}
		\right.
	$$
	More generally, let $\phi:(v,w)\mapsto \phi(v,w)$ be a local chart of $\Sig$. 
	Writing $F_{\phi}$ for the corresponding local chart on $\S^3$,
	$$
		F_{\phi}:(v,w,u_2)\mapsto \cos(u_2)\phi(v,w)+\sin(u_2)\bN_{\Sig}(\phi(v,w)),
	$$
	we have
	\[
		\left\{
		\begin{array}{ccl}
			\big(F_{\phi}\big)_*(\partial_v)&=&\cos(u_2)\partial_v \phi+\sin(u_2)\nabla_{\partial_v \phi}\bN_{\Sig}(\phi),\\
			\big(F_{\phi}\big)_*(\partial_w)&=&\cos(u_2)\partial_w \phi+\sin(u_2)\nabla_{\partial_w \phi}\bN_{\Sig}(\phi),\\
			\big(F_{\phi}\big)_*(\partial_{u_2})&=&-\sin(u_2) \phi+\cos(u_2)\bN_{\Sig}(\phi).
		\end{array}
		\right.
	\]
\end{proposition}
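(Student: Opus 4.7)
The plan is to compute each pushforward by differentiating $F_\gamma$ (resp.\ $F_\phi$) directly in the ambient space $\C^2 \supset \S^3$, and then to reinterpret the resulting vectors via Lemma~\ref{lem:deriv_N} and the definitions of the Seifert frame. The formula for $(F_\gamma)_*(\partial_{u_2})$ is obtained simply by freezing $(s,u_1)$ and differentiating the explicit $\C^2$-expression \eqref{eq:cart_coord} in $u_2$; no geometric identification is needed. For $(F_\gamma)_*(\partial_{u_1})$, freezing $(s,u_2)$ gives
\[
\partial_{u_1} F_\gamma \;=\; \cos(u_2)\,\partial_{u_1} c(s,u_1) \;+\; \sin(u_2)\,\partial_{u_1}\bigl(\bN_{\Sig}\circ c\bigr)(s,u_1),
\]
and I would then use that $c(s,\cdot)$ is by construction the $\Sig$-geodesic with initial velocity $\bS(s)$, so $\partial_{u_1} c = \bS_{\Sig}\circ c$ (the $\nabla^{\Sig}$-parallel transport of $\bS(s)$), while Lemma~\ref{lem:deriv_N} identifies the ambient derivative $\partial_{u_1}(\bN_{\Sig}\circ c)$ with $\nabla_{\bS_{\Sig}}\bN_{\Sig}$.

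The main work lies in computing $(F_\gamma)_*(\partial_s)$. Freezing $(u_1,u_2)$, it reduces to identifying the vector field $J(u_1) := \partial_s c(s,u_1)$, seen as a tangent vector to $\Sig$ at $c(s,u_1)$. Since $c(s,\cdot)$ is a smooth one-parameter family of $\Sig$-geodesics, $J$ is a Jacobi field along the geodesic $u_1\mapsto c(s,u_1)$ in $\Sig$. Its initial value is $J(0)=\gamma'(s)=\bT(s)$, and exchanging the order of covariant differentiation gives the covariant initial derivative $\tfrac{D^{\Sig} J}{du_1}(0) = \nabla^{\Sig}_{\bT(s)}\bS = -\kappa_g(s)\bT(s)$, the last equality following from $\nabla^{\Sig}_{\bT}\bT = \kappa_g\bS$ together with $\langle\bS,\bT\rangle_{\Sig}\equiv 0$. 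Making the ansatz $J(u_1) = y_T(s,u_1)\,\bT_{\Sig}(c(s,u_1))$ with $\bT_{\Sig}$ $\nabla^{\Sig}$-parallel along $c(s,\cdot)$, these initial data translate exactly into $y_T(s,0)=1$ and $\partial_{u_1}y_T(s,0) = -\kappa_g(s)$. Since $\Sig$ is two-dimensional, the parallel orthonormal frame $(\bT_{\Sig},\bS_{\Sig})$ diagonalises the Riemann tensor as $R^{\Sig}(\bT_{\Sig},\bS_{\Sig})\bS_{\Sig} = K_{\Sig}^{\Int}\,\bT_{\Sig}$, so the Jacobi equation collapses to the scalar ODE \eqref{eq:system_ODE}. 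Once $J = y_T\,\bT_{\Sig}$ is established, applying Lemma~\ref{lem:deriv_N} to $\partial_s(\bN_{\Sig}\circ c)$ yields $\nabla_J\bN_{\Sig} = y_T\,\nabla_{\bT_{\Sig}}\bN_{\Sig}$, and reassembling these pieces gives the claimed formula.

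The general statement for $(F_\phi)_*$ with $\phi$ an arbitrary local chart of $\Sig$ is much easier: since $\partial_v\phi$ and $\partial_w\phi$ already \emph{are} the tangent vectors of interest on $\Sig$, no Jacobi-field analysis is needed. One differentiates $F_\phi(v,w,u_2)=\cos(u_2)\phi+\sin(u_2)\bN_{\Sig}(\phi)$ coordinate by coordinate in $\C^2$ and invokes Lemma~\ref{lem:deriv_N} directly on the $\bN_{\Sig}$-terms.

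The main conceptual obstacle I anticipate is the Jacobi-field step for $\partial_s$: one must keep three distinct connections carefully separated — the ambient derivative in $\C^2$, the intrinsic Levi-Civita connection $\nabla^{\Sig}$, and the Levi-Civita connection $\nabla$ on $\S^3$ — and verify that the shape operator of $\Sig$ in $\S^3$ (which produces the $\nabla_{\bT_{\Sig}}\bN_{\Sig}$ term) is extracted from the ambient derivative precisely by Lemma~\ref{lem:deriv_N}. The appearance of the \emph{intrinsic} curvature $K_{\Sig}^{\Int}$ (rather than the extrinsic one) in \eqref{eq:system_ODE} is the signature of this reduction to a two-dimensional Jacobi equation and is a direct consequence of $c(s,\cdot)$ being a $\Sig$-geodesic.
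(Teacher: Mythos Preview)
Your proposal is correct and follows essentially the same approach as the paper's proof: direct ambient differentiation combined with Lemma~\ref{lem:deriv_N} for the $\partial_{u_1}$ and $\partial_{u_2}$ cases, and the Jacobi-field argument for $\partial_s$ with the initial covariant derivative computed via torsion-freeness as $\nabla^{\Sig}_{\bT}\bS=-\kappa_g\bT$. The paper likewise leaves the $F_\phi$ case to the reader as the easy part.
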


\begin{rem}
	Remark that by definition,
	\[
		(F_\gamma)_*(\partial_{u_2})(F_\gamma(s,\bu))=\bN_{\Sig}(F_\gamma(s,\bu)).
	\]
\end{rem}
\begin{proof}[Proof of Proposition~\ref{prop:push_forward}]
	We only deal with $F_\gamma$, the case of $F_\phi$ is left to the reader.
	The computation of $(F_\gamma)_*(\partial_{u_2})(F_\gamma(s,\bu))$ is obvious.
	
	By definition we have
	$$
		(\partial_{u_1}c)(s,u)=\bS_{\Sig}(c(s,u)),
	$$
	and the computation of $(F_\gamma)_*(\partial_{u_1})(F_\gamma(s,\bu))$ follows easily.

	Observe that $c(s,u)$ defines a one-parameter family $(c(s,\cdot))_{s\in\T_\ell}$ of $\Sig$-geodesics.
	The vector field $J_{s_0}(u_1):=\partial_{s}c(s_0,u_1)$ constitutes a Jacobi 
	field (along $c(s_0,\cdot)$), and it satisfies the Jacobi equation \cite{Spivakvol4}*{Chapter~8}
	\[
		\frac{\mathrm{D}_{\Sig}^2}{(\d u_1)^2}J+R_{\Sig}(J,\bS_{\Sig})\bS_{\Sig}=0.
	\]
	Note that $J_{s_0}(0)$ is equal to $\bT(s)$ and since
	the Levi-Civita connection is torsion-free, we obtain
	\begin{align*}
		\frac{\mathrm{D}_{\Sig}}{\d u_1}J_{s_0}(0)
		&=\left.\Big(\frac{\mathrm{D}_{\Sig}}{\d u_1} \frac{\partial}{\partial s} c\Big)\right|_{(s,u_1)=(s_0,0)}
 		=\left.\Big(\frac{\mathrm{D}_{\Sig}}{\d s} \frac{\partial}{\partial u_1} c\Big)\right|_{(s,u_1)=(s_0,0)}\\
		&=\left.\Big(\frac{\mathrm{D}_{\Sig}}{\d s}\Big)\right|_{s=s_0} \bS(\gamma(s))
		=-\kappa_g(s_0)\bT(s_0).
	\end{align*}

	Thus we obtain $g_3(J_s(u_1),\bS_{\Sig}(c(s,u_1)))=0$, and the other scalar product
	$y_T(s,u_1):=g_3(J_s(u_1),\bT_{\Sig}(c(s,u_1)))$ satisfies \eqref{eq:system_ODE}. 
	This implies the formulas for $(F_\gamma)_*(\partial_s)(F_\gamma(s,\bu))$.
\end{proof}

By writing $(\bT_{\Sig},\bS_{\Sig})$ in terms of $(F_\gamma)_*(\partial_s),(F_\gamma)_*(\partial_{u_1})$,
we obtain the following proposition.
\begin{proposition}[Free Dirac operator]\label{prop:expr_free_dirac}
	If we write $X_{s}$, $X_{u_1}$, $X_{u_2}$ for the push forward 
	$(F_\gamma)_*(\partial_s),(F_\gamma)_*(\partial_{u_1})$ and 
	$(F_\gamma)_*(\partial_{u_2})$ respectively, we have the representation
	\begin{align}
		\bsigma(-i\nabla)
		&=-i\bsigma(\bT_{\Sig}^{\flat})X_s-i\bsigma(\bS_{\Sig}^{\flat})X_{u_1}-i\bsigma(\bN_{\Sig}^{\flat})X_{u_2}\nn\\
		&\quad+L_{\err}^{(s)}X_s+L_{\err}^{(u_1)}X_{u_1},
	\end{align}
	where
	\begin{equation}\label{eq:def_L_err}
	\left\{
		\begin{array}{l}
		L_{\err}^{(s)}=\frac{-i}{y_T\Cg}\bigg[\Big(\cos(u_2)-y_T\Cg+b_1\sin(u_2)\Big) \bsigma(\bT_{\Sig}^{\flat})
		-a_1\sin(u_2)\bsigma(\bS_{\Sig}^{\flat})\bigg],\\
		L_{\err}^{(u_1)}=\frac{-i}{\Cg}\bigg[-b_0\sin(u_2)\bsigma(\bT_{\Sig}^{\flat})+\Big(\cos(u_2)
		-\Cg+a_0\sin(u_2)\Big)\bsigma(\bS_{\Sig}^{\flat})\bigg].
		\end{array}
	\right.
	\end{equation}
	The functions $a_0,a_1,b_0,b_1$ are defined in \eqref{eq:def_a_i_b_j}, and $\Cg$ in \eqref{eq:def_Cg}.
\end{proposition}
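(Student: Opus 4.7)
The plan is to begin from the expression of the free Dirac operator in the orthonormal Seifert frame $(\bT_\Sig,\bS_\Sig,\bN_\Sig)$ and then substitute the inverse of the change of basis given by Proposition~\ref{prop:push_forward} to convert the derivations into the coordinate pushforwards $(X_s,X_{u_1},X_{u_2})$. By the paper's convention, $\nabla$ acting on spinors is the trivial connection on the trivialized bundle $\Psi_0=\S^3\times\C^2$, so $\nabla_X\psi$ is just the componentwise directional derivative. In any orthonormal local frame $(e_1,e_2,e_3)$ this immediately yields $\bsigma(-i\nabla\psi)=-i\sum_{\alpha}\bsigma(e_\alpha^\flat)\,e_\alpha(\psi)$, and applied in the Seifert frame it gives the starting identity
\[
\bsigma(-i\nabla)=-i\bsigma(\bT_\Sig^\flat)\bT_\Sig-i\bsigma(\bS_\Sig^\flat)\bS_\Sig-i\bsigma(\bN_\Sig^\flat)\bN_\Sig.
\]

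From Proposition~\ref{prop:push_forward} combined with \eqref{eq:def_a_i_b_j}, one reads $X_{u_2}=\bN_\Sig$ and
\[
X_s/y_T = A\,\bT_\Sig+B\,\bS_\Sig,\qquad X_{u_1}=C\,\bT_\Sig+D\,\bS_\Sig,
\]
where $A:=\cos(u_2)+a_0\sin(u_2)$, $B:=b_0\sin(u_2)$, $C:=a_1\sin(u_2)$, $D:=\cos(u_2)+b_1\sin(u_2)$. The scalar $\Cg$ from \eqref{eq:def_Cg} is (up to a normalization) precisely the determinant $AD-BC$ of this $2\times 2$ system, so inverting expresses $\bT_\Sig$ and $\bS_\Sig$ as explicit $\Cg^{-1}$–combinations of $X_s/y_T$ and $X_{u_1}$, namely $\bT_\Sig=(DX_s/y_T - B X_{u_1})/\Cg$ and $\bS_\Sig=(-CX_s/y_T + A X_{u_1})/\Cg$.

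Substituting these inversions into the starting identity and collecting terms by the three coordinate derivatives produces an equality of the form $-i\bsigma(\bT_\Sig^\flat)X_s-i\bsigma(\bS_\Sig^\flat)X_{u_1}-i\bsigma(\bN_\Sig^\flat)X_{u_2}+L_{\err}^{(s)}X_s+L_{\err}^{(u_1)}X_{u_1}$, where no correction appears in front of $X_{u_2}$ since $\bN_\Sig=X_{u_2}$ already. The error operators $L_{\err}^{(s)}$ and $L_{\err}^{(u_1)}$ are obtained by computing, for each of $X_s$ and $X_{u_1}$, the difference between the exact combination of $\bsigma(\bT_\Sig^\flat)$ and $\bsigma(\bS_\Sig^\flat)$ that results from the inversion and the expected leading term $-i\bsigma(\bT_\Sig^\flat)$ or $-i\bsigma(\bS_\Sig^\flat)$; after reinserting the values of $A,B,C,D$ in terms of $a_0,a_1,b_0,b_1,\cos(u_2),\sin(u_2)$, these corrections coincide with \eqref{eq:def_L_err}.

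The calculation is essentially linear algebra once the starting identity in the orthonormal Seifert frame is granted, and the latter is nearly a definition in this paper's setting. The only mildly delicate point is to be vigilant with the signs and the precise placement of the factor $y_T$ throughout the inversion, so that the principal symbols come out as the symmetric expressions $-i\bsigma(\bT_\Sig^\flat),-i\bsigma(\bS_\Sig^\flat),-i\bsigma(\bN_\Sig^\flat)$ paired respectively with $X_s,X_{u_1},X_{u_2}$ and that the remainders collapse exactly to the form \eqref{eq:def_L_err}.
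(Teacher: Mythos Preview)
Your proposal is correct and follows exactly the approach the paper indicates: the paper's entire argument is the single sentence ``By writing $(\bT_{\Sig},\bS_{\Sig})$ in terms of $(F_\gamma)_*(\partial_s),(F_\gamma)_*(\partial_{u_1})$, we obtain the following proposition,'' and you have carried out precisely that inversion, including the verification that $AD-BC=\Cg$ and that the resulting corrections reproduce \eqref{eq:def_L_err}.
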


For each fixed $u_2$ (with $|u_2|$ small enough), the function $\phi_{(u_2)}:=F_{\phi}(\cdot, u_2)$ defines a local chart
of the sheaf $\Sig_{u_2}$. Thanks to Proposition~\ref{prop:push_forward}, we obtain the following corollary.


\begin{corollary}[Computation of the volume form]\label{coro:volume_form}
	Let $\phi$ be a local chart of $\Sig$, $F_{\phi}$ and $\phi_{(u_2)}$ the 
	associated local charts of $B_{\eps}[\Sig]$ and $\Sig_{u_2}$. 
	Then we have
	\begin{align*}
		F_{\phi}^*\vol_{g_3}
		&=\big(F_{\phi_{(u_2)}}^* \vol_{\Sig_{u_2}}\big)\wedge \d u_2\\
		&= C_\phi(v,w,u_2)\big(F_{\phi}^* \vol_{\Sig}\big)\wedge \d u_2,
	\end{align*}
	with
	\begin{equation*}
		C_\phi(v,w,u_2)=\cos(u_2)^2-\sin(2u_2)H_{\Sig}(\phi(v,w))
		+\sin(u_2)^2K_{\Sig}^{\ext}(\phi(v,w)) .
	\end{equation*}
\end{corollary}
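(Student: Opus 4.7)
The plan is to compute the three pushforward vectors $(F_\phi)_*(\partial_v)$, $(F_\phi)_*(\partial_w)$ and $(F_\phi)_*(\partial_{u_2})$ via Proposition~\ref{prop:push_forward}, show that the first two span $T\Sig_{u_2}$ at $F_\phi(v,w,u_2)$ while the third is a \emph{unit} normal to that leaf, and then express the area scaling between $\Sig$ and $\Sig_{u_2}$ as the determinant of an explicit linear endomorphism of $T_{\phi(v,w)}\Sig$.

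First, from Proposition~\ref{prop:push_forward} one reads off
\[
	v_i := (F_\phi)_*(\partial_i) = \cos(u_2)\, e_i + \sin(u_2)\, W(e_i),\quad i\in\{v,w\},
\]
where $e_v := \partial_v\phi$, $e_w := \partial_w\phi$ and $W(X) := \nabla_X \bN_{\Sig}$ (i.e.\ minus the shape operator in the paper's sign convention). Similarly, $(F_\phi)_*(\partial_{u_2}) = -\sin(u_2)\phi + \cos(u_2)\bN_{\Sig}(\phi)$. Using $|\phi|_{\C^2}=|\bN_{\Sig}|_{\C^2}=1$, $\langle \phi,\bN_{\Sig}\rangle_{\C^2}=0$, together with $W(X)\in T_\phi\Sig$ (obtained by differentiating $|\bN_{\Sig}|^2\equiv 1$), a short computation in $\C^2$ shows that $(F_\phi)_*(\partial_{u_2})$ has unit length and is $g_3$-orthogonal to both $v_v$ and $v_w$; it therefore coincides with the unit normal $\bN_{\Sig_{u_2}}$ of the leaf $\Sig_{u_2}$. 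This yields the first equality
\[
	F_\phi^*\vol_{g_3} = \big(F_{\phi_{(u_2)}}^*\vol_{\Sig_{u_2}}\big)\wedge \d u_2.
\]

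For the second equality, I view $A := \cos(u_2)\, I + \sin(u_2)\, W$ as an endomorphism of the $2$-plane $T_{\phi(v,w)}\Sig$, so that $v_i = A(e_i)$ for $i\in\{v,w\}$. Since $e_v, e_w, W(e_v), W(e_w)$ all lie in this $2$-plane, the ambient $\C^2$-inner product agrees there with $g_3$, and the Gram matrix of $(v_v,v_w)$ is $A^{\top} G_0 A$, where $G_0$ denotes the Gram matrix of $(e_v,e_w)$. Taking square roots of determinants, and noting that $\det A$ stays positive for $|u_2|$ small enough so that orientations agree, one obtains
\[
	F_{\phi_{(u_2)}}^*\vol_{\Sig_{u_2}} = (\det A)\cdot F_\phi^*\vol_{\Sig}.
\]

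It remains to evaluate $\det A$. The standard identity $\det(aI+bM)=a^2+ab\,\tr M + b^2\det M$ in dimension two yields
\[
	\det A = \cos(u_2)^2 + \tfrac{1}{2}\sin(2u_2)\,\tr W + \sin(u_2)^2\,\det W.
\]
Since the paper's shape operator is $-W$, one has $H_{\Sig}=-\tfrac{1}{2}\tr W$ and $K_{\Sig}^{\ext}=\det(-W)=\det W$ (the last equality using that the dimension is $2$), so $\det A = C_\phi(v,w,u_2)$ as claimed. The only genuinely delicate point is bookkeeping the sign convention between $W$ and the shape operator of the paper; once that is settled, the argument reduces to the orthogonality computation in $\C^2$ plus a $2\times 2$ determinant identity.
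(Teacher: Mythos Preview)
Your proof is correct and follows precisely the route the paper indicates: the corollary is stated without explicit proof, merely as a consequence of Proposition~\ref{prop:push_forward}, and your computation of the pushforwards followed by the $2\times 2$ determinant identity is the natural way to carry this out. The orthogonality check in $\C^2$, the identification of $(F_\phi)_*(\partial_{u_2})$ with the unit normal $\bN_{\Sig}$ on $\Sig_{u_2}$ (cf.\ the remark after Proposition~\ref{prop:push_forward}), and the sign bookkeeping for $H_\Sig$ and $K_\Sig^{\ext}$ are all handled correctly.
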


Now we apply this result to $F_\gamma$. 
As above, we define $C_\gamma(s,u_1,u_2)$ by the following formula:
\begin{equation}\label{eq:def_Cg}
	\Cg(s,u_1,u_2):=\cos(u_2)^2-\sin(2u_2)H_{\Sig}(c(s,u_1))
	+\sin(u_2)^2K_{\Sig}^{\ext}(c(s,u_1)) .
\end{equation}
From Proposition~\ref{prop:push_forward}, it is clear that we have
\[
	y_T(s,u_1)\Cg(s,u_1,u_2)\d s\wedge \d u_1=\bT_{\Sig}^{\flat}\wedge \bS_{\Sig}^{\flat},
\]
where $s,u_1$ and $u_2$ are seen as coordinates, and that
\begin{align*}
	\vol_{g_3}&=\bT_{\Sig}^{\flat}\wedge \bS_{\Sig}^{\flat}\wedge\bN_{\Sig}^{\flat}
	=y_T(s,u_1)\Cg(s,u_1,u_2)\d s\wedge \d u_1\wedge \d u_2.
\end{align*}

\subsection{Singular two-dimensional Dirac operators and model case}
\ \newline
\noindent Proposition~\ref{prop:formule_smooth_mag} leads us to study the elliptic operators 
$$
	\cD_{\bA_{f}^{\R^2}}:=\sigma\cdot (-i\nabla^{\R^2}+\bA_{f}^{\R^2})
$$ 
on $L^2(\R^2)^2$, where $f\in\sD([-\delta,\delta]^2,[0,+\infty))$ is a probability density and
$$
	\bA_{f}^{\R^2}(\bu)=2\pi\alpha I_f(\bu)\begin{pmatrix} 0\\ 1\end{pmatrix}\in\R^2
$$
with associated magnetic field $\bB_{f}^{\R^2}(\bu)=2\pi\alpha f(\bu)$.

Because it will be important for our convergence results, we quickly 
address the different choices for the magnetic gauge potential.

We will use the identification $\R^2\simeq \C$, and
$u_1+iu_2=z=r e^{i\theta}$, and will write indifferently $\C$ or $\R^2$. 
We also emphasize that in this paper $\R_+$ denotes $[0,+\infty)$.

\subsubsection{Gauge choice}
Particular to two dimensions, we can also choose a \emph{scalar} gauge, which
is defined as follows. 
The scalar $h_{f}$ is the function
$$
	h_f:=\frac{\log|\cdot|}{2\pi}*\bB_{f}^{\R^2},
$$
and $\bal_{f,h} :=\overline{\alpha}_f\d z+\alpha_f\d\overline{z}$ is the magnetic gauge
potential
$$
	\alpha_f:=i\overline{\partial_z}h_f=\frac{i}{2}\big(\partial_{u_1}+i\partial_{u_2}\big)h_f.
$$
(To keep things simple, we identify vectors and one-forms in $\R^2$.)

On the other hand, as on $\S^3$, we may use singular magnetic gauge potentials
to study our operators.
Let $\log$ be the extension of the logarithm on $\C\setminus \R_+$,
cut along $\theta=0\!\!\mod 2\pi$. 
For $\theta_0\in\R/(2\pi\Z)$, let $L_{\theta_0}$ be the ray
\begin{equation}\label{eq:ray}
	L_{\theta_0}:=\R_+\begin{pmatrix} \cos(\theta_0)\\ \sin(\theta_0)\end{pmatrix}
\end{equation}
(with \emph{positive} orientation). We can write
\begin{equation*}
	\bA_{f}^{\R^2}=2\pi\alpha \dint_{\bu\in \R^2}f(\bu)[\bu+L_{0}]\d\bu,
\end{equation*}
where $[\bu+L_{0}]$ denotes the one-current defined by the oriented half-line $\bu+L_0$.
By applying it to smooth vector fields, we immediately see that it coincides with the smooth vector
(as a vector-valued distribution):
\begin{equation}\label{eq:decomp_A_f}
	\bA_{f}^{\R^2}(\bu)=2\pi\alpha\dint_{-\infty}^{u_1}f(u,u_2)\d u\begin{pmatrix} 0\\ 1\end{pmatrix}.
\end{equation}

The two gauges are related as follows. By direct computation the scalar gauge for the point-like magnetic field
$2\pi\alpha\delta_0$ is $\alpha\nabla^{\R^2}\theta$. By decomposing $\bB_{f}^{\R^2}$ as an integral over Dirac points:
\[
 \bB_{f}^{\R^2}=\int_{\bu}2\pi\alpha f(\bu)[\delta_{\bu}]\d \bu,
\]
we get by linearity
\begin{equation}\label{eq:pre_comput}
	\bA_{f,h}(\bv_0)=\alpha\dint_{\bu} f(\bu)\nabla_{\bv}^{\R^2} \theta_{\bu}(\bv_0)\d\bu,
\end{equation}
where  $\theta_{\bu}\in[0,2\pi)$ is the angle defined by $\theta_{\bu}(\bv):=\theta(\bv-\bu)$:
$$
	e^{i\theta_{\bu}(\bv)}:=\frac{\bv-\bu}{|\bv-\bu|}.
$$
The gradient of $\theta_{\bu}$ is:
\begin{align*}
	\nabla^{\R^2}\theta_{\bu}(\bv)
	=\nabla_{\bv}^{\R^2}\theta(\bv-\bu)=\frac{1}{(v_1-u_1)^2
	+(v_2-u_2)^2}\begin{pmatrix}-(v_2-u_2)\\ v_1-u_1 \end{pmatrix}.
\end{align*}
A change of variables in the integral in \eqref{eq:pre_comput} yields
\begin{align}\label{eq:comput_scalar_gauge}
	\bA_{f,h}(\bv)&=\alpha\dint_{\bu}f(\bv-\bu)\nabla^{\R^2}\theta(\bu)\d\bu,
\end{align}
and the function $\bA_{f,h}(\bv)$ is smooth by the theorem of differentiation under the 
integral sign. Hence we obtain the following proposition.
\begin{proposition}\label{prop:gauge_transf_formula}
	The two magnetic gauge potentials $\bA_{f}^{\R^2}$, $\bA_{f,h}$ 
	in \eqref{eq:decomp_A_f} and \eqref{eq:comput_scalar_gauge} are related via
	\[
		\bA_{f,h}(\bv)-\bA_{f}^{\R^2}(\bv)
		=\alpha\nabla^{\R^2}\Big(\dint_{\bu}\d\bu f(\bu)\theta_{\bu}(\bv)\Big)=:\alpha\nabla^{\R^2} \zeta_f(\bv),
	\]
	where $\zeta_f(\bv)$ is the smooth function
	\begin{equation}\label{eq:sing_gauge}
		\zeta_f(\bv)=\int_{\bu}f(\bv-\bu)\theta(\bu)\d\bu.
	\end{equation}
	Furthermore, for a smooth and compactly supported mollifier:
	 \[\phi_{\delta}(\bu)=\delta^{-2}\phi_1(\bu/\delta),\ \phi_1\in\sD(\R^2,[0,+\infty))\ \&\ \delta>0,
	 \]
	the gauge transformation $\exp(i\alpha\zeta_{\phi_\delta})$ converges to $\exp(i\alpha\theta)$ (cut along $\R_+$),
	pointwise on $\C\setminus \R_+$ and uniformly in $\C\setminus B_\eps[\R_+]$ for any $\eps>0$. 
	But
	\begin{equation}\label{eq:discrepancy_gauge}
		\liminf_{\delta\to 0^+}\norm{\exp(i\alpha\theta)-\exp(i\zeta_{\phi_\delta})}_{L^\infty}>0.
	\end{equation}
\end{proposition}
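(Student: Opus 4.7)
The plan is to dispatch the three claims in turn. For the identity $\bA_{f,h}-\bA_f^{\R^2}=\alpha\nabla^{\R^2}\zeta_f$, I will first note that $\zeta_f = f\ast\theta$ is smooth because $f\in\sD(\R^2)$ and $\theta$ is locally bounded, and then transfer the gradient onto $\theta$ in the distributional sense. The auxiliary identity to establish is
\[
\nabla^{\R^2}\theta \;=\; \widetilde{\nabla}\theta \;-\; 2\pi\,\chi_{\R_+}(u_1)\,\delta(u_2)\begin{pmatrix}0\\1\end{pmatrix}
\quad\text{in }\sD'(\R^2),
\]
where $\widetilde{\nabla}\theta$ is the classical (a.e.) gradient and the singular term records the jump $\theta(u_1,0^+)-\theta(u_1,0^-)=-2\pi$ of $\theta$ across the cut $\R_+$; this is a direct integration by parts against a test function. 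Convolving with $f$, the smooth part reproduces $\alpha^{-1}\bA_{f,h}$ by \eqref{eq:comput_scalar_gauge}, while the jump term, after a change of variable in the convolution, reproduces $\alpha^{-1}\bA_f^{\R^2}$ in the form \eqref{eq:decomp_A_f}, yielding the claimed identity.

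For the convergence of $\exp(i\alpha\zeta_{\phi_\delta})$ to $\exp(i\alpha\theta)$ on $\C\setminus B_\eps[\R_+]$, I will use that $|\widetilde\nabla\theta(\bu)|=1/|\bu|$, so that on $\C\setminus B_{\eps/2}[\R_+]$ one has $|\widetilde\nabla\theta|\le 2/\eps$. For $\delta<\eps/2$ and $\bv\in\C\setminus B_\eps[\R_+]$, the ball $B_\delta(\bv)$ sits in $\C\setminus B_{\eps/2}[\R_+]$, so a standard mollification estimate gives
\[
|\zeta_{\phi_\delta}(\bv)-\theta(\bv)|\le\int\phi_\delta(\bv-\bu)\,|\theta(\bu)-\theta(\bv)|\,\d\bu\le 2\delta/\eps.
\]
Composing with the $|\alpha|$-Lipschitz map $x\mapsto e^{i\alpha x}$ delivers uniform convergence on $\C\setminus B_\eps[\R_+]$, and pointwise convergence on $\C\setminus\R_+$ is immediate by taking $\eps<\dist(\bv,\R_+)$.

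The non-vanishing of the $L^\infty$-liminf rests on a simple regularity mismatch: $\zeta_{\phi_\delta}$ is smooth on all of $\C$, so $\exp(i\alpha\zeta_{\phi_\delta})$ is continuous, whereas $\exp(i\alpha\theta)$ carries a jump of modulus $|1-e^{2\pi i\alpha}|=2|\sin(\pi\alpha)|$ across $\R_+$. Fixing any $\bp_0=(x_0,0)$ with $x_0>0$ and letting $c_\delta:=\exp(i\alpha\zeta_{\phi_\delta}(\bp_0))$, the triangle inequality applied to the two one-sided limits of $e^{i\alpha\theta}$ at $\bp_0$ forces
\[
\max\bigl(|1-c_\delta|,\;|e^{2\pi i\alpha}-c_\delta|\bigr)\ge|\sin(\pi\alpha)|;
\]
by continuity of $e^{i\alpha\zeta_{\phi_\delta}}$ at $\bp_0$, this lower bound persists on a positive-measure half-disk on the appropriate side of $\bp_0$, uniformly in $\delta$, giving $\liminf_{\delta\to 0^+}\|\exp(i\alpha\theta)-\exp(i\alpha\zeta_{\phi_\delta})\|_{L^\infty}\ge|\sin(\pi\alpha)|>0$.

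The main obstacle is the bookkeeping in the first step: getting the sign, direction, and normalization of the distributional jump contribution to $\nabla^{\R^2}\theta$ exactly right so that its convolution with $f$ matches the singular-gauge formula \eqref{eq:decomp_A_f}, and reconciling the fact that $(f\ast\widetilde\nabla\theta)$ (a classical convolution) differs from $\nabla(f\ast\theta)$ precisely by that jump. Once this jump formula is secured, the remaining two claims are essentially immediate consequences of standard convolution estimates together with the discontinuity of the argument function across its branch cut.
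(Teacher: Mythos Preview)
Your proof is correct and follows essentially the same line as the paper, which merely asserts the first identity as a consequence of the computations preceding the proposition and dismisses \eqref{eq:discrepancy_gauge} with the remark that $\exp(i\alpha\zeta_{\phi_\delta})$ is continuous while $\exp(i\alpha\theta)$ is not. Your distributional jump formula for $\nabla^{\R^2}\theta$ is precisely what is needed to make the first step rigorous, and your Lipschitz estimate for the uniform convergence and triangle-inequality argument for the liminf are the natural ways to fill in the details the paper omits; the only cosmetic slip is that since $\supp\phi_1\subset(-1,1)^2\subset B_{\sqrt{2}}(0)$, your bounds should carry an extra factor of $\sqrt{2}$ (take $\delta<\eps/(2\sqrt{2})$ and replace $2\delta/\eps$ by $2\sqrt{2}\,\delta/\eps$), which does not affect the conclusion.
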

The discrepancy~\eqref{eq:discrepancy_gauge} is easy to see, since $\exp(i\zeta_{\phi_\delta})$ is continuous while $\exp(i\alpha\theta)$
is not. We recall that the angle function $\theta$ in the proposition is cut along $\theta=0$.

\subsubsection{Behavior under scaling of the density}
	Pick a fixed probability density $\phi$ in $\sD(\R^2)$, and consider the associated mollifier $(\phi_\delta)_{\delta>0}$
	given by the formula:
	$
	\phi_\delta(\bu):=\delta^{-2}\phi\Big(\frac{\bu}{\delta}\Big). 
	$
	Using the integral representation~\eqref{eq:decomp_A_f}, we get:
	\begin{equation}\label{eq:scal_prop}
	\bA_{\phi_\delta}^{\R^2}(\bu)=\frac{1}{\delta}\bA_{\phi}^{\R^2}\Big(\frac{\bu}{\delta}\Big).
	\end{equation}
	The scalar gauge shares the same scaling property: by \eqref{eq:comput_scalar_gauge}, we have
	\[
	\bA_{\phi_\delta,h}(\bu)=\frac{1}{\delta}\bA_{\phi,h}\Big(\frac{\bu}{\delta}\Big).
	\]
	It follows from the equality $\nabla^{\R^2}\theta\big(\frac{\bu}{\delta}\big)=\delta\nabla^{\R^2}\theta(\bu)$.
\subsubsection{Local gauge transformation}\label{sec:local_gauge_away}
Consider the representation \eqref{eq:decomp_A_f} of $\bA_f^{\R^2}$. It has the same form as in Eq~\eqref{eq:formule_phase_Phi}. 
Thus for $u_1>\delta$, that is on the right-hand side
of $\supp\,f$, we have:
\begin{equation}\label{eq:local_gauge_away}
\bA_f(\bu)=2\pi\alpha\nabla^{\R^2}M_f(u_2),\ u_1>\delta,
\end{equation}
where $M_f$ is the (partially) integrated marginal~\eqref{eq:def_M_f}. 
We can gauge away the magnetic potential on this region
with the function $e^{i2\pi\alpha M_f(u_2)}$:
\[
\bsigma\cdot(-i\nabla^{\R^2})e^{iM_f(u_2)}\psi=e^{iM_f(u_2)}\bsigma\cdot(-i\nabla^{\R^2}+2\pi\alpha\nabla^{\R^2}M_f)\psi.
\]

\subsubsection{Convergence for a point-like field \& Tamura's result}
The point-like magnetic field $2\pi\alpha \delta_0$ 
will play a particular role in our analysis, and we quickly recapitulate
some results.

Recall $z = re^{i\theta} \in \C$; by a straight-forward computation one can verify that
\begin{equation*}
	\alpha\nabla^{\R^2}\theta=2\pi\alpha\dint_0^{2\pi}[L_{\theta_0}]\frac{\d\theta_0}{2\pi},
\end{equation*}
where $L_{\theta_0}$ is the ray in the $\theta_0$-direction \eqref{eq:ray}. We recall that it is the 
scalar gauge for the point-like magnetic field in the origin $2\pi\alpha\delta_0$. 
Let $\sigma\cdot(-i\nabla^{\R^2}+\alpha\nabla\theta)$ be a differential operator defined on $\sD(\C\setminus\{0\})^2$. 
We are interested in the self-adjoint  extension\footnote{We refer to \cite{Persson_dirac_2d} 
for the complete discussion on all self-adjoint extensions of this operator.} $\cD_{0,\alpha,h}^{(-)}$ defined through
$$
	\left\{
	\begin{array}{lcl}
		\dom\big(\cD_{0,\alpha,h}^{(-)}\big) &:=& \mathrm{clos}_{\cG}(\sD(\C\setminus\{0\})^2)
		\overset{\perp_{\mathcal{G}}}{\oplus}\C\begin{pmatrix}0\\ K_\alpha(r) \end{pmatrix}\\
		\cD_{0,\alpha,h}^{(-)} \psi &:=& \big[\sigma\cdot(-i\nabla^{\R^2}+\alpha\nabla\theta)\psi\big]_{|_{\C\setminus\{0\}}}, 
		\quad \psi \in \dom\big(\cD_{0,\alpha,h}^{(-)}\big).
	\end{array}
	\right.
$$
It is a natural question whether the operator in the radial gauge is the limit of a smooth elliptic operator.
Let $(\phi_{\delta})_{\delta >0}$ be a mollifier, and denote by
$$
	\cD_{\bA_{\phi_\delta,h}} := \sigma\cdot(-i\nabla^{\R^2}+\bA_{\phi_\delta,h}).
$$
The following theorem is the main result of \cite{Tamura}.

\begin{theorem}[Norm resolvent convergence]\label{thm:tamura}
	Let $(\phi_{\delta})_{\delta >0}$ be a mollifier, then
	$$
		\lim_{\delta \to 0^{+}} \big(\cD_{\bA_{\phi_\delta,h}} + i\big)^{-1}
		= \big(\cD_{0,\alpha,h}^{(-)} + i\big)^{-1} \textrm{ in } \mathcal{B}\big(L^2(\R^2)^2\big),\ 0<\alpha<1.
	$$
\end{theorem}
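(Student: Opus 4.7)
The plan is to combine a localization argument with rescaling and matched asymptotics in each angular momentum sector. The key geometric observation is that outside the disk $B_\delta$ the smoothed potential $\bA_{\phi_\delta,h}$ is curl-free with total enclosed flux $2\pi\alpha$, exactly as the singular potential $\alpha\nabla^{\R^2}\theta$; since $\R^2\setminus B_\delta$ has the topology of an annulus and both potentials carry the same flux, their difference is exact there, so the two operators are unitarily equivalent outside $B_\delta$ via a smooth gauge transformation. The convergence problem is therefore entirely localized to what happens inside the vanishing disk $B_\delta$.

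First I would reduce to the case where $\phi_1$ is radially symmetric. The resolvent depends continuously on $\phi_1$ in a suitable norm, and by the scaling~\eqref{eq:scal_prop} and an averaging-over-rotations argument, the limit is independent of that choice. With $\phi_1$ radial the whole problem commutes with rotations, so one decomposes $L^2(\R^2)^2=\bigoplus_{n\in\Z}\cH_n$ into eigenspaces of the total angular momentum $J_3=-i\partial_\theta+\tfrac12\sigma_3$. Both $\cD_{\bA_{\phi_\delta,h}}$ and $\cD_{0,\alpha,h}^{(-)}$ preserve this decomposition, and it suffices to prove sectorwise norm-resolvent convergence with control uniform in $n$.

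Second, in each sector $\cH_n$ the Dirac operator reduces to a $2\times 2$ first-order radial system $T_n^\delta$ (resp.\ $T_n$) on $L^2((0,\infty),r\,dr)^2$, with the magnetic field entering through a shift $n\mapsto n+a_\delta(r)$, where $a_\delta(r)=\alpha$ for $r\geq\delta$. For $|n|\geq 1$ one has $|n+\alpha|\geq 1-\alpha>0$ and the radial operator is essentially self-adjoint on smooth compactly supported spinors on $(0,\infty)$; convergence $T_n^\delta\to T_n$ in norm-resolvent sense is routine, since the two operators agree on $(\delta,\infty)$ up to the aforementioned gauge, while on $(0,\delta)$ the dilation $r\mapsto r/\delta$ turns the resolvent equation into a regular problem on $(0,1)$ at vanishing spectral parameter $i\delta\to 0$. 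The genuinely delicate case is the single index $n=0$ (say), in which $|n+\alpha|=\alpha\in(0,1)$ and the free radial ODE near $r=0$ admits two linearly independent solutions of comparable size, $\sim r^{\alpha-1/2}$ and $\sim r^{-\alpha-1/2}$; here $T_n$ possesses a one-parameter family of self-adjoint extensions, and the task is to identify the correct one.

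In this critical sector I would solve $(T_0^\delta+i)u=f$ by matching at $r=\delta$: inside, the rescaled ODE on $(0,1)$ at parameter $i\delta$ has a solution converging as $\delta\to 0$ to the zero-energy regular solution in the background $a(\rho)=\alpha\int_0^\rho 2\pi\phi_1(\rho')\rho'\,d\rho'$, whose far-field behaviour is an explicit linear combination of $\rho^{\pm(\alpha-1/2)}$ modes; outside, the resolvent is built from modified Bessel functions of order $\alpha$ evaluated at spectral parameter $i$. The matching produces a Dirichlet-to-Neumann ratio at $r=\delta$ which, using the $K_\alpha(r)\sim r^{-\alpha}$ asymptotics together with the modified-Bessel Wronskian, converges universally (independently of the profile $\phi_1$) to precisely the value that forces $\bigl(\begin{smallmatrix}0\\K_\alpha(r)\end{smallmatrix}\bigr)$ to belong to the domain of the limit operator. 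The main obstacle is exactly this identification step: one must exclude every other self-adjoint extension of $\sigma\cdot(-i\nabla^{\R^2}+\alpha\nabla^{\R^2}\theta)$, and this requires a careful, quantitative asymptotic analysis of the matching ratio rather than a mere qualitative self-adjointness argument. Finally, summing the sectorwise estimates using that only finitely many sectors contribute to the defect at a given size, together with a Schur-test bound on the residual kernel on $B_\delta$, promotes strong resolvent convergence to convergence in $\cB(L^2(\R^2)^2)$.
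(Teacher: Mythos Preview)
This theorem is not proved in the paper: it is quoted verbatim as the main result of Tamura \cite{Tamura}, and the paper uses it as a black box. So there is no ``paper's own proof'' to compare against; the question is only whether your outline is a viable route to Tamura's theorem.

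There is a genuine gap, and it occurs right at the first step. Your reduction to a radially symmetric profile $\phi_1$ does not work as described. Rotating $\phi_1$ produces a unitarily equivalent Dirac operator (via the spinorial rotation $U_\theta$), so convergence for one rotate is equivalent to convergence for all of them, with the same limit $\cD_{0,\alpha,h}^{(-)}$ since that operator is rotation-invariant. But this says nothing about the \emph{radial average} $\bar\phi_1$: the map $\phi_1\mapsto (\cD_{\bA_{\phi_\delta,h}}+i)^{-1}$ is highly nonlinear, so averaging the resolvents over rotations does not produce the resolvent of the averaged potential. Nor can you appeal to ``continuity of the resolvent in $\phi_1$ in a suitable norm'': a non-radial $\phi_1$ cannot be approximated by radial densities in any norm that controls the resolvent, and any continuity statement strong enough to transfer the limit would have to be uniform in $\delta$, which is essentially the theorem you are trying to prove.

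Without radial symmetry the operator $\cD_{\bA_{\phi_\delta,h}}$ does \emph{not} commute with $J_3=-i\partial_\theta+\tfrac12\sigma_3$, so the angular-momentum decomposition no longer block-diagonalizes it, and the entire sectorwise ODE/matching analysis that follows (your second and third steps) loses its footing. Tamura's actual argument does not assume radial symmetry; it proceeds via resolvent identities and a quantitative low-energy expansion of the two-dimensional resolvent near the Aharonov--Bohm point, which is what ultimately singles out the extension $\cD_{0,\alpha,h}^{(-)}$. If you want to salvage the partial-wave strategy, you must either treat the off-diagonal (in angular momentum) coupling introduced by the non-radial part of $\phi_1$ perturbatively with bounds uniform in $\delta$, or find a different reduction; the averaging shortcut is not one.
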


\begin{rem}
	Observe that there is no norm resolvent convergence for the case $\alpha = 1$. The reason is that an 
         ``approximated singular line" disappears at finite energy as $\delta\to 0^{+}$.
	 Indeed: let $\chi_1\in \sD(\R^2,[0,1])$ be any cut-off function with $\chi_1(\bu)=1$ for $|\bu|\le 2^{-1}$.
	Then consider the wave function 
	\begin{equation}\label{eq:collapsed_quasimode}
		\psi_{\delta}(\bu):=\chi_1(\bu)\begin{pmatrix}0 \\ \exp(-h_{\phi_\delta}(\bu)) \end{pmatrix}\in H^1(\R^2)^2.
	\end{equation}
	Firstly, $\psi_\delta(\bu)$ converges to $|\bu|^{-1}\chi_1$, so the normalized function $\norm{\psi_\delta}_{L^2}^{-1}\psi_\delta$
	concentrates around $0$ and tends weakly to $0$ in $L^2$ (with $\norm{\psi_\delta}_{L^2}$ of order $\sqrt{-\log(\delta)}$). Then we have
	\[
		\frac{1}{\norm{\psi_\delta}_{L^2}^2}\int |\cD_{\bA_{\phi_\delta,h}}\psi_\delta|^2
		=\frac{1}{\norm{\psi_\delta}_{L^2}^2}\int e^{-2h_{\phi_\delta}}|\sigma\cdot \nabla^{\R^2}\chi_1|^2\underset{\delta \to 0}{\longrightarrow}0.
	\]
	If we had norm-resolvent convergence, the limit would be the operator $D:=e^{-i\theta}\sigma\cdot(-i\nabla^{\R^2})e^{i\theta}$, which is unitarily equivalent to
	the free Dirac operator. Let us show that this convergence cannot hold.

 	If we pick any non-negative cut-off function $\chi_2$ (with $\chi_1\chi_2=\chi_1$ to simplify), then the following operator is compact:
 	\[
 	 B:=(D+i)^{-1}\chi_2= e^{-i\theta}\big(\sigma\cdot(-i\nabla^{\R^2})+i\big)^{-1}\chi_2e^{i\theta}.
 	\]
	Along a sequence $\delta_n\to 0$, Formula \eqref{eq:collapsed_quasimode} 
	provides us -- up to normalization -- with an element $\psi_n$ in the domain of $D_n=\cD_{\bA_{\phi_{\delta_n},h}}$ such that
	$\psi_n=\chi_2\psi_n$ concentrates around $0$ with
	\[
	 \int |\psi_n|^2=1, \int |D_n\psi_n|^2\to 0\quad\&\quad  \psi_n\rightharpoonup 0.
	\]
	Since $\chi_2\psi_n=\psi_n$, we have $\norm{\big((D_n+i)^{-1}\chi_2+i\big)\psi_n}_{L^2}\to 0$.
	The norm-resolvent convergence would give $\norm{\big((D+i)^{-1}\chi_2+i\big)\psi_n}_{L^2}\to 0$, 
	which would contradict the inequality 
	\[\norm{B}_{\mathcal{B}(L^2(\C)^2)}<1.
	\]
	This inequality can be proved in the following way. Remark that we have the equalities
	$\norm{B}_{\mathcal{B}(L^2(\C)^2)}^2=\norm{B^*B}_{\mathcal{B}(L^2(\C)^2)}$ and
	$B^*B = \chi_2 (D^2+1)^{-1}\chi_2$.
	The operator $\chi_2 (D^2+1)^{-1}\chi_2$ is non-negative and compact, hence its highest eigenvalue is its norm.
	Let $\psi\in L^2(\C)^2$ be the corresponding normalized eigenfunction, the following holds:
	\[
	 \cip{\psi}{\chi_2 (D^2+1)^{-1}\chi_2\psi}_{L^2}<\norm{\chi_2\psi}_{L^2}^2\le 1.
	\]
\end{rem}

\begin{rem}[norm-resolvent convergence and gap topology]
Tamura's main theorem states a convergence in the norm-resolvent sense. By parts b. and c. of \cite{Kato}*{Theorem~IV.2.23}, 
it coincides with the convergence in the gap topology. Indeed, this theorem implies the following result.
\end{rem}
\begin{theorem}\label{thm:norm_resv_conv}
	Let $\cH$ be a Hilbert space, let $(\cD_n)$ be a sequence of self-adjoint operators on 
	$\cH$ and let $\cD$ be a self-adjoint operator. The following two propositions are equivalent.
	\begin{enumerate}
	\item The sequence $(\cD_n)$ converges to $\cD$ in the norm-resolvent sense.
	\item The projections $P_{n}\in \cB(\cH\times \cH)$ onto the graphs of the $\cD_n$'s converge to
	the projection $P$ onto the graph of $\cD$ in $\cB(\cH\times \cH)$.
	\end{enumerate}
\end{theorem}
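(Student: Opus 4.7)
The plan is to express the graph projection $P_A$ of any self-adjoint operator $A$ on $\cH$ as an explicit $2\times 2$ matrix whose entries are bounded continuous functions of the resolvents $(A\pm i)^{-1}$, and then to observe that the correspondence is invertible: these resolvents can in turn be recovered from the entries of $P_A$. Once this dictionary is set up, both implications become purely algebraic.

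First I would write down the formula for the graph projection. For a self-adjoint operator $A$ on $\cH$, the graph is the closed subspace $\{(x,Ax):x\in\dom(A)\}\subset\cH\times\cH$, and a direct computation (or functional calculus applied to the identities $\tfrac{1}{1+t^2}=\tfrac{1}{2i}[(t-i)^{-1}-(t+i)^{-1}]$ and $\tfrac{t}{1+t^2}=\tfrac{1}{2}[(t-i)^{-1}+(t+i)^{-1}]$) gives
\[
P_A=\begin{pmatrix} (1+A^2)^{-1} & A(1+A^2)^{-1}\\ A(1+A^2)^{-1} & A^2(1+A^2)^{-1}\end{pmatrix},
\]
where $A^2(1+A^2)^{-1}=1-(1+A^2)^{-1}$ and each remaining entry is a linear combination of $(A-i)^{-1}$ and $(A+i)^{-1}$. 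This realizes $P_A$ as an element of $\cB(\cH\times\cH)$ depending continuously (in operator norm) on the pair of resolvents at $\pm i$.

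For the implication (1)$\Rightarrow$(2), norm-resolvent convergence at $z=i$ is equivalent, by the standard first resolvent identity argument, to norm convergence at any $z\in\C\setminus\R$, in particular at $z=-i$. Plugging into the matrix formula above entry by entry shows $P_n\to P$ in $\cB(\cH\times\cH)$. For (2)$\Rightarrow$(1), I would use the algebraic identity
\[
(A\pm i)^{-1}=A(1+A^2)^{-1}\mp i(1+A^2)^{-1},
\]
which realizes each resolvent as an explicit linear combination of entries of $P_A$. Norm convergence of $P_n$ therefore forces norm convergence of $(\cD_n\pm i)^{-1}$ to $(\cD\pm i)^{-1}$, i.e.\ norm-resolvent convergence.

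The main subtlety, and thus the only step one must handle with care, is the verification that the matrix above really is the orthogonal projection onto $\mathrm{graph}(A)$ rather than some slightly different object: one must check (a) that its range lies in $\mathrm{graph}(A)$, using that $\ran((1+A^2)^{-1})=\dom(A^2)\subset\dom(A)$, (b) that it acts as the identity on $\mathrm{graph}(A)$, and (c) that it is self-adjoint and idempotent in $\cB(\cH\times\cH)$. All three follow from spectral calculus applied to the bounded Borel functions $(1+t^2)^{-1}$ and $t(1+t^2)^{-1}$, together with the fact that a vector $(x,y)\in\cH\times\cH$ lies in $\mathrm{graph}(A)^\perp$ iff $(y,-x)\in\mathrm{graph}(A)$ (the symplectic duality between a self-adjoint operator and its negative inverse image). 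Once these algebraic facts are in place, the theorem is immediate.
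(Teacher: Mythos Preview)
Your argument is correct. The explicit $2\times 2$ block formula for $P_A$ in terms of $(A\pm i)^{-1}$ is standard and valid, and once it is in place both implications are indeed purely algebraic. The only delicate point you flag yourself---that the range of $P_A$ as written really lies in $\mathrm{graph}(A)$---is handled by the observation that $A(1+A^2)^{-1}$, defined by functional calculus, maps $\cH$ into $\dom(A)$ (since $t^2/(1+t^2)$ is bounded).

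The paper takes a different route: it does not prove the theorem at all but simply cites parts b.\ and c.\ of \cite{Kato}*{Theorem~IV.2.23}, which identify the gap metric on closed operators with a resolvent-type distance. What you gain with your direct computation is a self-contained, elementary argument that avoids the general machinery of the gap topology and makes the dictionary between resolvents and graph projections completely explicit; this is useful later in the paper (e.g.\ in \eqref{eq:def_projection}) where that dictionary is applied by hand. What the citation to Kato buys is brevity and a slightly more general statement (Kato's result covers closed, not just self-adjoint, operators).
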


The equivalent result for the strong-resolvent convergence is
\cite{dirac_s3_paper1}*{Lemma~28}.

\begin{lemma}\label{lem:cond_equiv_strong}
Under the same assumption of Theorem~\ref{thm:norm_resv_conv}, we have the equivalence of the three propositions.
\begin{enumerate}
	\item The sequence $(\cD_n)$ converges to $\cD$ in the strong-resolvent sense
	\item The projections $P_{n}\in \cB(\cH\times \cH)$ onto the graphs of the $\cD_n$'s converge to
	the projection $P$ onto the graph of $\cD$ in the strong operator topology.
	\item For all $\Psi=(\psi,D\psi)\in \cG(D)$, there exists a sequence $(\Psi_n)$
	with $\Psi_n=(\psi_n,D_n\psi_n)\in \cG(D_n)$ such that $\Psi_n\to \Psi$ in $\cH^2$.
\end{enumerate}
\end{lemma}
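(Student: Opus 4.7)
The plan is to exploit the explicit formula for the orthogonal projection onto the graph of a self-adjoint operator in terms of its resolvents at $\pm i$, which reduces $(1)\Leftrightarrow(2)$ to functional calculus; then $(2)\Rightarrow(3)$ follows by applying $P_n$ directly, and $(3)\Rightarrow(1)$ by a brief resolvent manipulation.

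First, for a self-adjoint operator $D$ on $\cH$ with resolvent $R(z):=(D-z)^{-1}$, I would record the identities
\[
(1+D^2)^{-1} = \tfrac{1}{2i}\bigl(R(i)-R(-i)\bigr),\qquad D(1+D^2)^{-1} = \tfrac{1}{2}\bigl(R(i)+R(-i)\bigr),
\]
both being bounded operators on $\cH$ whose ranges are contained in $\dom(D)$. Using the self-adjointness of $D$ to identify $\cG(D)^{\perp}=\{(-D\chi,\chi):\chi\in\dom(D)\}$ inside $\cH\oplus\cH$ and solving the orthogonal decomposition $(\phi,\eta)=(\psi,D\psi)+(-D\chi,\chi)$, I would verify that the projection $P$ onto $\cG(D)$ has the block form
\[
P = \begin{pmatrix} (1+D^2)^{-1} & D(1+D^2)^{-1} \\ D(1+D^2)^{-1} & 1-(1+D^2)^{-1} \end{pmatrix}.
\]
Each entry is then a polynomial in $R(\pm i)$, so $(1)\Rightarrow(2)$ is immediate; conversely reading off the entries of $P_n$ gives $R_n(\pm i)\to R(\pm i)$ strongly, and the first resolvent identity together with the uniform bound $\|R_n(z)\|\le|\mathrm{Im}\,z|^{-1}$ propagates this via a Neumann-series/open-closed argument to all $z\in\C\setminus\R$, which yields $(2)\Rightarrow(1)$.

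For $(2)\Rightarrow(3)$, given $\Psi=(\psi,D\psi)\in\cG(D)$, I would simply set $\Psi_n:=P_n\Psi\in\cG(D_n)$; by construction $\Psi_n=(\psi_n,D_n\psi_n)$ for some $\psi_n\in\dom(D_n)$, and $\Psi_n\to P\Psi=\Psi$ by hypothesis. Conversely, for $(3)\Rightarrow(1)$, I would fix $\phi\in\cH$ and let $\psi:=R(i)\phi\in\dom(D)$, so $(D-i)\psi=\phi$; by (3) choose $\psi_n\in\dom(D_n)$ with $(\psi_n,D_n\psi_n)\to(\psi,D\psi)$, and set $\phi_n:=(D_n-i)\psi_n$, so that $\phi_n\to\phi$ and $R_n(i)\phi_n=\psi_n$. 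Using $\|R_n(i)\|\le 1$, the identity
\[
R_n(i)\phi = \psi_n + R_n(i)(\phi-\phi_n)
\]
gives $R_n(i)\phi\to\psi=R(i)\phi$, which is strong resolvent convergence at $z=i$, and the Neumann-series argument of the previous paragraph extends it to all non-real $z$.

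The only delicate point, rather than a genuine obstacle, is verifying the projection formula without presupposing $\eta\in\dom(D)$: the operator $D(1+D^2)^{-1}$ must be treated as a bounded operator on $\cH$ (with range automatically inside $\dom(D)$) rather than as the composition of the unbounded $D$ with a resolvent, and likewise $D^2(1+D^2)^{-1}$ must be read as $1-(1+D^2)^{-1}$. Once this bookkeeping is in place, the three implications are essentially formal.
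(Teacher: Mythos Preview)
Your proof is correct. The paper organizes the implications slightly differently, and the comparison is worth noting.

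For $(1)\Leftrightarrow(2)$ the paper simply cites its companion paper \cite{dirac_s3_paper1}*{Lemma~28}, whereas you give a self-contained argument via the explicit block-matrix formula for the graph projection in terms of $R(\pm i)$. Your route is more informative here.

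For closing the cycle, you go $(3)\Rightarrow(1)$ directly by a resolvent manipulation. The paper instead proves $(3)\Rightarrow(2)$: it observes that $\cG(D)^{\perp}=\{(-D\psi,\psi):\psi\in\dom(D)\}$, so condition (3) applied to $(\psi,D\psi)$ simultaneously yields approximants $(-D_n\psi_n,\psi_n)\in\cG(D_n)^{\perp}$ for every element of $\cG(D)^{\perp}$. This gives $\slim P_nP_0=P_0$ and $\slim(1-P_n)(1-P_0)=1-P_0$, and then the algebraic identity
\[
P_n=P_nP_0-(1-P_n)(1-P_0)+(1-P_0)
\]
finishes. Your resolvent argument is equally short; the paper's version has the advantage of staying entirely at the level of graph projections and never touching resolvents, which matches how the lemma is actually used later (the third characterization is what gets verified in the applications).

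One small remark on your $(3)\Rightarrow(1)$: the Neumann-series extension from $z=i$ a priori only covers the upper half-plane. You should either repeat the identical argument at $z=-i$ (set $\psi:=R(-i)\phi$, etc.), which works verbatim, or invoke the standard fact that for self-adjoint operators strong resolvent convergence at a single non-real point suffices. Either way this is a one-line fix.
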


That the strong convergence of the projections imply the third characterization is trivial, the converse 
is easy to show once one notices that for a self-adjoint operator $D$ with graph-projection $P_{0}$, the orthogonal
complement of its graph is:
\[
	\cG(D)^{\perp}=\ker P_{0}=\big\{(-D\psi,\psi),\quad \psi\in\dom(D)\big\}.
\]
The third proposition of Lemma \eqref{lem:cond_equiv_strong} implies $\slim\,P_n P_0=P_0$ and 
$\slim (1-P_n)(1-P_0)=1-P_0$, where $P_n$ is the orthogonal projection onto $\cG(D_n)$. 
We deduce that
\[
	\slim\,P_n=\slim\big(P_n P_0-(1-P_n)(1-P_0)+(1-P_0)\big)=P_0.
\]

\subsubsection{Convergence in the singular gauge}
Now we consider the operator in the singular gauge; we set
$$
	\cD_{0,\alpha}^{(-)} := e^{i\alpha\theta} \cD_{0,\alpha,h}^{(-)}e^{-i\alpha\theta}
$$
with domain
\begin{equation}\label{eq:decomp_dom}
	\dom\big(\cD_{0,\alpha}^{(-)}\big)=\dom\big(\cD_{0,\alpha}^{(\min)}\big)
	\overset{\perp_{\cG}}{\oplus}\C\begin{pmatrix}0\\ K_\alpha(r)e^{i\alpha\theta} \end{pmatrix},
\end{equation}
where
$$
	\dom\big(\cD_{0,\alpha}^{(\min)}\big)
	=\big\{ \psi\in H^1(\R^2\setminus L_0)^2: \left.\psi\right|_{(L_0)_+}
	=e^{-2i\pi\alpha}\left.\psi\right|_{(L_0)_-}\big\}.
$$
One may again ask in which sense 
$$
	\cD_{\bA_{\phi_\delta}^{\R^2}} \to \cD_{0,\alpha}^{(-)}, \quad (\delta \to 0^{+}).
$$
To answer this question, we can use Theorem~\ref{thm:tamura} and 
study the gauge transformation between 
$\cD_{0,\alpha}^{(-)}$ and $\cD_{0,\alpha,h}^{(-)}$. 
Indeed, one readily shows that
\[
	\sigma\cdot(-i\nabla^{\R^2}+A)e^{i\phi}\psi=e^{i\phi}\sigma\cdot(-i\nabla^{\R^2}+A+\nabla^{\R^2}\phi)\psi.
\]
So if we write $P_{\alpha}$, $P_{\alpha,h}$, $P_{\delta}$ and $P_{\delta,h}$ the projections onto the graph norm
of $\cD_{0,\alpha}^{(-)},\cD_{0,\alpha,h}^{(-)}$, $\cD_{\bA_{\phi_\delta}^{\R^2}}$ and $\cD_{\bA_{\phi_\delta,h}^{\R^2}}$
respectively, we have:
\begin{equation}\label{eq:def_projection}
	P_\alpha=e^{i\alpha\theta}P_h e^{-i\alpha\theta}\quad\&\quad 
		P_{\delta}=e^{i\zeta_{\phi_\delta}}P_{\delta,h}e^{-i\zeta_{\phi_\delta}}.
\end{equation}
	Since $\lim_{\delta\to 0}\norm{P_h-P_{\delta,h}}=0$ by Theorem~\ref{thm:tamura}, we have
	the equivalence
	\begin{equation*}\label{eq:equiv_norm_res_conv}
		\lim_{\delta\to 0}\norm{P_\alpha-P_{\delta}}=0\quad \iff\quad \lim_{\delta\to 0}\norm{[e^{i(\alpha\theta-\zeta_{\phi_{\delta}})}, P_h]}=0.
	\end{equation*}
	Observe the following equalities:
	\begin{equation*}
		[e^{i(\alpha\theta-\zeta_{\phi_{\delta}})}, P_h]=[(e^{i(\alpha\theta-\zeta_{\phi_{\delta}})}-1), P_h],
	\end{equation*}
	and:
	$$
		\norm{[e^{i(\alpha\theta-\zeta_{\phi_{\delta}})}, P_h]}=\norm{e^{-i\zeta_{\phi_{\delta}}} P_\alpha e^{i\zeta_{\phi_{\delta}}}-P_{h}}.
	$$
	\begin{rem}
	Because of \eqref{eq:discrepancy_gauge} we have $\liminf_{\delta\to 0}\norm{(e^{i(\alpha\theta-\zeta_{\phi_{\delta}})}-1)P_h}>0$,
	but this does not disprove norm-resolvent convergence of $\cD_{\bA_{\phi_\delta}^{\R^2}}$ to $\cD_{0,\alpha}^{(-)}$ as $\delta\to 0$.
	In fact we also expect the norm-resolvent convergence in that case, but we do not study this question in this paper.
	\end{rem}

Now we can state two theorems, whose proofs are given at the end of this section, 
concerning the (norm) resolvent convergence of the operator $\cD_{\bA_{\phi_{\delta}}^{\R^2}}$.

\begin{theorem}\label{thm:conv_2D_dirac}
	For a mollifier $(\phi_{\delta})_{\delta>0}$,
	we have
	$$
		\slim_{\delta \to 0^{+}}\big(\cD_{\bA_{\phi_{\delta}}^{\R^2}} + i\big)^{-1}
		= \big(\cD_{0,\alpha}^{(-)} + i\big)^{-1}.
	$$
\end{theorem}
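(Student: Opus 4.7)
The plan is to leverage Tamura's theorem (Theorem~\ref{thm:tamura}) for the scalar gauge together with the graph-projection characterization of strong-resolvent convergence (Lemma~\ref{lem:cond_equiv_strong}), pushing the gauge transformation through. By Proposition~\ref{prop:gauge_transf_formula} we have $\bA_{\phi_\delta}^{\R^2} = \bA_{\phi_\delta,h} - \alpha\nabla^{\R^2}\zeta_{\phi_\delta}$, so the smooth unitary $U_\delta := e^{i\alpha\zeta_{\phi_\delta}}$ on $L^2(\R^2)^2$ satisfies
\[
	\cD_{\bA_{\phi_\delta}^{\R^2}} = U_\delta \, \cD_{\bA_{\phi_\delta,h}} \, U_\delta^{*},
\]
whereas by construction $\cD_{0,\alpha}^{(-)} = U_0 \, \cD_{0,\alpha,h}^{(-)} \, U_0^{*}$ with $U_0 := e^{i\alpha\theta}$. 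Extending both unitaries diagonally to $L^2(\R^2)^2 \times L^2(\R^2)^2$ (still denoted $U_\delta$, $U_0$), the corresponding graph projections satisfy $P_\delta = U_\delta P_{\delta,h} U_\delta^{*}$ and $P_\alpha = U_0 P_h U_0^{*}$.

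The first step is to observe that $U_\delta \to U_0$ strongly on $L^2(\R^2)^2$. Indeed, Proposition~\ref{prop:gauge_transf_formula} yields pointwise convergence of $e^{i\alpha\zeta_{\phi_\delta}}$ to $e^{i\alpha\theta}$ on $\C \setminus \R_+$, and uniformly outside any neighborhood of $\R_+$; since $\R_+$ has Lebesgue measure zero, the convergence is pointwise almost everywhere. As $|U_\delta| \equiv 1$, the dominated convergence theorem gives $U_\delta \psi \to U_0 \psi$ in $L^2$ for every $\psi \in L^2$, and similarly $U_\delta^{*} \to U_0^{*}$ strongly. The same argument works on the doubled Hilbert space.

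The second step is the telescoping identity
\[
	P_\delta - P_\alpha
	= U_\delta (P_{\delta,h} - P_h) U_\delta^{*}
	+ (U_\delta - U_0) P_h U_\delta^{*}
	+ U_0 P_h (U_\delta^{*} - U_0^{*}).
\]
The first term tends to zero in operator norm thanks to Theorem~\ref{thm:tamura}. In the third term, $P_h$ is bounded and $U_\delta^{*} - U_0^{*} \to 0$ strongly, so it tends to zero strongly. For the middle term, fix $\Psi \in L^2 \times L^2$: since $U_\delta^{*} \Psi \to U_0^{*} \Psi$ in norm and $P_h$ is continuous, $P_h U_\delta^{*}\Psi \to P_h U_0^{*}\Psi$ in norm; the family $(U_\delta - U_0)$ is uniformly bounded and converges strongly to $0$, hence $(U_\delta - U_0) P_h U_\delta^{*} \Psi \to 0$ by the standard ``strong times strong'' convergence lemma. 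We conclude that $P_\delta \to P_\alpha$ in the strong operator topology on $L^2 \times L^2$, which is exactly the characterization (2) of strong-resolvent convergence in Lemma~\ref{lem:cond_equiv_strong}, proving the claim.

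The only subtle point is the first step: although $\|U_\delta - U_0\|_{L^\infty}$ does \emph{not} vanish as $\delta \to 0^+$ (cf. \eqref{eq:discrepancy_gauge}), strong convergence on $L^2$ still holds because the failure is confined to a null set. This is precisely why only \emph{strong} (and not norm) resolvent convergence is claimed here; the discrepancy in \eqref{eq:discrepancy_gauge} is exactly the obstruction to upgrading Tamura's norm-resolvent convergence in the scalar gauge to norm-resolvent convergence in the singular gauge, and our argument uses this feature rather than fighting it.
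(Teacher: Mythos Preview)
Your proof is correct and follows essentially the same approach as the paper: both exploit Tamura's norm-resolvent convergence in the scalar gauge and push it through the gauge transformation using the strong (a.e.\ pointwise plus dominated) convergence $e^{i\alpha\zeta_{\phi_\delta}}\to e^{i\alpha\theta}$, landing on the graph-projection characterization of strong-resolvent convergence. Your single three-term telescoping identity for $P_\delta-P_\alpha$ is just a slightly tidier repackaging of the paper's two-step decomposition (first handling $P_{\delta,h}U_\delta^*\Psi-P_hU_0^*\Psi$, then multiplying by $U_\delta$), and the closing remark on why only strong convergence is obtained is exactly the point the paper makes around~\eqref{eq:discrepancy_gauge}.
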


Thus we have strong resolvent convergence.

\begin{theorem}[Analysis of a graph-norm bounded sequence]\label{thm:defect}
	Let $\delta_0>0$, and for $0<\delta\le \delta_0$, let $\cD_{\delta}^{\R^2}:=\cD_{\bA_{\phi_{\delta}}^{\R^2}}$. 
	Let
	\[
	\big[\Psi_{\delta}=(\psi_\delta,\cD_{\delta}^{\R^2}\psi_{\delta})\big]_{0<\delta<\delta_0}
	\]
	be a normalized family of elements in the graphs $\cG(\cD_{\delta}^{\R^2})$.
	
	Then, along any decreasing sequence $(\delta_n)_{n\in\mathbb{N}}$, $\delta_n\to 0$, and up to an extraction
	of a subsequence the following holds.
	\begin{enumerate}
		\item The sequence $(\psi_{\delta_n})_{n\in\mathbb{N}}$
	converges in $L^2_{\loc}(\R^2)^2$ to an element $\psi_0\in\dom(\cD_{0,\alpha}^{(-)})$.
		\item The sequence $(\Psi_{\delta_n})_{n\in\mathbb{N}}$ $L^2$-weakly converges to
		\[
			(\psi_0,\cD_{0,\alpha}^{(-)}\psi_0).
		\]
	\end{enumerate}
\end{theorem}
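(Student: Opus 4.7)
The plan is to gauge-transform to the scalar (Coulomb) gauge, where Tamura's Theorem~\ref{thm:tamura} --- reformulated via Theorem~\ref{thm:norm_resv_conv} as norm convergence of the graph projections $P_{\delta,h}\to P_h$ in $\cB(L^2\times L^2)$ --- provides the strongest available information. Concretely, I would set $\tilde\psi_\delta := e^{-i\zeta_{\phi_\delta}}\psi_\delta$, $\tilde\phi_\delta := e^{-i\zeta_{\phi_\delta}}\cD_\delta^{\R^2}\psi_\delta$, and $\tilde\Psi_\delta := (\tilde\psi_\delta,\tilde\phi_\delta)$. By Proposition~\ref{prop:gauge_transf_formula}, $\tilde\Psi_\delta\in\cG(\cD_{\bA_{\phi_\delta,h}})$ with $\|\tilde\Psi_\delta\|_{L^2\times L^2}=\|\Psi_\delta\|=1$, so along any decreasing sequence Banach--Alaoglu extracts a subsequence with $\tilde\Psi_{\delta_n}\rightharpoonup \tilde\Psi_0=(\tilde\psi_0,\tilde\phi_0)$.

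For part~(2), I would argue that $\tilde\Psi_\delta-P_h\tilde\Psi_\delta=(P_{\delta,h}-P_h)\tilde\Psi_\delta\to 0$ in $L^2\times L^2$-norm, so $\tilde\Psi_0\in\cG(\cD_{0,\alpha,h}^{(-)})$, i.e.\ $\tilde\phi_0=\cD_{0,\alpha,h}^{(-)}\tilde\psi_0$. To transfer back to the singular gauge, recall from Proposition~\ref{prop:gauge_transf_formula} that $e^{i\zeta_{\phi_\delta}}$ has modulus one and converges pointwise a.e.\ to $e^{i\alpha\theta}$; testing against any $\varphi\in L^2$, a dominated-convergence argument on $\langle\psi_\delta,e^{i\zeta_{\phi_\delta}}\varphi\rangle=\langle\tilde\psi_\delta,\varphi\rangle$ yields $\psi_{\delta_n}\rightharpoonup e^{i\alpha\theta}\tilde\psi_0=:\psi_0$, and similarly $\cD_{\delta_n}^{\R^2}\psi_{\delta_n}\rightharpoonup e^{i\alpha\theta}\tilde\phi_0$. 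The relation \eqref{eq:def_projection} then identifies the second weak limit as $\cD_{0,\alpha}^{(-)}\psi_0$.

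For part~(1), I would write $P_h\tilde\Psi_\delta=(\eta_\delta,\cD_{0,\alpha,h}^{(-)}\eta_\delta)$, so from the above $\tilde\psi_\delta-\eta_\delta\to 0$ in $L^2(\R^2)^2$ globally and $\eta_\delta$ has uniformly bounded graph norm. Using the decomposition \eqref{eq:decomp_dom}, split $\eta_\delta=\eta_\delta^{(\min)}+c_\delta(0,K_\alpha(r))$; the scalar $c_\delta$ is controlled by the graph norm and, along a further subsequence, converges to some $c_0\in\C$, which yields $L^2$-norm convergence of the singular part. Since $\cD_{0,\alpha,h}^{(-)}$ acts with smooth coefficients off the origin, local elliptic regularity plus the Rellich--Kondrachov theorem yields compactness of $(\eta_\delta^{(\min)})$ in $L^2_{\loc}(\R^2\setminus\{0\})$. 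Assembling these gives $\eta_{\delta_n}\to\tilde\psi_0$ in $L^2_{\loc}(\R^2\setminus\{0\})$, hence $\tilde\psi_{\delta_n}\to\tilde\psi_0$ there, and dominated convergence with $|e^{i\zeta_{\phi_\delta}}|=1$ transfers this to $\psi_{\delta_n}\to\psi_0$ in $L^2_{\loc}(\R^2\setminus\{0\})$.

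The main obstacle is upgrading this to $L^2_{\loc}(\R^2)$, that is, ruling out concentration of $\eta_\delta$ at the origin where the elliptic/Rellich argument fails. I would handle this by exploiting the trace-type characterization of the minimal domain $\clos_{\cG}(\sD(\C\setminus\{0\})^2)$: its elements vanish at $0$ sufficiently to preclude $L^2$-mass from collapsing onto $\{0\}$ under the graph-norm bound, so any concentration can only be carried by the one-dimensional singular extension $c_\delta(0,K_\alpha(r))$, which is already under control through $c_\delta\to c_0$. This completes the proof of~(1).
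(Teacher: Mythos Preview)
Your approach is correct and differs from the paper's in an instructive way. The paper works directly in the singular gauge: it localizes $\psi_\delta$ into three regions (near $0$, near $\R_+$ away from $0$, and away from $\R_+$), removes the potential by explicit local gauge transformations to obtain $L^2_{\loc}(\C\setminus\{0\})$-compactness, then identifies the weak limit via a duality argument using strong-resolvent convergence, and finally rules out concentration at the origin by showing that the localized resolvent $\chi_2(\cD_{\bA_{\phi_\delta,h}}+i)^{-1}$ converges in norm to a compact operator. Your route is more streamlined: by passing at once to the scalar gauge, the norm convergence $P_{\delta,h}\to P_h$ immediately forces the weak limit into the correct graph and simultaneously provides the auxiliary element $\eta_\delta\in\dom(\cD_{0,\alpha,h}^{(-)})$ that is globally $L^2$-close to $\tilde\psi_\delta$, reducing part~(1) to compactness properties of a \emph{fixed} operator rather than a family.

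One point deserves sharpening. Your phrase ``trace-type characterization'' is not quite the right mechanism for excluding concentration of $\eta_\delta^{(\min)}$ at the origin. What actually does the work is the Hardy-type control built into the minimal graph norm: identity \eqref{decomp_graph_norm} (transported to the scalar gauge by the unitary $e^{-i\alpha\theta}$) gives, for any $\phi\in\dom(\cD_{0,\alpha,h}^{(\min)})$,
\[
\int_{|\bu|\le\eps}|\phi|^2 \;\le\; \eps^2 \sum_m \int_0^\infty \frac{(m+\alpha)^2}{r^2}|\phi_m|^2\, r\,\d r
\;\le\; \frac{\eps^2}{\min(\alpha,1-\alpha)^2}\,\|\phi\|_{\cG}^2,
\]
which is exactly the uniform smallness you need. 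With this made explicit your argument is complete. The paper instead deduces non-concentration from compactness of the localized limiting resolvent $\chi_2(\cD_{0,\alpha,h}^{(-)}+i)^{-1}$ together with Tamura's norm convergence, which is a more transferable technique but less elementary in this setting.
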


\subsubsection{Description of the domain $\dom(\cD_{0,\alpha}^{(-)})$}
We recall that we have the decomposition \eqref{eq:decomp_dom}. Moreover, 
by the Green Theorem we easily deduce that
\[
	\norm{\cD_{0,\alpha}^{(\min)}\psi_0}_{L^2(\C)^2}^2=\norm{\nabla^{\R^2}\psi_0}_{L^2(\C\setminus\R_+)^2}^2,
	\quad \forall \psi_0\in \dom(\cD_{0,\alpha}^{(\min)}). 
\]
Then, we decompose $e^{-i\alpha\theta}\dom(\cD_{0,\alpha}^{(-)})$ with respect to the polar decomposition
\[
	L^2(\R^2)^2=L^2(\R_+,r\d r)^2\otimes L^2(\S^1,\d\theta).
\]
We get
\[
	\dom(\cD_{0,\alpha}^{(\min)})\subset \bigoplus_{m\in\Z} e^{i(m+\alpha)\theta}H_0^1((0,\infty),r\d r)^2.
\]
That is, we decompose $\psi_0\in \dom\big(\cD_{0,\alpha}^{(\min)}\big)$ in modes
$\psi_0=\sum_{m\in\Z}e^{i(m+\alpha\theta)}\phi_m(r)$, so that
\begin{multline}\label{decomp_graph_norm}
	\norm{\psi_0}^2_{L^2(\C)^2}+\norm{\cD_{0,\alpha}^{(\min)}\psi_0}_{L^2(\C)^2}^2\\
	\quad=\sum_{m\in\Z}\int_0^{\infty}\Big(|\phi_m(r)|^2+|(\partial_r\phi_m)(r)|^2
	+\frac{(m+\alpha)^2}{r^2}|\phi_m(r)|^2 \Big)\,r\d r.
\end{multline}

\subsubsection{Proof of Theorems~\ref{thm:conv_2D_dirac}~\&~\ref{thm:defect}}

	\ \medskip
	\paragraph{\textit{Proof of Theorem~\ref{thm:conv_2D_dirac}}}
	We recall that the strong resolvent convergence of a family of self-adjoint operator
	is equivalent to the convergence in the strong operator topology of the orthogonal
	projections onto the corresponding graphs.
	
	Recall: $P_{\alpha}$, $P_{\alpha,h}$, $P_{\delta}$ and $P_{\delta,h}$ 
	denote the projections onto the graphs of $\cD_{0,\alpha}^{(-)},\cD_{0,\alpha,h}^{(-)}$, 
	$\cD_{\bA_{\phi_\delta}^{\R^2}}$ and $\cD_{\bA_{\phi_\delta,h}^{\R^2}}$ respectively (and these projections
	are linked by \eqref{eq:def_projection}).
	We have to show that for all $\Psi\in L^2(\S^3)^2\times L^2(\S^3)^2$
	we have:
	\[
		P_{\delta}\Psi\underset{\delta\to 0}{\longrightarrow}P_{\alpha}\Psi.
	\]
	By Theorem~\ref{thm:tamura}, we have $\norm{P_h-P_{\delta,h}}_{\mathcal{B}}\to 0$.
	By uniform convergence of $\zeta_{\phi_\delta}$ on the complement set
	$\complement_{\C}B_{\eps}(\R_+)$ ($\eps>0$) of any strip along $\R_+$ 
	and dominated convergence, we obtain: 
	\[
		\norm{\big[e^{-i\zeta_{\phi_\delta}}-e^{-i\alpha\theta}\big]\Psi}_{L^2}\underset{\delta\to}{\to}0.
	\]
	Hence the following convergence holds in $L^2$:
	\begin{equation*}
		P_{\delta,h}e^{-i\zeta_{\phi_\delta}}\Psi-P_{h}e^{-i\alpha\theta}\Psi
		=\big[P_{\delta,h}-P_h\big]e^{-i\zeta_{\phi_\delta}}\Psi+P_h\big[e^{-i\zeta_{\phi_\delta}}-e^{-i\alpha\theta}\big]\Psi\to 0.
	\end{equation*}
	By dominated convergence, we get:
	\begin{multline*}
		P_{\delta}\Psi=e^{i\zeta_{\phi_\delta}}P_{\delta,h}e^{-i\zeta_{\phi_\delta}}\Psi
				=e^{i\zeta_{\phi_\delta}}\big[P_{\delta,h}e^{-i\zeta_{\phi_\delta}}-P_{h}e^{-i\alpha\theta}\big]\Psi
				+e^{i\zeta_{\phi_\delta}}P_he^{-i\alpha\theta}\Psi\\
				\to e^{i\alpha\theta}P_he^{-i\alpha\theta}\Psi=P_\alpha\Psi.
	\end{multline*}

	\medskip
	
	\paragraph{\textit{Proof of Theorem~\ref{thm:defect}}}
	
	\ \medskip
	\subparagraph{\textit{A priori results}}
	
	For $0<\delta<\delta_0$, let $\Psi_{\sing,\delta}\in\cG(\cD_{\delta}^{\R^2})$ 
	be the ``singular" element:
	\begin{equation*}
		\Psi_{\sing,\delta}:=(\psi_{\sing,\delta},\cD_{\delta}^{\R^2}\psi_{\sing,\delta})\in H^1(\R^2)^2,
	\end{equation*}
	that is the closest normalized vector of $\cG(\cD_{\bA_{\phi_{\delta}}^{\R^2}})$ to the (normalized)
	element $\Psi_{\sing}$ of the graph $\cG(\cD_{0,\alpha}^{(-)})$:
	\[
	\Psi_{\sing}:=\big(\norm{K_\alpha(|\bu|)}_{L^2}^2+\norm{K_{1-\alpha}(|\bu|)}_{L^2}^2\big)^{-1/2}
		\bigg( \begin{pmatrix}0\\ K_\alpha(r)\end{pmatrix},i\begin{pmatrix} K_{1-\alpha}e^{-i\theta}\\ 0\end{pmatrix}\bigg).
	\]

	By Theorem~\ref{thm:tamura}, we know that
	$e^{-i\zeta_{\phi_\delta}}\Psi_{\sing,\delta}\in\cG(\cD_{\bA_{\phi_\delta},h})$ converges
	to the singular element $e^{-i\alpha\theta}\Psi_{\sing}$ of $\cG(\cD_{0,\alpha,h})$.
	By pointwise convergence of $e^{i\zeta_{\phi_\delta}}$ (to $e^{i\alpha\theta}$) and dominated convergence,
	we get that $\Psi_{\sing,\delta}$ converges to $\Psi_{\sing}$,
	hence
	\[
		\norm{(1-P_\alpha)\Psi_{\sing,\delta}}\underset{\delta\to}{\to} 0.
	\]
	We recall that $P_\alpha$ denotes the orthogonal projection onto the graph of $\cD_{0,\alpha}^{(-)}$. 
	Up to extraction, the orthogonal projection of $\Psi_{\delta_n}$ onto $\C\Psi_{\sing,\delta_n}$ will converge strongly.
	So, up to removing this projection, we can assume that $\Psi_{\delta_n}$
	is orthogonal to that line. 
	
	Up to extracting a subsequence, we can assume that $(\Psi_{\delta_n})_{n\in\mathbb{N}}$ converges weakly in $L^2$
	to some $\Psi_0=(\psi_0, \wt{\psi}_0)$. Our aim is to show that $\Psi_0$ is in the graph of $\cD_{0,\alpha}^{(-)}$,
	and that the convergence of $(\psi_{\delta_n})_{n\in\mathbb{N}}$ holds in $L^2_{\loc}(\C)^2$ up to extracting a subsequence.
	
	We will have to use the singular line for $\cD_{0,\alpha}^{(+)}$:
	\begin{multline}\label{eq:sing_+}
		\wt{\Psi}_{\sing}:=\big(\psi_{\sing,+},i\psi_{\sing,-}\big)\\
				= \big(\norm{K_\alpha(r)}_{L^2}^2+\norm{K_{1-\alpha}}_{L^2}^2\big)^{-1/2}
		\bigg(\begin{pmatrix} K_{1-\alpha}(r)e^{-i\theta}\\ 0\end{pmatrix}, i\begin{pmatrix}0\\ K_\alpha(r)\end{pmatrix}\bigg),
	\end{multline}

	We first show that the convergence holds in $L^2_{\loc}(\C\setminus\{0\})^2$. The idea is to use the form of the magnetic
	potential (see Section~\ref{sec:local_gauge_away}).
	
	\medskip
	\subparagraph{\textit{Localization}}
	We localize the function $\psi_{\delta}$ in three different region:
	\begin{enumerate}
		\item	close to $\{0\}$,
		\item away from $\{0\}$ but close to $\R_+$,
		\item	away from $\R_+$.
	\end{enumerate}
	
	To do so, we pick a radial function $\chi\in\sD(\R^2,[0,1])$ for which
	$\chi(\bu)=0$ for $|\bu|\le 1$ and $\chi(\bu)=0$ for $|\bu|\ge 2$.
	And we introduce a smooth function $g\in C^{\infty}(\R^2,[0,1])$ which localizes around $\R_+$
	$$
		\left\{
		\begin{array}{lr}
			g(\bu)=1, &\mathrm{dist}(\bu,\R_+)\le \frac{1}{2},\\
			g(\bu)=0, &\mathrm{dist}(\bu,\R_+)>1.
		\end{array}
		\right.
	$$
	
	Let $r_0>0$, we consider $\chi_{r_0}(\bu)=\chi(\bu/r_0)$
	 and $g_{r_0}(\bu):=g(\tfrac{\bu}{r_0})$.
	The localization at level $r_0$ is:
	\begin{align*}
	\psi_\delta&=\big(\chi_{r_0}g_{r_0}+(1-\chi_{r_0})g_{r_0}+(1-g_{r_0})\big)\psi_{\delta},\\
			&=\psi_{\delta}^{(0)}+\psi_{\delta}^{(\R_+)}+\psi_{\delta}^{(\mathrm{Rem})}.
	\end{align*}
	
	\subparagraph{\textit{Convergence in $L^2_{\loc}(\C\setminus\{0\})^2$}}
	Along a sequence $\delta_n\to 0$, the countable family $(\psi_{\delta_n}^{(\mathrm{Rem})})_n$
	is $H^1$-bounded, and by Section~\ref{sec:local_gauge_away}, so is the family
	\[
	\exp(-iM_{\phi_{\delta_n}}(u_2))\psi_{\delta_n}^{(\R_+)},\ 
	M_{\phi_{\delta_n}}(u_2)=2\pi\alpha\int_{-\infty}^{+\infty}\d v_1\int_{-\infty}^{u_2}\d v_2\phi_{\delta_n}(v_1,v_2).
	\]
	Thus up to extracting a subsequence, they both converge in $L^2_{\mathrm{loc}}(\C)^2$,
	and by dominated convergence, so does the family $(\psi_{\delta_n}^{(\R_+)})_n$.
	By a standard diagonal argument, this shows that the family $(\psi_{\delta_n})_{n\in\mathbb{N}}$
	converges in $L^2_{\loc}(\C\setminus\{0\})^2$ up to the extraction of a subsequence.
	The limit is necessarily $\psi_0$. 
	
	\medskip
	\subparagraph{\textit{Proof of \textit{$\Psi_0\in\cG(\cD_{0,\alpha}^{(-)})$}}}

	Due to the convergence of the smooth phase,
	$\psi_0$ has the phase jump $e^{-2i\pi\alpha}$ across $\R_+$  
	and in $\sD'(\C\setminus[0,+\infty))^2$ we get
	\[
	\cD_{\delta_n}^{\R^2}\psi_{\delta_n}\rightharpoonup \bsigma\cdot(-i\nabla)\psi_{0}|_{\C\setminus L_0}.
	\]
	Thus we obtain $\psi_0\in\dom(\cD_{0,\alpha}^{(\max)})$. Let us prove that it is in $\cG(\cD_{0,\alpha}^{(-)})$.
	By strong resolvent continuity (third characterization of Lemma~\ref{lem:cond_equiv_strong}),
	any element $\Psi_-=(\psi_{-,\infty},\cD_{0,\alpha}^{(-)}\psi_{-,\infty})\in \cG(\cD_{0,\alpha}^{(-)})$ is the norm limit of a sequence $(\Psi_{-,n})_{n\in\mathbb{N}}$
	with $\Psi_{-,n}=(\psi_{-,n},\cD_{\delta_n}^{\R^2}\psi_{-,n})\in \cG(\cD_{\delta_n}^{\R^2})$. 
	Since the approximating operators are elliptic, their common domain is $H^1(\C)^2$. It follows that we have:
	\[
	\cip{\psi_{\delta_n}}{\cD_{\delta_n}^{\R^2}\psi_{-,n}}_{L^2}=\cip{\cD_{\delta_n}^{\R^2}\psi_{\delta_n}}{\psi_{-,n}}_{L^2}.
	\]
	By weak convergence of $\psi_{\delta_n}$ and $\cD_{\delta_n}^{\R^2}\psi_{\delta_n}$ and strong convergence of $\cD_{\delta_n}^{\R^2}\psi_{-,n}$
	and $\psi_{-,n}$, we obtain as $n\to+\infty$:
	\[
	\cip{\psi_{0}}{\cD_{0,\alpha}^{(-)}\psi_{-,\infty}}_{L^2}=\cip{\sigma\cdot(-i\nabla)\psi_{0}|_{\C\setminus L_0}}{\psi_{-,\infty}}_{L^2}.
	\]
	In other words, $\psi_0\in\dom((\cD_{0,\alpha}^{(-)})^*)=\dom(\cD_{0,\alpha}^{(-)})$.
	Let us show that we have $\Psi_0\in\dom(\cD_{0,\alpha}^{(\min)})$.
	By assumption, $\Psi_{\delta_n}$ is orthogonal
	to the singular line, thus $\Psi_0$ is orthogonal to $\Psi_{\sing}$ by weak convergence.
	Let us show that we have $\Psi_0\in\dom(\cD_{0,\alpha}^{(\min)})$.
	In the scalar gauge the function 
	\[
	\Psi_{\delta_n,h}:=\mathrm{exp}\big[-i\zeta_{\phi_\delta}\big]\Psi_{\delta_n}
	\]
	satisfies the following. Let $\wt{P}_{\sing,h}$ be the projection onto $\C e^{i\alpha\theta}\wt{\Psi}_{\sing}$,
	where $\wt{\Psi}_{\sing}\in\cG(\cD_{0,\alpha}^{(+)})$ is defined in \eqref{eq:sing_+}, 
	then we have by Theorem~\ref{thm:tamura}:
	\[
	\limsup_{n\to+\infty}\norm{\wt{P}_{\sing,h} \Psi_{\delta_n,h}}\le \norm{\wt{P}_{\sing,h}P_\alpha}=0.
	\]
	By pointwise convergence of $\mathrm{exp}\big[i\zeta_{\phi_\delta}\big]$, the function
	\[
	\mathrm{exp}\big[-i\zeta_{\phi_\delta}\big]e^{i\alpha\theta}\wt{\Psi}_{\sing}
	\]
	converges strongly in $L^2$ which implies that $\Psi_0$ is orthogonal to $e^{i\alpha\theta}\wt{\Psi}_{\sing}$,
	hence $\Psi_0\in\dom(\cD_{0,\alpha}^{(\min)})$.
	By assumption $\Psi_{\delta}$ does not converge in norm, so there has to be some loss of mass: 
	$\norm{|\Psi_0}_{L^2}<\limsup_{n\to \infty}\norm{|\Psi_{\delta_n}}_{L^2}$.
	Besides the possible loss of mass at infinity,
	the mass of $\psi_{\delta_n}$ can only concentrate around $0$, that is 
	$
		\lim_{\eps\to 0}\limsup_{n\to+\infty}\int_{D_{\C}(0,\eps)}|\psi_{\delta_n}|^2>0
	$
	can occur.
	\medskip

	\subparagraph{\textit{Convergence in $L^2_{\loc}(\C)^2$}}
	We have to show that there is no mass of $(\psi_{\delta_n})_{n\in\mathbb{N}}$ concentrating around $0$. 
	The $L^2$-norm is insensitive to gauge transformation. So if there is concentration of mass around $0$, then
	the same holds for the sequence $(e^{-i\zeta_{\phi_{\delta_n}}}\psi_{\delta_n})$. This cannot happen due to the norm-resolvent
	convergence of $\cD_{\bA_{\phi_{\delta_n}},h}$ to $\cD_{0,\alpha,h}^{(-)}$ (by Theorem~\ref{thm:tamura}).
 	We can proceed as follows. Let $\chi_1\in\sD(\C,[0,1])$ with $\chi(x)=1$ for $|x|\le 1$
	and let $w_n:=\chi_1 e^{-i\zeta_{\phi_{\delta_n}}}\psi_{\delta_n}$, we show that $w_n$ converges in norm. 
	Introducing $\chi_2(x):=\chi_1(\tfrac{x}{2})$, we have $\chi_2w_n=w_n$ and for all $n$, the following
	operator is compact:
	\begin{equation*}
	    K_n:=\chi_2 (\cD_{\bA_{\phi_{\delta_n}},h}+i)^{-1}=\chi_2(\sigma\cdot(-i\nabla^{\R^2})+i)^{-1}\big[1+\sigma\cdot\bA_{\phi_{\delta_n},h}(\cD_{\bA_{\phi_{\delta_n}},h}+i)^{-1}\big].
	\end{equation*}
	Norm-resolvent convergence implies: $K_n\underset{n\to+\infty}{\longrightarrow}K_0:=\chi_2(\cD_{0,\alpha,h}^{(-)}+i)^{-1}$ in operator norm. In particular $K_0$ is compact.
	We have weak-$L^2$ convergence of $w_n$ and $\cD_{\bA_{\phi_{\delta_n}},h} w_n$ to 
	$e^{-i\alpha\theta}\chi_1\psi_0$ and $e^{-i\alpha\theta}\cD_{0,\alpha}^{(-)}(\chi_1\psi_0)$ respectively. 
	Putting everything together, it follows that
	\[
	 w_n=\chi_2w_n=K_n (\cD_{\bA_{\phi_{\delta_n}},h} w_n)+iK_nw_n
	\]
	converges strongly in $L^2(\C)^2.$

\subsubsection{Extension to $\cD_{\T_{\ell_k},\alpha_k}^{(-)}$ on $L^2(\T_{\ell_k}\times\R^2)^2$}
Let us consider one knot $\gamma_k$ with length $\ell_k>0$ and a flux $0<\alpha<1$. 
In \cite{dirac_s3_paper1}, we introduced the operator $\cD_{\T_{\ell_k},\alpha_k}^{(-)}$ as a model operator, which can be seen
as acting on $L^2$-spinors over the normal bundle of $\gamma_k$ in $\S^3$ (with a flat metric). Formally it corresponds to the operator
$-i\tfrac{\d}{\d s}\otimes\sigma_3+\mathrm{id}\otimes\cD_{0,\alpha}^{(-)}$ acting on $L^2(\T_{\ell_k})\otimes L^2(\R^2)^2$ identified with $L^2(\T_{\ell_k}\times \R^2)^2$.

Here and below, $\norm{\cdot}_{\T_{\ell_k}}$ denotes the graph norm of $\cD_{\T_{\ell_k},\alpha}^{(-)}$.
We refer the reader to \cite{dirac_s3_paper1}*{Section 3.2.2} for more details, including the proof of the following Lemma.

\begin{lemma}\label{lem:smart}
	the operator $-i\tfrac{\d}{\d s}\otimes\sigma_3+\mathrm{id}\otimes\cD_{0,\alpha}^{(-)}=\cD_{\T_{\ell_k},\alpha}^{(-)}$ is self-adjoint on its natural domain in 
	$L^2(\T_{\ell_k})\otimes L^2(\R^2)^2$,
	and for all $\psi\in \dom\big( \cD_{\T_{\ell_k},\alpha}^{(-)}\big)$ we have
	\begin{equation}\label{eq:egalite}
		\dint\Big| \cD_{\T_{\ell_k},\alpha}^{(-)}\psi \Big|^2=\dint |\partial_s \psi|^2+\dint |\cD_{0,\alpha}^{(-)}\psi|^2.
	\end{equation}
\end{lemma}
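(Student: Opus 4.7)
The plan is to reduce the problem to a fibrewise analysis via Fourier series in the periodic variable $s$. Setting $\omega_n := 2\pi n/\ell_k$, the unitary identification
\[
	L^2(\T_{\ell_k}\times\R^2)^2 \;\cong\; \bigoplus_{n\in\Z} L^2(\R^2)^2, \qquad \psi \mapsto (\psi_n)_{n\in\Z},
\]
obtained by Parseval, diagonalizes $-i\partial_s$ as multiplication by $\omega_n$ on the $n$-th summand. The operator in the lemma therefore decomposes formally as a Hilbertian direct sum
\[
	\cD_{\T_{\ell_k},\alpha}^{(-)} \;=\; \bigoplus_{n\in\Z} D_n, \qquad D_n := \omega_n \sigma_3 + \cD_{0,\alpha}^{(-)}.
\]
For each $n$, $\omega_n\sigma_3$ is bounded and self-adjoint on $L^2(\R^2)^2$, so $D_n$ is self-adjoint on $\dom(\cD_{0,\alpha}^{(-)})$ by the Kato--Rellich theorem. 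The Hilbertian direct sum of the $D_n$'s is then self-adjoint on its natural domain, which coincides with the natural tensor-product domain assigned to $\cD_{\T_{\ell_k},\alpha}^{(-)}$.

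To establish the identity \eqref{eq:egalite} I expand
\[
	\norm{\cD_{\T_{\ell_k},\alpha}^{(-)}\psi}^2 = \norm{-i\partial_s\otimes \sigma_3\,\psi}^2 + \norm{\mathrm{id}\otimes \cD_{0,\alpha}^{(-)}\psi}^2 + 2\Re\cip{-i\partial_s\otimes \sigma_3\,\psi}{\mathrm{id}\otimes \cD_{0,\alpha}^{(-)}\psi}.
\]
Since $\sigma_3$ is unitary, the first summand equals $\int|\partial_s\psi|^2$, and Fubini identifies the second with $\int|\cD_{0,\alpha}^{(-)}\psi|^2$. For the cross term I integrate by parts in $s$ (periodic boundary conditions make $-i\partial_s$ symmetric), use that $-i\partial_s$ commutes with $\cD_{0,\alpha}^{(-)}$ because they act in disjoint variables, and invoke the self-adjointness of $\cD_{0,\alpha}^{(-)}$ in the $\R^2$-factor. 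The anticommutation relation $\{\sigma_3,\cD_{0,\alpha}^{(-)}\}=0$, coming from $\sigma_3\sigma_j=-\sigma_j\sigma_3$ for $j=1,2$, then flips one sign and yields
\[
	\cip{-i\partial_s\otimes\sigma_3\,\psi}{\mathrm{id}\otimes\cD_{0,\alpha}^{(-)}\psi} = -\overline{\cip{-i\partial_s\otimes\sigma_3\,\psi}{\mathrm{id}\otimes\cD_{0,\alpha}^{(-)}\psi}},
\]
so its real part vanishes, proving \eqref{eq:egalite}.

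The only non-routine step is to verify that $\sigma_3$ preserves $\dom(\cD_{0,\alpha}^{(-)})$ and that the anticommutation really holds there as an operator identity (not just as a formal computation). On the minimal part $\dom(\cD_{0,\alpha}^{(\min)})$ this is immediate, since matrix multiplication by $\sigma_3$ preserves both $H^1$-regularity off $L_0$ and the phase jump $\psi|_{+}=e^{-2i\pi\alpha}\psi|_{-}$. The singular element $(0,K_\alpha(r)e^{i\alpha\theta})^T$ is a $(-1)$-eigenvector of $\sigma_3$, while its image under $\cD_{0,\alpha}^{(-)}$ lies in the upper component, i.e.\ in the $(+1)$-eigenspace of $\sigma_3$, so the two sides of $\sigma_3\cD_{0,\alpha}^{(-)}=-\cD_{0,\alpha}^{(-)}\sigma_3$ agree on the one-dimensional complement as well. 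Once this compatibility is in hand, the Fourier decomposition of the first paragraph and the cross-term cancellation of the second together prove both assertions of the lemma.
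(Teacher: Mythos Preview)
Your proof is correct and follows the same strategy the paper indicates: the paper does not prove the lemma in the text but refers to \cite{dirac_s3_paper1}*{Section~3.2.2}, and immediately afterwards uses exactly the Fourier decomposition into $L^2(\T_{\ell_k})$-modes with values in $\dom(\cD_{0,\alpha}^{(-)})$ that you set up. Your verification that $\sigma_3$ preserves $\dom(\cD_{0,\alpha}^{(-)})$ and anticommutes with $\cD_{0,\alpha}^{(-)}$ there (checked separately on the minimal part and on the singular line) is the one substantive point, and you handle it correctly; once that is in hand, the fibrewise identity $\|D_n\psi_n\|^2=\omega_n^2\|\psi_n\|^2+\|\cD_{0,\alpha}^{(-)}\psi_n\|^2$ simultaneously identifies the natural domain and yields \eqref{eq:egalite} upon summation.
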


By using a Hilbert basis corresponding to the decomposition of the domain $\dom(\cD_{\T_{\ell_k},\alpha}^{(-)})$ 
into $L^2(\T_{\ell_k})$-Fourier modes with values $\dom(\cD_{0,\alpha}^{(-)})$, 
and using the convergence of $(\cD_{\bA_{\phi_\delta}^{\R^2}})_{\delta}$ to $\cD_{0,\alpha}^{(-)}$, 
we obtain the following result, which extends Theorems~\ref{thm:conv_2D_dirac}~and~\ref{thm:defect}.
	
\begin{proposition}\label{prop:strg_res_pour_DT}
	The family of elliptic operators
	of 
	$$
		\Big(-i\sigma_3\partial_s+\cD_{\bA_{\phi_\delta}^{\R^2}}\Big)_{\delta>0}
	$$
	on $L^2(\T_{\ell_k}\times\R^2)^2$ converges to $\cD_{\T_{\ell_k},\alpha}^{(-)}$ as $\delta\to 0^+$ 
	in the strong resolvent sense. 
	Furthermore the statement of Theorem~\ref{thm:defect} also applies to $\cD_{\T_{\ell_k}}^{(-)}$ and its approximations
	$\big(-i\sigma_3\partial_s+\cD_{\bA_{\phi_\delta}^{\R^2}}\big)_{\delta>0}$.
 \end{proposition}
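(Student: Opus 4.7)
The plan is to diagonalize the $s$-dependence via Fourier decomposition on $\T_{\ell_k}$. Writing $e_n(s) := \ell_k^{-1/2} e^{i 2\pi n s/\ell_k}$, we have
\[
L^2(\T_{\ell_k}\times\R^2)^2 = \bigoplus_{n\in\Z} e_n \otimes L^2(\R^2)^2,
\]
and on the $n$-th fiber the approximating operator $A_\delta := -i\sigma_3\partial_s + \cD_{\bA_{\phi_\delta}^{\R^2}}$ restricts to $B_n^\delta := \tfrac{2\pi n}{\ell_k}\sigma_3 + \cD_{\bA_{\phi_\delta}^{\R^2}}$, while by Lemma~\ref{lem:smart} the limiting operator $\cD_{\T_{\ell_k},\alpha}^{(-)}$ restricts to $B_n := \tfrac{2\pi n}{\ell_k}\sigma_3 + \cD_{0,\alpha}^{(-)}$.

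For the strong resolvent convergence, Theorem~\ref{thm:conv_2D_dirac} gives $\cD_{\bA_{\phi_\delta}^{\R^2}} \to \cD_{0,\alpha}^{(-)}$ strongly in the resolvent sense, and since strong resolvent convergence is preserved under a fixed bounded self-adjoint perturbation, $B_n^\delta \to B_n$ in the same sense for every $n\in\Z$. For $\Psi \in L^2(\T_{\ell_k}\times\R^2)^2$ decomposed as $\Psi = \sum_n e_n \otimes \psi_n$, the mode-by-mode convergence gives the convergence of $(A_\delta+i)^{-1}$ on the truncations to $|n|\le N$; the uniform bound $\|(A_\delta + i)^{-1}\|\le 1$ controls the tail, and a $3\eps$ argument yields the global strong resolvent convergence.

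For the extension of Theorem~\ref{thm:defect}, the key observation is the anti-commutation
\[
\{-i\sigma_3\partial_s,\,\cD_{\bA_{\phi_\delta}^{\R^2}}\} = 0,
\]
which holds because $\sigma_1,\sigma_2$ anti-commute with $\sigma_3$ while $\partial_s$ commutes with the $\bu$-dependent potential. Squaring and integrating by parts on the closed circle $\T_{\ell_k}$ upgrades the identity \eqref{eq:egalite} to the smooth case:
\[
\|A_\delta \psi\|^2 = \|\partial_s \psi\|^2 + \|\cD_{\bA_{\phi_\delta}^{\R^2}}\psi\|^2, \qquad \psi\in\dom(A_\delta).
\]
Hence any graph-norm bounded family $(\Psi_\delta)$ yields, for each Fourier coefficient $\psi_{\delta,n}$, a uniform 2D graph-norm bound with respect to $\cD_{\bA_{\phi_\delta}^{\R^2}}$, together with the summability $\sum_n (\tfrac{2\pi n}{\ell_k})^2 \|\psi_{\delta,n}\|^2_{L^2(\R^2)} \le C$. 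Along a sequence $\delta_m \to 0$, a diagonal extraction combined with Theorem~\ref{thm:defect} applied mode-by-mode produces a weak limit $\psi_0 = \sum_n e_n \otimes \psi_{0,n}$ with $\psi_{0,n} \in \dom(\cD_{0,\alpha}^{(-)})$, and the $\ell^2$-estimate gives $\psi_0 \in \dom(\cD_{\T_{\ell_k},\alpha}^{(-)})$; the $L^2_{\loc}$ convergence follows mode-wise from Theorem~\ref{thm:defect}, reassembled by Plancherel on the compact torus $\T_{\ell_k}$.

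The main obstacle will be the possible mass-loss near $r=0$ in each $\R^2$-slice documented by Theorem~\ref{thm:defect}: one must verify that the only obstruction to norm convergence comes uniformly from the singular lines $\C K_\alpha(r) e^{i\alpha\theta}$ in each mode, so that after subtracting the mode-wise projections onto these lines the convergence is strong. The clean anti-commutation identity is what decouples the modes enough that this reduces to applying the 2D statement in each sector and summing.
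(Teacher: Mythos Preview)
Your approach is correct and is precisely the one the paper indicates: decompose into $L^2(\T_{\ell_k})$-Fourier modes and reduce to the two-dimensional results (Theorems~\ref{thm:conv_2D_dirac} and~\ref{thm:defect}) fiber by fiber; the paper states this in one sentence, and you have simply filled in the routine details (the bounded-perturbation stability of strong resolvent convergence, the $3\eps$ reassembly, the anti-commutation giving the smooth analogue of~\eqref{eq:egalite}, and the tail control via $\sum_n (2\pi n/\ell_k)^2\|\psi_{\delta,n}\|^2\le C$).

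One small correction to your closing paragraph: Theorem~\ref{thm:defect} does \emph{not} document mass-loss near $r=0$; on the contrary, its conclusion~(1) is $L^2_{\loc}(\R^2)^2$-convergence, and the paper's proof explicitly rules out concentration at the origin. So your stated ``main obstacle'' is already handled by the 2D theorem you are invoking, and no further work on singular lines is needed.
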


\section{Proof of Theorems~\ref{thm:conv_dirac}~\&~\ref{thm:cont_homot}}\label{sec:proof_thm_conv_dirac}

\subsection{Strategy to prove the continuity in the gap topology}\label{sec:strategy}

We will prove the convergence stepwise and show successively.
	\begin{enumerate}
		\item Convergence in the strong resolvent sense,
		$$
			\slim_{\delta \to 0^{+}}(\cD_{\bA_{\phi_\delta}}+i)^{-1}
			= (\cD_{\bA}^{(-)} + i)^{-1}.
		$$ 
		\item For any smooth function $\Lambda\in\sD(\R)$,
		$$
			\lim_{\delta \to 0^{+}}\Lambda(\cD_{\bA_{\phi_\delta}})
			= \Lambda(\cD_{\bA}^{(-)})
			\textrm{ in } \mathcal{B}\big(L^2(\S^3)^2\big).
		$$
	\end{enumerate}
	
As the bounded functions $x\in\R\mapsto (x\pm i)^{-1}$ tend to $0$ at infinity, 
this will give the norm-resolvent convergence by functional calculus. We recall Theorem~\ref{thm:norm_resv_conv}:
norm-resolvent convergence and convergence in the gap topology (that is convergence in operator norm
of the orthogonal projections onto the graphs) coincide.

\begin{rem}[Quick recall of the geometrical objects]\label{rem:quick_recall}
	We recall that for a knot with Seifert surface $S$ embedded 
	in a closed and oriented surface $\Sig$, the coordinates $(s,\bu)$ relative to 
	$S$ are defined in \eqref{eq:cart_coord}
	The number $|u_2|$ corresponds to the distance to $\Sig$, $|u_1|$ corresponds to the $\Sig$-distance to $\gamma$
	of the projection onto $\Sig$, which $\Sig$-projects itself to $\gamma(s)$. The sign of $u_2,u_1$ corresponds to the side
	of the points $\bp,\mathrm{proj}_{\Sig}\bp$ with respect to $\Sig$ and $\gamma$ respectively.
	
	We extend the Seifert frame $(\bT,\bS,\bN)$ on $B_\delta[\gamma]$ stepwise.
	 First $\bT_{\Sig},\bS_{\Sig}$ are the $\Sig$-parallel transportations of $\bT,\bS$ 
	 along $\Sig$-geodesics orthogonal to $\gamma$, and $\bN$ is the normal to $\Sig$. 
	 Off $\Sig$, the vectors $\bT,\bS,\bN$ are the parallel transportations of $\bT_{\Sig},\bS_{\Sig},\bN$
	 along $\S^3$-geodesics orthogonal to $\Sig$.
	 
	 The Seifert frame defines the normalized sections $(\eta_+,\eta_-)$ of the spinor bundle 
	 on $B_\delta[\gamma]$ by \eqref{eq:def_sections} (defined up to a common phase $e^{i\phi(s)}$).
	 Pointwise, $\eta_{\pm}$ are eigenfunctions of the matrix $\sigma\cdot\bT$ (with eigenvalue $\pm1$).
	 
	 The expression of the free Dirac operator in the coordinates $(s,\bu)$ is written in Proposition~\eqref{prop:expr_free_dirac}.
	 It involves a model operator and the ``error terms" $L_{\err}^{(s)}X_s+L_{\err}^{(u_1)}X_{u_1}$ \eqref{eq:def_L_err},
	 where $X_s,X_{u_1}$ represent the (pushforwards of) the vector fields $\tfrac{\partial}{\partial s}$ and $\tfrac{\partial}{\partial u_1}$.
	 The model operator corresponds to the free Dirac operator for the flat metric on $\T_{\ell_{\gamma}}\times\R^2$, and the matrices 
	 $L_{\err}^{(s)},L_{\err}^{(u_1)}$ show the discrepancy between the flat metric and the (pullback of the) metric on $\S^3$.
\end{rem}

\subsection{Localization and jump phase functions}\label{sec:techn_tools_loc_phase_fun}
We begin by introducing some technical tools to help in the analysis of $\dom(\cD_{\bA}^{(-)})$.
These utilities are in some sense similar to the ones used in \cite{dirac_s3_paper1}, yet they 
appear in a different guise since we work in the coordinates $(s,\bu)$, whereas in \cite{dirac_s3_paper1}
we use the chart:
\begin{equation}\label{eq:exp_chart}
(s,\rho,\theta)\in\T_{\ell}\times [0,\eps)\times\R/2\pi\Z\mapsto 
\mathrm{exp}_{\gamma(s)}^{\S^3}[\rho(\cos(\theta)\bS(s)+\sin(\theta)\bN(s))].
\end{equation}

\subsubsection{Localization functions}\label{loc_fun}
We pick $\delta_0>0$ small enough such that the tubular neighborhood 
$B_{2\delta_0}[\gamma]$ has $K$ connected components,
and such that on each $B_{\delta_0}[\gamma_k]$, the coordinates 
$(s,\bu)$ relative to $S$ are well defined (see Section \ref{sec:def_coord}). We recall that we have:
\[
	u_{1,k}(\bp)+iu_{2,k}(\bp)=:r_k(\bp)\exp(i\theta_k(\bp)), \quad \bp\in B_{\delta_0}[\gamma_k].
\]

For simplicity, we will drop the dependence in $k$ of the coordinates.

Next, we chose a cut-off function $\chi_1\in \sD(\R,[0,1])$ with
$\chi(x)=1$ for $|x|\le 2^{-1}$ and $\chi(x)=0$ for $|x|\ge 1$.
We define the localization function at level $0<\delta \le \delta_0$ by
\begin{equation}\label{eq:chi_loc_curve}
	\chi_{\delta,\gamma_k}: 
	\begin{array}{ccl}
		B_{\delta}[\gamma_k] &\longrightarrow& [0,+\infty)\\
		\bp 
		&\mapsto& \chi_1\left(r(\bp)\delta^{-1}\right).
	\end{array}
\end{equation}
Then we pick $0<\delta<\delta_0$ such that
$$
	y_T(s,u_1)C_{\gamma_k}(s,u_1,u_2)(\bp)\ge 3/4, \quad \forall \bp\in B_\delta[\gamma_k],
$$
and the pullback of the volume form $\vol_{g_3}$ becomes comparable 
to the canonical volume form $ds\wedge du_1\wedge du_2$ on $\T_k\times \D_{\delta}$.
The function $y_T(s,u_1)$ is defined in Proposition~\ref{prop:push_forward}, the function $C_{\gamma_k}(s,u_1,u_2)$
in Corollary~\ref{coro:volume_form}, and the volume form of $\S^3$ in the coordinates $(s,\bu)$ is
\[
y_T(s,u_1)C_{\gamma_k}(s,u_1,u_2)\d s\wedge\d u_1\wedge \d u_2. 
\]

Furthermore, set
$$
	\chi_{\delta,S_k}(\bp) 
	:= \chi\left(4\tfrac{u_2(\bp)}{\delta}\right)\left(1-\chi_{\delta,\gamma_{k}}(\bp)\right),
$$
which has support close to the Seifert surface $S_k$, but away from the knot $\gamma_k$.
Then the remainder is defined as
$$
	\chi_{\delta,R_k}(\bp) := 1-\chi_{\delta,S_k}(\bp) - \chi_{\delta,\gamma_k}(\bp),
$$
which constitutes a partition of unity subordinate to $S_k$.
The partition of unity for the entire link is given by
\begin{align}\label{def:part_unity}
	1 &= \prod_{k=1}^K\left(\chi_{\delta,\gamma_k}(\bp) + \chi_{\delta,S_k}(\bp)
	+\chi_{\delta,R_k}(\bp) \right)\nn\\
	&= \sum_{\underline{j} \in \{1,2,3\}^K}\chi_{\delta,\underline{j}}(\bp)
	=\sum_{k=1}^K\chi_{\delta,\gamma_k}(\bp)+\sum_{\underline{j} \in \{2,3\}^K}\chi_{\delta,\underline{j}}(\bp),
\end{align}
where $\chi_{\delta,\underline{j}}$ is the product $\prod_{k=1}^K\chi_{\delta,X_k}$, with 
$$
	X_k = \left\{\begin{array}{lr}
	\gamma_k, &j_k = 1,\\
	S_{k}, &j_k = 2,\\
	R_k, &j_k = 3.
	\end{array}\right.
$$

\subsubsection{Jump phase functions}
Let $\psi\in\dom(\cD_{\bA}^{(-)})$. We study the phase jumps of 
$\chi_{\delta,\underline{j}}\psi$ for any tuple $\underline{j}$.

\subsubsection*{Away from the knots:}
	This corresponds to the functions coming from the last term in \eqref{def:part_unity}.
	Fix $\underline{j} \in \{2,3\}^K$, and write $\{1,\dots,K\} = J_2 \cup J_3$ where $j_k=2, k \in J_2$ and 
	$j_k=3, k \in J_3$.
	If $J_2$ is empty, then the function $(\chi_{\delta,\underline{j}}\psi)_n$ has no phase jump and is in $H^1(\S^3)^2$.
	
	Assume now that $J_2$ contains at least one element. 
	For each $1 \leq k \leq K$, the Seifert surface $S_k$ splits $\supp \chi_{\delta,S_{k}}$ into two 
	disconnected regions $O_{k,\pm}^{(n)}$, which correspond to the regions above and below the Seifert surface.
	We remove all the phase-jumps with the help of the function
	\begin{equation}\label{def:n_phase_remove_smooth_bulk}
		E_{\delta,\underline{j}}:=\prod_{k \in J_2}\exp\left[-2\pi i \alpha_k\mathds{1}_{O_{k,+}}\right].
	\end{equation}
	The function $\overline{E}_{\delta,\underline{j}}\chi_{\delta,\underline{j}}\psi$ does not contain any phase jump and
	is again in $H^1(\S^3)^2$.

\subsubsection*{Close to the knots:}
Let us say that $\gamma_k$ intersects the $S_{k'}$ at different points.
We study the corresponding phase jumps one Seifert surface at a time.
Consider an intersecting Seifert surface $S_{k'}$ which intersects $\gamma_k$ at
$0\le s_{0,k'}<s_{1,k'}<\ldots<s_{M_{k,k'}-1,k'}<\ell_k$, inducing a phase jump $e^{ib_{m,k'}}$
across $s_{m,k'}$ (with $b_{m,k'}=\pm2\pi\alpha_{k'}$).
We have:
\[
	\sum_{m=0}^{M_{k,k'}-1}b_{m,k'}=-2\pi\alpha_{k'}\link(\gamma_{k'},\gamma_k).
\]
Now we define for all $k'$ a phase function $E_{k,k'}(\bp)$ that contains the phase jump due to the intersecting $S_{k'}$
in $B_{\eps}[\gamma_k]$. Below the coordinate $s(\bp)$ corresponds to the chart $F_{\gamma_k}$.
Two cases can occur:

\subparagraph{\textit{Case 1: $\gamma_{k'}\cap \gamma_k=\{s_{0,k'}\}$}}
	Then we define
	\begin{equation}\label{Eq:E_k_k'_1_intersection}
		E_{k,k'}(\bp):=\exp\Big(-i \frac{b_{0,k'}}{\ell_k}(s(\bp)-s_{0,k'})\Big),\ s_{0,k'}\le s\le s_{0,k'}+\ell_k,
	\end{equation}
	the function $E_{k,k'}$ has the correct phase jump across $s(\bp)=s_{0,k'}\mod \ell_k$.
\subparagraph{\textit{Case 2: $\big|S_{k'}\cap \gamma_k\big|=M_{k,k'}\ge 2$}} 
	Then we have
	\[
		B_{\eps}[\gamma_k] \cap \complement S_{k'}
		=: R_{01}^{(k')} \cup R_{12}^{(k')} \cup \dots \cup R_{(M_{k,k'}-1)M_{k,k'}}^{(k')},
	\]
	which represents the decomposition of the tubular neighborhood around $\gamma_k$ cut into $M_{k,k'}$ sections.
	When passing from $R_{(m-1)m}^{(k')}$ to $R_{m(m+1)}^{(k')}$, we are going through $S_{k'}$
	which induces the phase jump $e^{ib_{m,k'}}$.
	Writing 
	$$
		R_{0m}^{(k')}:=R_{01}^{(k')} \cup R_{12}^{(k')} \cup \dots \cup R_{(m-1)m}^{(k')},
	$$
	we set
	\begin{equation}\label{Eq:E_k_k'_many_intersections}
		E_{k,k'}(\bp):=\exp\Big(-i \sum_{m=1}^{M_{k,k'}}\frac{b_{m,k'}}{\ell_k}s(\bp)
		+i \sum_{m=1}^{M_{k,k'} - 1}b_{m,k'} \mathds{1}_{R_{0m}^{(k')}}(\bp)\Big),\ 0\le s\le \ell_k.
	\end{equation}
		
Accompanying these decompositions, for $\bp \in B_{\eps}[\gamma] \cap \Omega_{\uS}$ 
we define	the function
\begin{equation}\label{def:curve_phasej_rem}
	E_k(\bp) := \prod_{k'\neq k}E_{k,k'}(\bp),
\end{equation}
with the convention that $E_{k,k'}\equiv 1$ when $\gamma_k\cap S_{k'}=\emptyset$ 
We recall that $\Omega_{\uS}$ denotes:
\[
\Omega_{\uS}:=\S^3\setminus(\cup_{1\le k\le K} S_k).
\]

\begin{proposition}\label{prop:phase_jump_removal}
	Let $\bA_{(k)}=2\pi\alpha_k [S_k]$, and let $\chi_{\delta,\gamma_k}$, $E_{k}$ 
	be defined as in \eqref{eq:chi_loc_curve} and \eqref{def:curve_phasej_rem}.
	If $\psi$ is in $\dom(\cD_{\bA}^{(-)})$ resp. in $\dom(\cD_{\bA}^{(\min)})$,
	then $\overline{E}_{k}\chi_{\delta,\gamma_k}\psi$ is in $\dom(\cD_{\bA_{(k)}}^{(-)})$ 
	resp. in $\dom(\cD_{\bA_{(k)}}^{(\min)})$.
\end{proposition}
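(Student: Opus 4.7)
The plan is to verify successively: (i) the phase-jump structure of $\overline{E}_k\chi_{\delta,\gamma_k}\psi$, (ii) its $H^1_{\bA_{(k)}}$-regularity, which handles the $\dom(\cD_{\bA_{(k)}}^{(\min)})$ case, and (iii) the alignment condition near $\gamma_k$, for the $(-)$-case. Since $\chi_{\delta,\gamma_k}$ is supported in $B_\delta[\gamma_k]\subset B_{\delta_0}[\gamma_k]$, the only Seifert surfaces producing phase jumps of $\chi_{\delta,\gamma_k}\psi$ are $S_k$ itself and the sheets $S_{k'}\cap B_{\delta_0}[\gamma_k]$ for those $k'\neq k$ whose surface pierces the tube transversally. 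For step (i), I would inspect \eqref{Eq:E_k_k'_1_intersection}~and~\eqref{Eq:E_k_k'_many_intersections}: each factor $E_{k,k'}$ has exactly the jumps $e^{ib_{m,k'}}$ across the $m$-th sheet of $S_{k'}\cap B_{\delta_0}[\gamma_k]$ -- in Case~1 this jump comes from the periodic identification at $s=s_{0,k'}\bmod \ell_k$, while in Case~2 it comes from the indicators $\mathds{1}_{R_{0m}^{(k')}}$. As $\psi$ carries the matching jump $e^{ib_{m,k'}}$ by definition of $\dom(\cD_{\bA}^{(\min)})$, the product $\overline{E}_k\chi_{\delta,\gamma_k}\psi$ is continuous across each $S_{k'}$ with $k'\neq k$; the jump $e^{-2i\pi\alpha_k}$ across $S_k$ is preserved intact, since no factor of $E_k$ depends on $S_k$.

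For step (ii), $E_k$ is smooth on each connected component of $B_{\delta_0}[\gamma_k]\setminus (\cup_{k'\neq k}S_{k'})$, with gradient supported in the $\partial_s$-direction and uniformly bounded. Thus, away from all $S_{k'}$'s, the Leibniz rule immediately gives $\nabla(\overline{E}_k\chi_{\delta,\gamma_k}\psi) \in L^2$. The trace on $S_k\setminus\gamma_k$ is in $H^{1/2}_{\loc}$ because $\psi$ already has this property on $S_k\setminus(\cup_{k'\neq k}S_{k'}\cup \gamma_k)$, and after cancellation of the $S_{k'}$-jumps by $\overline{E}_k$ the trace extends continuously across the smooth arcs $S_k\cap S_{k'}$. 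I expect this last point to be the main technical obstacle: without the precise cancellation engineered by $E_k$, the trace of $\overline{E}_k\chi_{\delta,\gamma_k}\psi$ on $S_k$ would only be $H^{1/2}$ on each piece cut out by the intersecting $S_{k'}$'s, failing the global $H^{1/2}_{\loc}(S_k\setminus \gamma_k)^2$ requirement of $H^1_{\bA_{(k)}}$. The transversality assumption (making each $S_k\cap S_{k'}$ a smooth one-dimensional submanifold) and the careful matching of jumps together give the gluing.

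For step (iii), I would choose $0<\delta'\le \delta/2$ so that $\chi_{\delta,\gamma_k}\equiv 1$ on $\supp\chi_{\delta',\gamma_k}$. Since $\overline{E}_k$ is a scalar function and $\eta_+$ is a smooth spinor on $B_\eps[\gamma_k]$, we have the identity $\langle \eta_+,\chi_{\delta',\gamma_k}(\overline{E}_k\chi_{\delta,\gamma_k}\psi)\rangle\eta_+ = \overline{E}_k\langle \eta_+,\chi_{\delta',\gamma_k}\psi\rangle\eta_+$. From $\psi\in \dom(\cD_{\bA}^{(-)})$, the auxiliary spinor $f := \langle \eta_+,\chi_{\delta',\gamma_k}\psi\rangle\eta_+$ already lies in $\dom(\cD_{\bA}^{(\min)}) = H^1_{\bA}(\S^3)^2$ and inherits from $\psi$ the same phase-jump structure across the $S_{k'}$'s (being a scalar multiple of the smooth section $\eta_+$). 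The argument of step (ii) can then be applied to $f$ in place of $\psi$ to conclude $\overline{E}_k f \in H^1_{\bA_{(k)}}(\S^3)^2 = \dom(\cD_{\bA_{(k)}}^{(\min)})$, which is precisely the alignment condition defining $\dom(\cD_{\bA_{(k)}}^{(-)})$.
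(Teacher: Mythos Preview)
The paper does not give a proof of this proposition; it is stated as a direct consequence of the construction of the phase functions $E_{k,k'}$ and then used in what follows. Your three-step argument is a correct and complete fleshing-out of that construction: the phase cancellation in step~(i), the $H^1$-gluing across the transverse sheets $S_{k'}\cap B_{\delta_0}[\gamma_k]$ in step~(ii), and the reduction of the alignment condition to the minimal-domain case via the scalar nature of $\overline{E}_k$ in step~(iii) are exactly what is needed.

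One minor inaccuracy: in Case~2 (formula~\eqref{Eq:E_k_k'_many_intersections}) the indicator functions $\mathds{1}_{R_{0m}^{(k')}}$ jump across the actual surfaces $S_{k'}$, not across the coordinate hypersurfaces $\{s=s_{m,k'}\}$, so the gradient of $E_k$ is not literally ``supported in the $\partial_s$-direction'' near those sheets. What is true --- and is all you need --- is that on each connected component of $B_{\delta_0}[\gamma_k]\setminus(\cup_{k'\neq k}S_{k'})$ the function $E_k$ is smooth with uniformly bounded gradient (coming from the linear-in-$s$ exponent), while the indicator parts are constant there. The $H^1$-gluing across $S_{k'}$ then follows, as you indicate, from the matching of traces combined with the standard fact that a function which is $H^1$ on both sides of a smooth hypersurface with coinciding $H^{1/2}$-traces is $H^1$ across it.
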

		
Furthermore we have these important results, where $\norm{\cdot}_{\T_{\ell_k}}$ and
$\norm{\cdot}_{\bA_{(k)}}$ denote the graph norms of $\cD_{\T_{\ell_k},\alpha}^{(-)}$ and 
$\cD_{\bA_{(k)}}^{(-)}$ respectively.
\begin{proposition}\label{prop_link_with_model_case}
	With the same notation as in Proposition~\ref{prop:phase_jump_removal}, 
	for $\psi\in \dom(\cD_{\bA_{(k)}}^{(-)})$ let $f=(f_+,f_-)^T$
	be the spinor in $L^2(\T_{\ell_k}\times\R^2)^2$ given by
	\[
		\overline{E}_k\chi_{\delta,\gamma_k}\psi(\bp)
		=f_+((s,\bu)(\bp))\eta_+(\bp)+f_-((s,\bu)(\bp))\eta_-(\bp).
	\]
	We have $f\in \dom(\cD_{\T_{\ell_k},\alpha}^{(-)})$ and there 
	exists a constant $C_0=C_0(\delta,\gamma_k)>0$ such that
	\[
		\norm{f}_{\T_{\ell_k}}\le C_0(\delta,\gamma_k)\norm{\psi}_{\bA_{(k)}}.
	\]
	Furthermore, for $0<\delta\le \delta_0(\gamma_k)$ small enough
	there exists $0<\eps_0(\delta,\gamma_k)<1$ and $C_1(\delta,\gamma_k)>0$ such that
	\[
		\norm{f}_{\T_{\ell_k}}\ge (1-\eps_0(\delta,\gamma_k))\norm{\chi_{\delta,\gamma_k}\psi}_{\bA_{(k)}}
		-C_1(\delta,\gamma_k)\norm{\chi_{\delta,\gamma_k}\psi}_{L^2}.
	\]
\end{proposition}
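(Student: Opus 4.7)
The plan is to transport the problem from a tubular neighborhood of $\gamma_k$ in $\S^3$ to the model space $\T_{\ell_k}\times\R^2$ via the chart $F_{\gamma_k}$ of \eqref{eq:cart_coord}. Since $\psi\in\dom(\cD_{\bA_{(k)}}^{(-)})$ involves only the single Seifert surface $S_k$, one has $E_k\equiv 1$, and $\chi_{\delta,\gamma_k}\psi$ is supported in $B_\delta[\gamma_k]$. By the choice $\delta\le\delta_0$ of Section~\ref{loc_fun}, the pulled-back volume form $y_T\,C_{\gamma_k}\,\d s\wedge\d u_1\wedge \d u_2$ is within $[3/4,5/4]$ of the flat $\d s\wedge\d u_1\wedge \d u_2$, so $\norm{\chi_{\delta,\gamma_k}\psi}_{L^2(\S^3)}$ and $\norm{f}_{L^2(\T_{\ell_k}\times\R^2)}$ are equivalent up to a bounded smooth multiplier.

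The main identification then goes as follows. In the basis $(\eta_+,\eta_-)$, the matrices $\bsigma(\bT_\Sig^\flat),\bsigma(\bS_\Sig^\flat),\bsigma(\bN_\Sig^\flat)$ become the standard $\sigma_3,\sigma_1,\sigma_2$ by \eqref{eq:def_sections}, so Proposition~\ref{prop:expr_free_dirac} shows that the leading part of $\cD_{\bA_{(k)}}^{(-)}$ expressed in $(s,\bu)$-coordinates is precisely $-i\sigma_3\partial_s + \cD_{0,\alpha_k}^{(-)}$, the singular gauge $\bA_{(k)}=2\pi\alpha_k[S_k]$ translating into the $e^{-2i\pi\alpha_k}$ phase jump across $\{u_1>0,\,u_2=0\}$ in the $\bu$-plane. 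The $(-)$-extension condition \eqref{def_domain_D} matches the singular-line condition defining $\cD_{\T_{\ell_k},\alpha_k}^{(-)}$, since $\eta_\pm(\gamma_k(s))=e^{\pm i\phi(s)}\xi_\pm(\gamma_k(s))$. This gives $f\in\dom(\cD_{\T_{\ell_k},\alpha_k}^{(-)})$ and
\[
\cD_{\bA_{(k)}}^{(-)}(\chi_{\delta,\gamma_k}\psi)=\cD_{\T_{\ell_k},\alpha_k}^{(-)}f+\cE f+\big[\cD_{\bA_{(k)}}^{(-)},\chi_{\delta,\gamma_k}\big]\psi,
\]
with $\cE=L_{\err}^{(s)}\partial_s+L_{\err}^{(u_1)}\partial_{u_1}$ from \eqref{eq:def_L_err}. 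The commutator is a bounded multiplication operator whose $L^\infty$-norm depends only on $(\delta,\gamma_k)$, so once $\cE f$ is controlled in $L^2$ both inequalities follow by reading the identity in either direction.

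The main obstacle is the bound on $\cE f$. The tangential piece $L_{\err}^{(s)}\partial_s f$ is harmless: $L_{\err}^{(s)}=\mathcal{O}(\delta)$ on $B_\delta[\gamma_k]$ and Lemma~\ref{lem:smart} gives $\norm{\partial_s f}_{L^2}\le\norm{\cD_{\T_{\ell_k},\alpha_k}^{(-)}f}_{L^2}$. The transverse piece $L_{\err}^{(u_1)}\partial_{u_1}f$ is delicate, because $f$ may carry the singular radial profile $K_{\alpha_k}(r)e^{i\alpha_k\theta}$ from the direct summand in \eqref{eq:decomp_dom}, whose $\partial_{u_1}$-derivative is not in $L^2$. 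The key structural input is the explicit vanishing $L_{\err}^{(u_1)}|_{\{u_2=0\}}=0$, readable off \eqref{eq:def_L_err} using $C_{\gamma_k}(s,u_1,0)=1$, which yields the factorization $L_{\err}^{(u_1)}=u_2\,M(s,\bu)$ with $M$ smooth and bounded. Decomposing $f$ fibrewise in $s$ along \eqref{eq:decomp_dom} as $f=f_{\min}+c\,f_{\sing}$, on the minimal piece one uses $\norm{\nabla^{\R^2}f_{\min}}_{L^2}\le\norm{\cD_{\T_{\ell_k},\alpha_k}^{(-)}f_{\min}}_{L^2}$ (by the Green identity inside $\dom(\cD_{0,\alpha_k}^{(\min)})$) combined with $|u_2|\le\delta$, yielding a $C\delta\norm{f}_{\T_{\ell_k}}$-bound; on the singular piece, a direct polar-coordinate computation gives $\norm{u_2\partial_{u_1}(K_{\alpha_k}(r)e^{i\alpha_k\theta})}_{L^2(\D_\delta)}=\mathcal{O}(\delta^{1-\alpha_k})$, while the coefficient $|c|$ is bounded by $\norm{f}_{\T_{\ell_k}}/\norm{f_{\sing}}_{\T_{\ell_k}}$ via the orthogonal direct sum. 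Altogether $\norm{\cE f}_{L^2}\le\eps_0(\delta,\gamma_k)\norm{f}_{\T_{\ell_k}}$ with $\eps_0(\delta,\gamma_k)=\mathcal{O}(\delta^{1-\alpha_k})\to 0$ as $\delta\to 0$, which yields the claimed lower bound, with $C_1(\delta,\gamma_k)$ absorbing the $L^2$-contribution from the commutator $[\cD_{\bA_{(k)}}^{(-)},\chi_{\delta,\gamma_k}]\psi$ and from the volume-form comparison.
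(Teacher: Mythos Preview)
Your argument is correct and tracks the paper's strategy closely: pull back via $F_{\gamma_k}$, identify the leading part with the model operator using Proposition~\ref{prop:expr_free_dirac} and the $(\eta_+,\eta_-)$-frame, then bound the remainders. The substantive difference lies in how you control the dangerous transverse error $L_{\err}^{(u_1)}\partial_{u_1}f$.

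The paper avoids your fibrewise minimal/singular splitting altogether by the single observation that multiplication by $u_2$ (which vanishes on $\gamma_k$) sends $\dom(\cD_{\bA}^{(\max)})$ into $\dom(\cD_{\bA}^{(\min)})$. Since $[X_{u_1},u_2]=0$, one rewrites $L_{\err}^{(u_1)}X_{u_1}(\chi_{\delta,\gamma_k}\psi)=\wt{L}_{\err}^{(u_1)}X_{u_1}(u_2\chi_{\delta,\gamma_k}\psi)$ and then applies the Lichnerowicz identity \eqref{eq:Lich_for_min} directly on $\S^3$, combined with the elementary estimate
\[
\norm{\cD_{\bA}^{(-)}(u_2\chi_{\delta,\gamma_k}\psi)}_{L^2}\le \norm{\chi_{\delta,\gamma_k}\psi}_{L^2}+\delta\norm{\cD_{\bA}^{(-)}(\chi_{\delta,\gamma_k}\psi)}_{L^2}.
\]
This routes the singular contribution into the $C_1\norm{\chi_{\delta,\gamma_k}\psi}_{L^2}$ term and yields $\eps_0=\mathcal{O}(\delta)$, sharper than your $\mathcal{O}(\delta^{1-\alpha_k})$. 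Your explicit Bessel computation is not needed, though it is of course correct.

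Two small points. First, you omit the zeroth-order matrix $\cE_0$ coming from the connection form $\nabla\eta_\pm$; this is the bounded multiplier in item~(1) of the paper's sketch and only feeds into $C_1$, but it should be acknowledged. Second, a slip: the paper states that $\eta_\pm$ and $\xi_\pm$ agree on $\gamma_k$ up to a \emph{common} phase $e^{i\phi(s)}$, not $e^{\pm i\phi(s)}$; the common phase is what guarantees that the condition $\cip{\xi_+}{\cdot}\xi_+\in\dom(\cD_{\bA}^{(\min)})$ in \eqref{def_domain_D} is unchanged when $\xi_+$ is replaced by $\eta_+$, so that the $(-)$-extensions match.
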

\begin{proof}[Sketch of the proof]
	We do not prove it in full details: the method is the same as \cite{dirac_s3_paper1}*{Lemma~27},
	which was written for the coordinates \eqref{eq:exp_chart}. It is relatively easy to deduce the above results 
	taking the following things into account:
	\begin{enumerate}
		\item The connection form of $\nabla$ with respect to $(\eta_+,\eta_-)$ 
		is uniformly bounded in the tubular neighborhood $B_{2\delta}[\gamma_k]$, and so is
	 	the derivatives of $E_k$ (away from the jumps set).
		
		\item The level $\delta>0$ has been chosen small enough such that
		on $\T_{\ell_k}\times\D_{\delta}$ the volume forms 
		$F_{\gamma_k}^*\vol_{g_3}$ and $ds\wedge du_1\wedge du_2$ are comparable.
		
		\item On $B_{\delta}[\gamma_k]$, the error matrices $L_{\err}^{(s)}$ and $L_{\err}^{(u_1)}$ \eqref{eq:def_L_err} 
		can be written as $u_2\wt{L}_{\err}^{(\star)}$, $\star\in\{s,u_1\}$, where both the matrices $\wt{L}_{\err}^{(\star)}$ are 
		bounded by a constant $C=C(\delta,\gamma_k)$.
		
		\item For $\psi$ in $\dom(\cD_{\bA}^{(\max)})$, we have 
		$u_2\chi_{\delta,\gamma_k}\psi\in\dom(\cD_{\bA}^{(\min)})$, and the energy estimate
		\begin{align*}
			\norm{\cD_{\bA}^{(-)}(u_2\chi_{\delta,\gamma_k}\psi)}_{L^2}\le \norm{\chi_{\delta,\gamma_k}\psi}_{L^2}
			+\delta\norm{\cD_{\bA}^{(-)}(\chi_{\delta,\gamma_k}\psi)}_{L^2}.
		\end{align*}
		We use the fact that for elements in $\dom(\cD_{\bA}^{(\min)})$, Lichnerowicz' formula (for quadratic forms) holds
		(see \cite{dirac_s3_paper1}*{Proposition~5}):
		\begin{equation}\label{eq:Lich_for_min}
		\forall\,\psi\in\dom(\cD_{\bA}^{(\min)}),\ \int |\cD_{\bA}^{(\min)}\psi|^2=\int|(\nabla\psi)|_{|_{\Omega_{\uS}}}|^2+\frac{3}{2}\int |\psi|^2,
		\end{equation}
		to control the $L^2$-norm of the term 
		$
		\big(L_{\err}^{(s)}X_s+L_{\err}^{(u_1)}X_{u_1}\big)\chi_{\delta,\gamma_k}\psi.
		$
	\end{enumerate}
\end{proof}

\subsection{Proof of Theorem~\ref{thm:conv_dirac}, Part 1: Convergence in the strong resolvent sense}
To keep the notational burden at a minimum, we abbreviate
$$
	\cD_{\delta} := \cD_{\bA_{\phi_\delta}}.
$$

\subsubsection{Reduction to the approximation of $\cG(\cD_{\bA}^{(-)})$}
First observe that by Lemma~\ref{lem:cond_equiv_strong}, 
the strong resolvent convergence is equivalent to the following condition:
\begin{multline}\label{cond_equiv_strong}
	\forall\, (\psi,\cD_{\bA}^{(-)}\psi)\in \cG(\cD_{\bA}^{(-)})
	\,\exists\, (\psi_{\delta},\cD_{\delta}\psi_{\delta})\in \cG(\cD_{\delta}):\\
	\lim_{\delta \to 0^{+}}(\psi_{\delta},\cD_{\delta}\psi_{\delta}) =
	(\psi,\cD_{\bA}^{(-)}\psi) \textrm{ in } L^2\times L^2.
\end{multline}

\paragraph{\textit{Level of the localization}}  Now fix $\psi\in \dom(\cD_{\bA}^{(-)})$. We use the partition of unity described in Section~\ref{loc_fun}.
	We fix an appropriate level $0<\delta_1<\delta_0$ and localize at level $\delta_1$.
	Then we run a parameter $0<\delta<\delta_1$ for $\bal_{\phi_{\delta}}$ and study the limit $\delta\to 0^+$.

\subsubsection{Smooth phase jump functions}

\medskip

\paragraph{\textit{Local gauge transformations}}
	Here, we emphasize that the coordinate $u_2=u_{2,k}$ is defined relatively to $\Sig_{k}$,
	and is, up to the sign, the $\S^3$-distance to $\Sig_k$, the sign corresponding to the side of the point
	with respect to the oriented surface. Recall the form of the approximations of $2\pi\alpha_k[S_k]$ (Proposition~\ref{prop:formule_smooth_mag}):
	\begin{align*}
		\bal_{\phi_{\delta},k}(\bp)&=\Big[\int_{-\infty}^{u_1(\bp)}\int_{-\infty}^{u_2(\bp)}\phi_{\delta}(u')\d u'\Big]\d u_2(\bp),\\
							&=\Big[\int_{-\infty}^{u_1(\bp)/\delta}\int_{-\infty}^{u_2(\bp)/\delta}\phi_{1}(u')\d u'\Big]\d u_2(\bp),
	\end{align*}
	with the convention that $\bal_{\phi_{\delta},k}(\bp)=0$ on $\S^3\setminus B_{\sqrt{2}\delta}[S_k]$ ($\supp\,\phi_1\subset [-1,1]^2$). 
	For $\bp$ in the bulk of $B_\delta[S_k]$ (away from the knot $\gamma_k$), the one-form 
	$\bal_{\phi_{\delta},k}(\bp)$ takes the form
	\[
		\bal_{\phi_{\delta},k}(\bp)=2\pi\alpha_k m_{\phi_{\delta}}(u_2(\bp))\d u_2(\bp).
	\]
	We can (only locally) gauge it away with the help of the integrated marginal
	$$
		J_{\phi_\delta}(u_2):=\dint_{-\infty}^{u_2} m_{\phi_\delta}(u)\d u,
	$$
	since $\d J_{\phi_\delta}(u_2)=m_{\phi_{\delta}}(u_2)\d u_2$. 
	
	Observe that if $u_2<-\delta$, then we have $J_{\phi_\delta}(u_2)=0$, 
	whereas if $u_2>\delta$, then $J_{\phi_\delta}(u_2)=1$.
	Thus the function $\exp(-2i\pi\alpha_j J_{\phi_{\delta}}(u_2(\bp)))$ 
	spreads the phase jump across $S_k$ (away from its boundary)
	in the layer $-\delta <u_2(\bp)<\delta$.
	
\medskip	

\paragraph{\textit{Away from the knots}}
	This corresponds to the functions coming from the last term in \eqref{def:part_unity}.
	Fix $\underline{j} \in \{2,3\}^K$, and write $\{1,\dots,K\} = J_2 \cup J_3$ where $j_k=2, k \in J_2$ and 
	$j_k=3, k \in J_3$.
	For $J_2$ non-empty, we define the smooth function
	\begin{equation}\label{def:n_phase_remove_smooth_knot}
		E_{\delta_1,\delta,\underline{j}}(\bp):=\prod_{k \in J_2}\exp\left[-2\pi i \alpha_kJ_{\phi_\delta}(u_{2,k}(\bp))\right],
	\end{equation}
	which approximates the function $E_{\delta,\underline{j}}$ \eqref{def:n_phase_remove_smooth_bulk}.
	It is clear that away from the Seifert surfaces it converges pointwise (in the limit $\delta \to 0^{+}$),
	and that the convergence is uniform in any set of the form
	\[
		\big\{\bp\in\supp\,\chi_{\delta_1,\underline{j}}: |\bu(\bp)|\ge \delta_{2}\big\},
	\]
	where $0<\delta_2<\delta_1$ is given. Furthermore, writing 
	\[
		\wt{\psi}_{\delta_1,\delta,\underline{j}}
		:=E_{\delta_1,\delta,\underline{j}}\,\overline{E}_{\delta,\underline{j}}\, \chi_{\delta_1,\underline{j}}\psi,
	\] 
	we have $\wt{\psi}_{\delta_1,\delta,\underline{j}}\in H^1(\S^3)^2$, 
	\[
		\cD_\delta\wt{\psi}_{\delta_1,\delta,\underline{j}}
		=E_{\delta_1,\delta,\underline{j}}\, \overline{E}_{\delta,\underline{j}}\,
		\cD_{\bA}^{(-)}\chi_{\delta_1,\underline{j}}\psi.
	\]
	By dominated convergence we obtain
	\[
		\lim_{\delta\to 0^+}\big(\wt{\psi}_{\delta_1,\delta,\underline{j}},\cD_\delta\wt{\psi}_{\delta_1,\delta,\underline{j}} \big)
		=\big(\chi_{\delta_1,\underline{j}}\psi,\cD_{\bA}^{(-)}\chi_{\delta_1,\underline{j}}\psi \big)
		\quad\mathrm{in}\ L^2(\S^3)^2\times L^2(\S^3)^2.
	\]

\paragraph{\textit{Close to the knots: removal of the phase jumps}}
	We have to approximate the phase jump functions $E_{k}$ by smooth functions $E_{k,\delta}$.
	Due to transversality, by picking $\delta_1>0$ small enough we can assume that on 
	$B_{\delta_1}[\gamma_k]$, the Seifert surfaces $S_{k'}$
	are transverse to all the $(\gamma_k)_{\bu}$ for $|\bu|\le \delta_1$. 
	As $\delta\to 0^+$, the intersection of 
	$B_{\delta_1}[\gamma_k]$ with $B_{\delta}[\Sig_{k'}]$ has 
	$M_{k,k'}$ connected components, 
	each localized around one $\gamma_k(s_{m,k'})$. 
	
	On the component of the intersection close to $s_{m,k'}$, the surface $S_{k'}$
	cuts the component into two parts $T_{m,k'}^{\pm}$, inducing a phase jump $e^{ib_{m,k'}}$ across $S_{k'}$. 
	We write $\wt{E}_{k,k'}$ for the restriction of $E_{k,k'}$ to $B_{\delta_1}[\gamma_k]\cap B_{\delta}[\Sig_{k'}]$ (defined in \eqref{Eq:E_k_k'_many_intersections},
	and corresponds to the phase jumps due to $S_{k'}$ on $B_{\delta_1}[\gamma_k]$). 
	For $\bp\in \overline{T}_{m,k'}^{+}\cup \overline{T}_{m,k'}^{-}$ we define
	\[
	\wt{E}_{k,k'}(\bp):=
	\left\{
		\begin{array}{lr}
			E_{k,k'}(\bp)\exp(-ib_{m,k'}\mathds{1}_{T_{m,k'}^{+}}(\bp)), & \textrm{ if }b_{m,k'}=-2\pi\alpha_{k'},\\
			E_{k,k'}(\bp)\exp(ib_{m,k'}\mathds{1}_{T_{m,k'}^{-}}(\bp)), & \textrm{ if }b_{m,k'}=2\pi\alpha_{k'}.
		\end{array}
	\right.
	\]
	Here we used the convention that the tangent vector $\bT_k$ of $\gamma_k$ on the intersection $\gamma_k\cap S_{k'}$ 
	points toward $T_{m,k'}^+$. We define $E_{k,k',\delta}$ as follows.
	$$
		E_{k,k',\delta}(\bp):=
		\left\{\!\!
			\begin{array}{cl}
			\wt{E}_{k,k'}(\bp)\exp(-2i\pi\alpha_{k'}J_{\phi_{\delta}}(u_{2,k'}(\bp))),
			&\textrm{on }B_{\delta_1}[\gamma_k]\cap B_{\delta}[\Sig_{k'}]\\
			E_{k,k'}(\bp), &\textrm{on }B_{\delta_1}[\gamma_k]\cap  \big(B_{\delta}[\Sig_{k'}])^c
			\end{array}
		\right.
	$$
	and
	$$
		E_{k,\delta}(\bp):=\prod_{k'\neq k}E_{k,k',\delta}(\bp).
	$$
	This last function $E_{k,\delta}$ satisfies on $B_{\delta_1}[\gamma_k]$:
	\begin{equation}\label{eq:rid_of_others}
	\forall\,\bp\in B_{\delta_1}[\gamma_k],\ \overline{E}_{k,\delta}\d E_{k,\delta}(\bp)=\sum_{k'\neq k}\bal_{\phi_{\delta},k'}(\bp),
	\end{equation}
	where $\bal_{\phi_{\delta},k'}$ is the approximation of $2\pi\alpha_{k'}[S_{k'}]$ (Proposition~\ref{prop:formule_smooth_mag})
	(with the convention that $\bal_{\phi_{\delta},k'}(\bp)=0$ on $\S^3\setminus B_{\delta}[S_{k'}]$).
\subsubsection{Close to the knots: choice of the approximations}
Recall Proposition~\ref{prop_link_with_model_case}. Let
$f\in \dom(\cD_{\T_{\ell_k},\alpha}^{(-)})$ be the spinor associated with 
$\overline{E}_k \chi_{\delta_1,\gamma_k}\psi\in\dom(\cD_{2\pi[S_k]}^{(-)})$.
We use Proposition~\ref{prop:strg_res_pour_DT}: 
let $(f_{\delta})$ be the family 
in $H^1(\T_{\ell_k}\times\R^2)^2$ such that
\begin{equation}\label{eq:choice_f}
	\big(f_{\delta},\big(-i\sigma_3\partial_s +\cD_{\bA_{\phi_\delta}^{\R^2}}\big)f_\delta \big)
	\to \big(f,\cD_{\T_{\ell_k},\alpha}^{(-)}f \big).
\end{equation}
Up to localizing, we can assume that $\supp\,f_{\delta}\subset\{|\bu|\le 2\delta_1 \}$.
The approximation of $\chi_{\delta_1,\gamma_k}\psi$ is given by:
\[
	\wt{\psi}_{\delta_1,\delta,k}
	:=E_{k,\delta}((f_{\delta})_+ \eta_++(f_{\delta})_-\eta_-)\in H^1(\S^3)^2.
\]

Using the expression of the free Dirac operators in the $(s,\bu)$-coordinates (Proposition~\ref{prop:expr_free_dirac}), 
let us show that
\[
	\big(\wt{\psi}_{\delta_1,\delta,k},\cD_\delta\wt{\psi}_{\delta_1,\delta,k}\big)\to 
	\big(\chi_{\delta_1,\gamma_k}\psi,\cD_{\bA}^{(-)}\chi_{\delta_1,\gamma_k}\psi\big).
\]
By \eqref{eq:rid_of_others}, we have the exact cancellation on $B_{\delta_1}[\gamma_k]$:
\[
-i\overline{E}_{k,\delta}\sigma(\d E_{k,\delta})+\sigma(\bal_{\phi_{\delta}}-\bal_{\phi_{\delta},k})=0,
\]
hence we can assume without loss of generality that $\bal_{\phi_{\delta}}=\bal_{\phi_{\delta},k}$ (and $E_{k,\delta}=1$).
Locally around $\gamma_k$ we have
\[
	\begin{pmatrix}
	\cip{\eta_+}{\sigma(-i\nabla)\psi}\\ 
	\cip{\eta_-}{\sigma(-i\nabla)\psi}
	\end{pmatrix}
	= \Big(\underbrace{-i\sigma_1 \partial_{u_1}-i\sigma_{2} \partial_{u_2}-i\sigma_3 \partial_s}_{\cT} +\cE_0+\cE_1\Big)
	\begin{pmatrix} f_+\\ f_-\end{pmatrix}\big|_{\Omega_{\uS}},
\]
where $\cE_0$ is the matrix due to the connection form and 
$$
	\cE_1=L_{\err}^{(s)}\partial_s+L_{\err}^{(u_1)}\partial_{u_1}.
$$
The matrices $L_{\err}^{(s)},L_{\err}^{(u_1)}$ are defined in \eqref{eq:def_L_err}, 
where here $\sigma(\bT^{\flat})$ and $\sigma(\bS^{\flat})$ denote $\sigma_3$ and $\sigma_1$ respectively.
Let us write $\cT[\psi],\cE_0[\psi]$ and $\cE_1[\psi]$ for the corresponding elements in 
$L^2(B_{\delta_1}[\gamma_k])^2$ (we have made the abuse of notations $\cT[\psi]=\cT[\psi]_{|_{\Omega_{\uS}}}$ 
and $\cE_1[\psi]=\cE_1[\psi]_{|_{\Omega_{\uS}}}$).
From Proposition~\ref{prop:strg_res_pour_DT} we get that
\begin{equation}\label{conv_intermediate}
	\cT[E_{k,\delta}\wt{\psi}_{\delta_1,\delta,k}]
	+\sigma(\bal_{\phi_\delta})E_{k,\delta}\wt{\psi}_{\delta_1,\delta,k}
	\underset{\delta\to 0}{\longrightarrow}\cT[E_k\chi_{\delta_1,\gamma_k}\psi]
	\textrm{ in } L^2(\S^3)^2,
\end{equation}
and 
$$
	\lim_{\delta\to 0^+}\norm{\cE_0[E_{k,\delta}\wt{\psi}_{\delta_1,\delta,k}]
	-\cE_0[E_{k}\chi_{\delta_1,\gamma_k}\psi]}_{L^2}=0.
$$

We are left with the error term coming from $\cE_1$.
Observe that the graph element corresponding to
\[
	u_{2,k}\chi_{\delta_1,\gamma_k}\psi\in\dom(\cD_{\bA}^{(\min)})\cap H^1(\Omega_{\uS})^2
\]
is the limit of the one corresponding to $(u_{2,k}E_{k,\delta}\wt{\psi}_{\delta_1,\delta,k})_{0<\delta\le \delta_1}$.
The presence of $u_2$ enables us to use an $\eps$-argument with cut-off function localizing outside $S_k$.
For $0<\delta_2\le \delta_1$, we introduce $\xi_{k,\delta_2}$ to be:
\[
\xi_{k,\delta_2}(\bp):=\chi_{1}\big(\frac{|\bu_{2,k}(\bp)|}{\delta_2}\big).
\]
We split $\psi_{\delta,k}:=E_{k,\delta}\wt{\psi}_{\delta_1,\delta,k}$ into two with respect to $1=\xi_{k,\delta_2}+(1-\xi_{k,\delta_2})$.
From the formula for $\bal_{\phi_{\delta},k}$, we have $|\bal_{\phi_{\delta},k}(\bp)|\le 2\pi\alpha_k$.
For $\star\in\{s,u_1\}$, we obtain:
\begin{align*}
\norm{L_{\err}^{(\star)}X_{\star}(\xi_{k,\delta_2} \psi_{\delta,k})}_{L^2}&\le C(\delta_1,\gamma_k)\norm{X_{\star}(u_2\xi_{k,\delta_2} \psi_{\delta,k})}_{L^2},\\
											&\le C(\delta_1,\gamma_k)\norm{\nabla(u_2\xi_{k,\delta_2} \psi_{\delta,k})}_{L^2},\\
											&\le C(\delta_1,\gamma_k)\norm{\sigma(-i\nabla)(u_2\xi_{k,\delta_2} \psi_{\delta,k})}_{L^2},\\
											&\le C(\delta_1,\gamma_k)\big[\delta_2\norm{\xi_{k,\delta_2} \psi_{\delta,k}}_{L^2}\\
											&\quad\quad+\norm{\sigma(-i\nabla+\bal_{\phi_{\delta},k})(u_2\xi_{k,\delta_2} \psi_{\delta,k})}_{L^2}\big].
\end{align*}
The last upper bound is a $o_{\delta_2\to 0}(1)$. 
Similarly, using Lichnerowicz' formula for $\cD_{\bA}^{(\min)}$ (\cite{dirac_s3_paper1}*{Proposition~5}), we obtain:
\[
\norm{\cE_1[\xi_{k,\delta_2}\chi_{\delta_1,\gamma_k}\psi]_{|_{\Omega_{\uS}}}}_{L^2}\le C(\delta_1,\gamma_k)\norm{u_2\xi_{k,\delta_2} \chi_{\delta_1,\gamma_k}\psi}_{\cD_{\bA}^-}=o_{\delta_2\to 0}(1).
\]
Hence it suffices to prove that $\psi_{\delta,k}$ converges to $\chi_{\delta_1,\gamma_k}\psi$
in $H^1_{\loc}(\Omega_{\uS})^2$ to end the proof of the convergence in the strong resolvent sense. We work with the corresponding spinors
\[
g_{k,\delta}:=(1-\xi_{k,\delta_2})f_{k,\delta}\quad\mathrm{and}\quad g_k:=(1-\xi_{k,\delta_2})f,
\]
where we have dropped the dependence in $\delta_2$ to shorten notations.
Since the one-form $\bal_{\phi_{\delta},k}$ is supported on $B_{\sqrt{2}\delta}[S_k]$, for $\delta\ll \delta_2$, 
$-i\sigma_3\partial_s+\cD_{\bA_{\phi_{\delta}}^{\R^2}}$ and $\cD_{\T_{\ell_k},\alpha_k}^{(-)}$ act like the free Dirac
operator on $g_{k,\delta}$ and $g_k$ respectively (outside the phase jump surface for $g_k$), 
and $g_{k,\delta},g_k\in H^1(\complement_{\T_{\ell_k}\times\R^2}\{\theta=0\})^2$. Since we have 
\[
|\partial_{u_2}\xi_{k,\delta_2}|\le \delta_2^{-1}\norm{\chi_1'}_{L^\infty}<+\infty,
\] 
the same convergence as in \eqref{eq:choice_f} holds for $g_{k,\delta}$ and $g_k$. By Stokes formula on $\complement(B_{\delta_2/2}[\T_{\ell_k}]\cup\{\theta=0\})$ we get:
\[
\int_{\complement_{\T_{\ell_k}\times\R^2}\{\theta=0\}} |\sigma\cdot(-i\nabla)(g_{k,\delta}-g_k)|^2=\int_{\complement_{\T_{\ell_k}\times\R^2}\{\theta=0\}} |\nabla(g_{k,\delta}-g_k)|^2\underset{\delta\to 0}{\to} 0,
\]
where $\nabla$ denotes the flat connection on $\T_{\ell_k}\times\R^2$, hence $g_{k,\delta}$ converges to $g_k$ in $H^1(\{\theta\neq 0\})^2$. 
By Proposition~\ref{prop:push_forward} and Corollary~\ref{coro:volume_form}, we can transpose this result to $\S^3$ and $(1-\xi_{k,\delta_2})\psi_{\delta,k}$
converges to $(1-\xi_{k,\delta_2})\wt{\psi}_{\delta_1,\delta,k}$ in $H^1(\Omega_{\uS})^2$.

\subsection{Proof of Theorem~\ref{thm:conv_dirac}, Part 2: Convergence in the gap topology}
We prove that for any function $\Lambda\in\sD(\R)$, the family 
$(\Lambda(\cD_{\delta}))_\delta$ converges to $\Lambda(\cD_{\bA}^{(-)})$.
As said in Section~\ref{sec:strategy}, this implies the convergence in the gap topology by functional calculus.
We proceed by contradiction.

Recall that we have already proved strong resolvent convergence and 
that the operators under investigation have discrete spectrum. 
Thanks to that and using Lemma~\ref{lem:cond_equiv_strong}, we easily show the following equivalence:
\begin{enumerate}
\item $\limsup_{\delta\to 0}\norm{\Lambda(\cD_{\delta})-\Lambda(\cD_{\bA}^{(-)})}_{\mathcal{B}}>0$,
\item There exists a sequence of 
eigenvalues $(\lambda_{\delta_n})$ in $(\supp\Lambda)^{\circ}$ associated with
a sequence $(\delta_n)$ decreasing to $0$  together with a family $(\psi_n)$ of 
wave functions of unit norm such that:
\begin{equation}\label{eq:collapsing_sequence_quasimode}
	\left\{
	\begin{array}{l}
		\lambda_n\to\lambda\in (\supp\Lambda)^{\circ},\\
		\psi_n\in\ker(\cD_{\delta_n}-\lambda_n),\\
		\psi_n \rightharpoonup 0 \textrm{ in } L^2(\S^3)^2.
	\end{array}
	\right.
\end{equation}
\end{enumerate}
Indeed, by strong resolvent convergence (criterion~\ref{cond_equiv_strong}) and the discreteness of the spectrum of the operators,
for every eigenfunction $\psi$ of $\cD_{\bA}^{(-)}$, there exists an eigenfunction $\psi_n\in\dom(\cD_{\delta_n})$
such that $(\psi_n,\cD_{\delta_n}\psi_n)\to (\psi,\cD_{\bA}^{(-)}\psi)$ in $L^2(\S^3)^2\times L^2(\S^3)^2$. If 
$$
	\lim_n\norm{\Lambda(\cD_{\delta_n})-\Lambda(\cD_{\bA}^{(-)})}_{\cB}>0,
$$ 
then there exists an eigenplane $\C \psi_n\subset \ker(\cD_{\delta_n}-\lambda_n)$, $\norm{\psi_n}_{L^2}=1$,
which stays away from $\cG(\cD_{\bA}^{(-)})$ such that
$\lambda:=\lim_n\lambda_n\in (\supp\Lambda)^{\circ}$ and for every eigenfunction $\psi\in\dom(\cD_{\bA}^{(-)})$
with eigenvalue $\lambda\in\supp\,\Lambda$:
\begin{equation}\label{eq:cond_away_from_graph}
	\lim_{n\to+\infty}\cip{\psi_n}{\psi}_{L^2(\S^3)^2}=0.
\end{equation}

By $H^1$-boundedness away from the Seifert surfaces, the family $(\psi_n)$
converges (up to an extraction) in $L^2_{\loc}(\Omega_{\uS})^2$.

The next step is to deal with the part which is away from the knots but close to the Seifert surfaces. Fix $\delta>0$.
For $\underline{j}\in\{2,3\}^K$, consider the localization $\chi_{\delta,\underline{j}}\psi_n$,
which is still graph-norm bounded. We can gauge away the magnetic potential in that region of the sphere
with the help of the smooth phase function $E_{\delta,\delta_n,\underline{j}}$ (see \eqref{def:n_phase_remove_smooth_knot}):
the sequence with running element
\[
\overline{E_{\delta,\delta_n,\underline{j}}}\chi_{\delta,\underline{j}}\psi_n=:\chi_{\delta,\underline{j}}\wt{\psi}_n
\]
is $H^1(\S^3)^2$-bounded. Thus it converges in $L^2(\S^3)^2$ up to the extraction of a subsequence. 
By dominated convergence, this implies that $\chi_{\delta,\underline{j}}\psi_n$ converges in $L^2(\S^3)^2$
to an element in $H^1(\Omega_{\uS})^2$ with the correct phase jumps across the Seifert surfaces 
(we recall that the smeared out phase jump function $E_{\delta,\delta_n,\underline{j}}$ takes values in $\S^1$ and converges almost everywhere
on $\supp\,\chi_{\delta,\underline{j}}$).

By a standard diagonal argument with $\delta$ we obtain that 
the family $(\psi_n)$ converges (up to the extraction of a subsequence) in 
$L^2_{\loc}(\S^3\setminus\gamma)^2$ to an element in 
$$
	\psi_{\infty} \in \ker\big(\cD_{\bA}^{(\max)}-\lambda\big).
$$ 
Furthermore, the sequence $(\psi_n,\cD_{\delta_n}\psi_n)$ converges weakly 
in $L^2(\S^3)^2\times L^2(\S^3)^2$. By using strong resolvent convergence (criterion~\ref{cond_equiv_strong}) and the equality
\[
	\cip{\cD_{\delta_n}\psi_n}{\phi}_{L^2}
	=\cip{\psi_n}{\cD_{\delta_n}\phi}_{L^2},\quad\forall\,\phi\in H^1(\S^3)^2,
\]
we get that the limit $\psi_\infty$ is necessarily in $\dom(\cD_{\bA}^{(-)})$ and in 
$\ker(\cD_{\bA}^{(-)}-\lambda)$. By \eqref{eq:cond_away_from_graph}, we get $\psi_{\infty}\equiv 0$.
In other words, all the $L^2(\S^3)^2$-mass of $\psi_n$ concentrates around the link $\gamma$: we have $\norm{\psi_n}_{L^2}\equiv1$
and
\[
\lim_{\eps\to 0}\liminf_{n\to+\infty}\int_{B_\eps[\gamma]}|\psi_n|^2=1.
\]

Let us show that this leads to a contradiction. 
Up to localizing, we study $\chi_{\delta,\gamma_k}\psi_n$, where $k$ is chosen such that part of the $L^2(\S^3)^2$-mass of $\psi_n$
disappears in $\gamma_k$, that is such that 
\[
\lim_{\eps\to 0}\liminf_{n\to+\infty}\int_{B_\eps[\gamma_k]}|\psi_n|^2>0.
\]
Up to considering a subsequence, we assume that the liminf is a limit.
Furthermore, up to multiplying by the appropriate phase functions,
we study $\overline{E}_{k,\delta}\chi_{\delta_1,\gamma_k}\psi_n\in\dom(\cD_{2\pi\alpha_k[S_k]}^{(-)})$. 
Now we use Proposition~\ref{prop_link_with_model_case} and study the 
associated $f_n\in L^2(\T_{\ell_k}\times\R^2)^2$.
Recall Lemma~\ref{lem:smart}, and decompose $f_n$ into $s$-Fourier modes with values in $\dom\big(\cD_{\bA_{\phi_\delta}^{\R^2}}\big)$.

We can assume that the decomposition is orthogonal with respect to
\[
	\norm{f}_{\delta_n}^2
	:=\int |f|^2+\int \big|\sigma\cdot(-i\nabla_{\bu}+\bA_{\phi_{\delta_n}}(\bu))f\big|^2.
\]
By a straightforward computation one can show that the (squared) graph norm of $f_n$ equals
\[
	\int\Big( |f_n|^2+|\partial_s f_n|^2\Big)+\int \big|\sigma\cdot(-i\nabla_{\bu}
	+\bA_{\phi_{\delta_n}}(\bu))f_n\big|^2.
\]
More precisely, we decompose $f_n$ into modes of $\T_{\ell_k}$:
\[
f_n(s,\bu)=\frac{1}{\sqrt{\ell_k}}\sum_{j\in\T_{\ell_k}^*}e^{ijs}w_{n,j}(\bu),
\]
where we have $w_{n,j}\in \dom\big( \cD_{\bA_{\phi_{\delta_n}}^{\R^2}}\big)$.
Writing $\norm{\cdot}_{\delta_n,\R^2}$ for the graph norm of the $2$d-operator $\cD_{\bA_{\phi_{\delta_n}}}^{\R^2}$,
we have:
\[
	\norm{f_n}_{\delta_n}^2=\sum_{j\in\T_{\ell_k}^*}\norm{w_{n,j}}_{\delta_n,\R^2}^2.
\]
Let us consider one $w_{n,j}(\bu)$ of the decomposition: the corresponding sequence of the graph norms
is bounded. Furthermore, since $(f_n)$ converges weakly to $0$, so does $(w_{n,j})$, and
either $(w_{n,j})$ converges strongly to $0$ or there is a loss of mass by concentration on $0$ since $\psi_{\infty}=0$.

By assumption, there is at least one $w_{n,j}$ that loses mass at $0$ and we consider this sequence. 
Now we use Theorem~\ref{thm:defect}, and obtain a contradiction: the theorem states that a sequence $(g_n)_{n\ge 1}$
with $g_n\in\dom\big( \cD_{\bA_{\phi_{\delta_n}}^{\R^2}}\big)$ such that 
$(\norm{g_n}_{\delta_n,\R^2})_n$ is bounded converges in $L^2_{\loc}(\R^2)^2$, up to the extraction of a subsequence. 
In particular there cannot be any loss of mass at $0$.

\subsection{Proof of Theorem~\ref{thm:cont_homot}}

We consider the homotopy $(t,\delta)\in[0,1]\times [0,\delta_0]\mapsto \cD_{\delta}(t)$.
The restriction to $\delta>0$ is obviously a homotopy of elliptic operators, and more precisely of Dirac operators
with smooth magnetic potential $\bA_{\delta}(t)$, varying continuously with $(t,\delta)\in [0,1]\times (0,\delta_0]$.
Hence this part of the homotopy is obviously continuous in the norm-resolvent sense.

Now we deal with the singular part of the family ($\delta=0$), and then treat the whole homotopy.
The arguments for both cases are similar to the one used in the proof of Theorem~\ref{thm:conv_dirac},
and we leave some details to the reader. 

Since the parameter space $[0,\delta_0]\times [0,1)^K$ is metric, it suffices to check sequential continuity in the gap topology.

In \cite{dirac_s3_paper2} we proved that the family of singular Dirac operators for fluxes $\ua\in (0,1)^K$
is $\sT_{\Lambda}$-continuous for any bump function $\Lambda$ (see Remark~\ref{rem:wahl_topology}). 
For these range of fluxes, it can be easily checked that the function $\Lambda$
can be smooth and compacly supported and not only a bump function, and that the $\sT_{\Lambda}$-continuity extends to 
$\ua\in[0,1)^K$. 
Indeed, by \cite{dirac_s3_paper1}*{Theorem~22}, there does not exist any collapsing sequence of eigenfunctions in the sense of 
\eqref{eq:collapsing_sequence_quasimode} along a given sequence $(\uS^{(n)},\ua^{(n)})\to (\uS,\ua)$ 
when $\ua\in(0,1)^K$  or $\ua\in[0,1)^K$ and $\ua_k^{(n)}\to 0^+$ for the corresponding $k$. 
The cited theorem states that for such a converging sequence 
$(\uS^{(n)},\ua^{(n)})\to (\uS,\ua)$, any sequence $(\psi_n)$ of wave functions such that
\begin{enumerate}
 \item $\psi_n\in\dom(\cD_{\bA_n}^{(-)})$, $\bA_n=\sum_k \alpha_k^{(n)}[S_k^{(n)}]$,
 \item the sequence of graph norms $(\norm{\psi_n}_{\bA_n})_{n\ge 0}$ is bounded,
\end{enumerate}
converges -- up to the extraction of a subsequence -- strongly in $L^2(\S^3)^2$ 
to an element $\psi\in\dom(\cD_{\bA}^{(-)})$, $\bA=\sum_k \alpha_k [S_k]$ and $\cD_{\bA_n}\psi_n$
converges weakly to $\cD_{\bA}\psi$.
This shows that 
the path $t\in [0,1]\mapsto \cD_0(t)$ is norm resolvent continuous, the argument is as in the proof of Theorem~\ref{thm:conv_dirac}.

Now we consider the full homotopy. Let us first establish the strong resolvent continuity. 
We use Lemma~\ref{lem:cond_equiv_strong} (third characterization). 
The (strong-resolvent) continuity for the singular Dirac operators and 
Theorem~\ref{thm:conv_dirac} implies the continuity in the strong resolvent sense.

Now we establish the norm resolvent continuity of the homotopy.
Let us fix $\Lambda\in\sD(\R)$. Just as for Theorem~\ref{thm:conv_dirac},
we prove the continuity of $\Lambda(D(\delta,\ua))$ by contradiction.
The failure of continuity is equivalent to the existence of a sequence 
in the parameter space $(\delta_n,\ua_n)$ converging to
$(0,\ua)$ with $\ua\in [0,1)^K$ such that 
$$
	\limsup_{n\to\infty}\,\norm{\Lambda(D(\delta_n,\ua_n))-\Lambda(D(0,\ua))}_{\cB}>0.
$$
By strong resolvent convergence, this implies the existence of a sequence 
of normalized eigenfunctions $(\psi_n)$ associated with $\lambda_n$
such that $\lambda_n$ converges to $\lambda\in(\supp\,\Lambda)^{\circ}$ and $\psi_n\rightharpoonup 0$
(it is a collasping sequence in the sense of \eqref{eq:collapsing_sequence_quasimode}).
Then $(\psi_n)$ collapses onto the knots $\gamma_k$ and as in the proof of Theorem~\ref{thm:conv_dirac}, 
we obtain the necessary contradiction. We leave the details to the reader.

\appendix

\section{Proof of Proposition~\ref{prop:small_fluxes}}\label{sec:proof_square_integrability}
The proof uses the localization and phase jump functions introduced in 
Section~\ref{sec:techn_tools_loc_phase_fun}.
Those where developed for the problem on 
$\S^3$, but one can define them for the flat metric on $\R^3$ using the 
approximate Cartesian coordinates~\eqref{eq:cart_coord} associated with the flat metric. 

They are defined by
\begin{equation}\label{eq:cart_coord_r3}
	F_{\gamma}^{\R^3}(s',\bu)
	=\exp_{\gamma(s)}^{\Sigma}(u_1 \bs_{\Sigma}(\gamma(s')))
	+u_2\bn_{\Sigma}(\exp_{\gamma(s)}^{\Sigma}(u_1 \bs_{\Sigma}(\gamma(s')))),
\end{equation}
where $s'$ is the arclength parameter of $\gamma$ \emph{in the flat metric} $g_{\R^3}$, the triplet $(\mathbf{t}_{\gamma},\bs_{\Sig},\bn_{\Sigma})$ 
is the Darboux frame $g_{\R^3}$-normalized, and $\exp^{\Sig}$ denotes the exponential map in the metric induced by $g_{\R^3}$. We also denote by $\ell_k'$
the length of $\gamma_k$ in the flat metric.

We leave it to the reader to show that a formula similar to that of Proposition~\ref{prop:expr_free_dirac} 
holds for the free Dirac operator, written in the coordinates $(s_{k}', \bu_k)$ around the knot $\gamma_k$.
We still write $(\eta_+,\eta_-)$ for the sections aligned with or against the $\gamma_k$'s, 
defined in a $g_{\R^3}$-tubular neighborhood
of the link $\gamma$. Their relative phase is fixed with respect to the 
Seifert frame $(\mathbf{t}_{\gamma_k},\bs_{\Sigma_m},\bn_{\Sigma_m})$
for $\gamma_k\subset \partial S_m$.

For $\delta>0$ small enough and $\psi\in\dom(\cD_{\bA}^{(-)})$ (just as in the case of $\S^3$) 
we get the following:
\begin{enumerate}
	\item The phase jump functions $E_{\delta,\underline{j}}$ and $E_k$ are Borel and have values in $\S^1\subset\C$.
	
	\item For any $1\le m\le M$, and $\underline{j}\in \{2,3\}^M$, 
	the wave function $\overline{E}_{\delta,\underline{j}}\chi_{\delta,\underline{j}}\psi$ is in $H^1(\R^3)^2$.
	
	\item For any $1\le k\le K$, the wave function $\overline{E}_{\delta,\gamma_k}\chi_{\delta,\gamma_k}\psi$ satisfies
	$f=(f_+,f_-)^{\mathrm{T}}\in \dom\big(\cD_{\T_{\ell_k'},\alpha_k}^{(-)}\big)$ where $f$ is defined by:
	$$
		(\overline{E}_{\delta,\gamma_k}\chi_{\delta,\gamma_k}\psi)(\bx)
		=:f_+(s_k'(\bx),\bu_k(\bx))\eta_+(\bx)+f_-(s_k'(\bx),\bu_k(\bx))\eta_-(\bx),
	$$
	where $\dist_{\R^3}(\bx,\gamma_k)\le \delta$.
\end{enumerate}

We check the continuity on the dense set (in the graph norm):
\[
H^1(\T_{\ell_{k}'})\otimes\big(e^{i\alpha\theta}C^{\infty}_0(\R^2\setminus\{0\})^2)\oplus A_{\sing,\ell_k',\alpha_k},
\]
where $A_{\sing,\ell_k',\alpha_k}$ denotes the orthogonal complement of $\dom(\cD_{\T_{\ell_k'}}^{(\min)})$ with respect to the graph-norm.
We have \cite{dirac_s3_paper1}*{Lemma~10~and~below}:
\[
A_{\sing,\ell_k',\alpha_k}=\Big\{f_{\sing}(\underline{\lambda}):= \frac{1}{\sqrt{2\pi\ell}}\sum_{j\in\T_{\ell_k'}^*}\lambda_je^{ijs}
\begin{pmatrix} 0\\ e^{i\alpha_k\theta}K_{\alpha_k}(r\langle j\rangle)\end{pmatrix},
\ \underline{\lambda}\in\ell^2(\T_\ell^*)\Big\},
\]
where $K_{\alpha_k}$ denotes the modified Bessel function of the second kind, $\T_{\ell_k'}^*=\R/(\tfrac{2\pi}{\ell_k'}\Z)$ is the dual of $\T_{\ell_k'}$
and $\langle j\rangle:=\sqrt{1+j^2}$.
Let us show that the trace operator (with values in $L^2(\wt{\Sigma})^2$) is well-defined for elements of the form
\begin{equation}\label{eq:form_ans}
f:=\sum_{n=1}^N e^{i\alpha_k\theta} a_n(s)w_n(\bu)+f_{\sing}(\underline{\lambda}),
\end{equation}
where $N\in\mathbb{N}$, $w_n\in C^{\infty}_0(\R^2\setminus\{0\})^2$, $a_n\in H^1(\T_{\ell_{k}'})$ and $f_{\sing}(\underline{\lambda})\in A_{\sing,\ell_k',\alpha_k}$.
A computation gives the following identity:
\[
\norm{f_{\sing}(\underline{\lambda})}_{\T_{\ell}}^2=\bigg(\int_{0}^{+\infty}(K_{\alpha_k}(r)^2+K_{1-\alpha_k}(r)^2)r\d r \bigg)\sum_{j\in\T_{\ell_k'}^*}|\lambda_j|^2.
\]

For short we write $S(a,w)=\sum_{n=1}^Na_n(s)w_n(\bu)$ and $f_=:=e^{i\alpha_k\theta}S(a,w)\in \dom(\cD_{\T_{\ell_k'}}^{(\min)})$.
We also drop the dependence in $k$ and write $\ell$ instead of $\ell_k$.

If $0<\alpha_k<\tfrac{1}{2}$, then the trace of $f$ 
on $\T_{\ell}\times (\R\times\{0\})$ from above (or below) is square integrable 
in 
$$
	L^2(\T_{\ell}\times (\R\times\{0\});\d s'\d u_1)^2.
$$
Indeed the trace of $f_0$ is in $H^{1/2}(\T_{\ell}\times (\R\times\{0\}))^2$ by $H^1$-regularity on $\T_{\ell}\times(\R\times \R_+)$. 
Then the trace of $f_{\sing}(\underline{\lambda})$ is also $L^2$ with:
\begin{align*}
 \underset{{(s,\bu):\,u_2=0}}{\int}|f_{\sing}(\underline{\lambda})|^2&\le \frac{1}{2\pi}\sum_{j\in \T_{\ell}^*}|\lambda_j|^2\int_{\R}|K_{\alpha}(u_1\langle j\rangle)|^2\d u_1,\\
								    &\le \frac{1}{2\pi}\sum_{j\in \T_{\ell}^*}\frac{|\lambda_j|^2}{\langle j\rangle}\int_{\R}|K_{\alpha} (u_1)|^2\d u_1,\\
								    &\le C(\alpha)\norm{f_{\sing}(\underline{\lambda})}_{\T_{\ell}}^2,
\end{align*}
where $C(\alpha)$ is of order $\mathcal{O}_{\alpha\to \tfrac{1}{2}}(\tfrac{1}{1-2\alpha})$. By $\norm{\cdot}_{\T_{\ell}}$-orthogonality of $f_0$ and $f_{\sing}(\underline{\lambda})$,
we get that the trace of $f$ is in well-defined as a continuous linear map from $\dom(\cD_{\T_{\ell}}^{(-)})$ to $L^2(\T_{\ell}\times \R\times\{0\})^2$.

Going back to $\R^3$ (we were localizing around a knot $\gamma_k$), this shows that the trace of 
$\overline{E}_{\delta,\gamma_k}\chi_{\delta,\gamma_k}\psi$ 
on $\Sigma_m$ (with $\gamma_k\subset\partial S_m$) from above (or below) is also 
square integrable, and the same applies for $\chi_{\delta,\gamma_k}\psi$.

Away from $\partial S_m$ let us show that the trace of 
$\psi$ on $\Sigma_m$ is square integrable. 

Away from the other knots
this is due to the fact that $\overline{E}_{\delta,\underline{j}}\psi$ is 
$H^1$ on $B_{\delta \Sigma_m}\setminus B_{\delta/2}[\gamma]$ (away from the knots but close to $\Sigma_m$). 

Close to a knot $\gamma_k$ (transverse to $\Sigma_m$)
this is due to the fact that on $L^2(\T_{\ell_k'}\times\R^2)^2$ the trace on an oriented surface $\wt{\Sigma}$
transverse to the null section $\sigma_0:=\T_{\ell_k'}\times\{0\}$, with $|\wt{\Sigma}\cap\sigma_0|=1$
is continuous in the graph norm of $\cD_{\T_{\ell_k'},\alpha_k}^{(-)}$, and it takes values in $L^2(\wt{\Sigma})^2$.
Indeed: we need to compare square integrability close to the support of the magnetic knots.
By Proposition~\ref{prop_link_with_model_case}, we can work with the model case $\cD_{\T}^{(-)}$
for functions which are localized around a knot $\gamma_{k'}$. We use Proposition~\ref{prop:push_forward}, adapted to $\R^3$
to write down the Dirac operator in Cartesian coordinates~\eqref{eq:cart_coord}. 

We recall that Lichnerowicz' formula applies for functions in $\dom(\cD_{\T_{\ell_{k'}}}^{(\min)})$:
for $f\in e^{i\alpha\theta}C^{\infty}_0(\T_{\ell_{k'}}\times(\R^2\setminus\{0\}))^2$, Stokes formula gives:
$
 \int |\cD_{\T_{\ell_{k'}}}^{(\min)} f|^2=\int |(\nabla f)_{|_{\theta\neq 0}}|^2.
$
The extension to $\dom(\cD_{\T_{\ell_{k}'}}^{(\min)})$ follows by density. 

Using \eqref{eq:egalite}, 
it suffices to study the trace operator for functions localized around the intersection, or equivalently 
we can assume that $\wt{\Sig}$ is localized around the null-section (like a small disk) and check the continuity
on the dense set \eqref{eq:form_ans}.

So let $f=f_0+f_{\sing}(\underline{\lambda})$ with $f_0=e^{i\alpha_{k}\theta}S(a,w)$. 
Again, we drop the dependence in $k$ and write $\ell$ instead of $\ell_{k}'$.

Since $S(a,w):=\sum_{n=1}^N a_n(s)w_n(\bu)\in H^1$, its $L^2(\wt{\Sig})^2$-trace is well-defined with norm smaller than
$C(\wt{\Sig})\norm{S(a,w)}_{H^1}$. The trace of $e^{i\alpha\theta}S(a,w)$ is well-defined as an element of
$L^2(\Sig)^2$ with norm smaller than
\[
C(\wt{\Sig})\norm{S(a,w)}_{H^1(\T_{\ell}\times\R^2 )^2}=C(\wt{\Sig})\norm{e^{i\alpha\theta}S(a,w)}_{\T_{\ell}}.
\]
We have used Stokes formula on $\{(s,\bu)\in \T_{\ell}\times \R^2,\ |\bu|\ge r\}$ with $r$ small enough.
Furthermore, by the assumption of transversality $\wt{\Sig}\pitchfork \T_{\ell}\times\{0\}$ 
the trace of $f_{\sing}(\underline{\lambda})$ is well-defined in $L^2(\wt{\Sig})^2$ with 
squared norm smaller than:
\begin{align*}
	\int_{\wt{\Sig}}|f_{\sing}(\underline{\lambda})|^2&\le \frac{C(\wt{\Sig})}{2\pi\ell}\int_{\R^2}\Big|\sum_{j\in\T_{\ell}^*}|\lambda_j| K_{\alpha}(r\langle j\rangle)\Big|^2,\\
			&\le \frac{C(\wt{\Sig})}{\ell}\sum_{j_1,j_2\in \T_{\ell}^*}|\lambda_{j_1}||\lambda_{j_2}|\int_{0}^{+\infty}
				K_{\alpha}(r\langle j_1\rangle)K_{\alpha}(r\langle j_2\rangle)r\d r.
\end{align*} 
By using Cauchy-Schwarz inequality, then changing variables $r'_k=r\langle j_k\rangle$, $k\in\{1,2\}$ and re-using Cauchy-Schwarz inequality, we obtain
\begin{align*}
	\int_{\wt{\Sig}}|f_{\sing}(\underline{\lambda})|^2&\le \frac{C(\wt{\Sig})}{\ell}\sum_{j_1,j_2\in \T_{\ell}^*}\frac{|\lambda_{j_1}||\lambda_{j_2}|}{\langle j_1\rangle\langle j_2\rangle}
				\int_{0}^{+\infty}K_{\alpha}(r)^2r\d r,\\
			&\le \frac{C(\wt{\Sig})}{\ell}\int_{0}^{+\infty}K_{\alpha}(r)^2r\d r\sum_{j\in\T_{\ell}^*}|\lambda_j|^2\sum_{j\in\T_\ell^*}\langle j\rangle^{-2},\\
			&\le C(\wt{\Sig},\ell)\norm{f_{\sing}}_{\T_{\ell}}^2.
\end{align*}

All in all the $L^2(\wt{\Sig})^2$-trace 
of the sum is well defined with squared norm smaller than
\begin{equation*}
C(\wt{\Sig},\ell)[\norm{e^{i\alpha\theta}S(a,w)}_{\T_{\ell}}^2+\norm{f_{\sing}(\underline{\lambda})}_{\T_{\ell}}^2]
	\le C(\wt{\Sig},\ell)\norm{e^{i\alpha\theta} S(a,w)+f_{\sing}(\underline{\lambda})}_{\T_{\ell}}^2,
\end{equation*}
where we have used the $\norm{\cdot}_{\T_{\ell}}$-orthogonality of $e^{i\alpha\theta} S(a,w)$ and $f_{\sing}(\underline{\lambda})$.
By $\norm{\cdot}_{\T_{\ell}}$-density, the $L^2(\wt{\Sig})^2$-trace is well-defined as a continuous linear map on 
$(\dom(\cD_{\T_{\ell}}), \norm{\cdot}_{\T_{\ell}})$.

\section{Proof of Proposition~\ref{prop:infinite}}\label{sec:proof_infinite}
	We prove that $\ker\,(\cD_{\bA}^{(\min)})^*$ is infinite dimensional,
	or equivalently, that $\big(\ran\,\cD_{\bA}^{(\min)}\big)^{\perp}$ is infinite-dimensional.
	In \cite{dirac_s3_paper1}, we showed that the minimal domain 
	$\dom(\cD_{\bA}^{(\min)})$ corresponds to the elements in 
	$H^1(\Omega_{\uS})^2$ with the correct phase jumps across the surfaces.
	Furthermore, for $\psi\in \dom(\cD_{\bA}^{(\min)})$ we have by Lichnerowicz' formula
	\begin{equation}\label{eq:Dirac_and_Dirichlet_energy}
		\int |\cD_{\bA}^{(\min)}\psi|^2 \geq \int|(\nabla\psi)_{|_{\Omega_{\uS}}}|^2.
	\end{equation}
	Let us show that the boundary values of $\psi$ on each $\widetilde{S}_k$ (on both sides) 
	are continuous in $H^{1/2}_{\loc}(\wt{S}_k)$. (We recall that $\wt{S}_k=S_k\setminus \gamma$ is an open surface).
	Let $\chi\in \sD(\S^3,[0,1])$ with support in $\S^3\setminus \gamma$
	and which equals one in $\{\bp\in\cup_k S_k,\ \dist(\bp,\partial S_k)>\eps\}$ 
	for a given small enough $\eps>0$.
	Then $\chi\psi$ is in $\dom(\cD_{\bA}^{(\min)})$ and its traces on both sides 
	of $S_k$ are controlled by its $H^1(\Omega_{\uS})^2$-norm.
	More generally, we denote by $D$ the operator with domain $H^1(\Omega_{\uS})^2$ that acts as follows:
	\[
		D\psi:=\big(-i\bsigma(\nabla)\psi\big)\big|_{\Omega_{\uS}}
		\in L^2(\Omega_{\uS})^2\hookrightarrow L^2(\S^3)^2.
	\]
	Using \eqref{eq:Dirac_and_Dirichlet_energy} and the Sobolev inequality (in three dimensions)
	we see that for any smooth function $\chi\in \sD(\wt{S}_k)$, there exists $C=C(\uS,\chi)>0$ 
	such that for any $\psi\in\dom(D)$ the following estimate holds:
	\begin{equation}\label{eq:trace_cont}
		\norm{\chi \psi_{|_{(\wt{S}_k)_{\pm}}}}_{H^{1/2}(\wt{S}_k)^2}
		\leq C(\uS)\norm{\chi\psi}_{H^1(\Omega_{\uS})^2}\le C(\uS,\chi)\norm{D\psi}_{L^2}.
	\end{equation}
	So the trace functionals $\psi\mapsto \psi_{|_{(S_k)_{\pm}}}$ are continuous from
	$\big(\dom(\overline{D}),\norm{D(\cdot)}_{L^2}\big)$ to $H^{1/2}_{\mathrm{loc}}(\wt{S}_k)^2$.
	The elements of $\dom\big(\cD_{\bA}^{(\max)}\big)$ satisfy the jump condition across $S_k$,
	which is a $H^{1/2}_{\mathrm{loc}}(\wt{S}_k)^2$-continuous condition.
	And from the definition of $\cD_{\bA}^{(\min)}$, the traces of elements in $\dom(\cD_{\bA}^{(\min)})$
	are in $H^{1/2}(S_k)$.
	
	The estimate \eqref{eq:trace_cont} implies that for any $\psi\in H^1(\S^3)^2$ the distance 
	\[
	\mathrm{dist}_{L^2}\Big(D\psi, \ran\big(\cD_{\bA}^{(\min)}\big)\Big)
	\] is positive whenever $\psi_{|_{S_k}}\neq 0$, since the traces
	of $\psi$ on $(S_{k})_{\pm}$ coincide. 
	Thus the $L^2$-orthogonal complement
	of $\ran\big(\cD_{\bA}^{(\min)}\big)$ in $\clos_{L^2}(\ran D)\subset L^2(\S^3)^2$ is non-trivial.
	Let us show that it is infinite dimensional to end the proof.
	We can pick a sequence $(u_n)_{n\ge 0}$ of non-zero elements in $\sD(\wt{S}_k)^2$
	with disjoint support. We can extend them on $\S^3$ to form a sequence $(\psi_n)_{n\ge 0}$
	of elements in $H^1(\S^3)^2$ and we consider $E=\mathrm{span}_{n\ge 0}(D\psi_n)$.
	By \eqref{eq:trace_cont}, this vector subspace is infinite dimensional and we have
	\[
	 E\oplus \clos_{L^2}\ran\big(\cD_{\bA}^{(\min)}\big)\subset \clos_{L^2}(\ran D)\subset L^2(\S^3)^2.
	\]
	To establish the supplementarity of the two vector subspaces, 
	we can argue by contradiction and use the estimate \eqref{eq:trace_cont}.

\begin{bibdiv}[Bibliography]{}
\begin{biblist}

	\bib{Adametal99}{article}{
	author={Adam, C.},
	author={Muratori, B.},
	author={Nash, C.},
	title={Zero modes of the Dirac operator in three dimensions},
	journal={Phys. Rev. D (3)},
	volume={60},
	date={1999},
	number={12},
	pages={125001, 8},
	}
	
	\bib{Adametal00_1}{article}{
	author={Adam, C.},
	author={Muratori, B.},
	author={Nash, C.},
	title={Degeneracy of zero modes of the Dirac operator in three dimensions},
	journal={Phys. Lett. B},
	volume={485},
	date={2000},
	number={1-3},
	pages={314--318},
	}

	\bib{Adametal00_2}{article}{
	author={Adam, C.},
	author={Muratori, B.},
	author={Nash, C.},
	title={Zero modes in finite range magnetic fields},
	journal={Modern Phys. Lett. A},
	volume={15},
	date={2000},
	number={25},
	pages={1577--1581},
	}
	
	\bib{Adametal00_3}{article}{
	author={Adam, C.},
	author={Muratori, B.},
	author={Nash, C.},
	title={Multiple zero modes of the Dirac operator in three dimensions},
	journal={Phys. Rev. D (3)},
	volume={62},
	date={2000},
	number={8},
	pages={085026, 9},
	}
	
	\bib{Adametal05}{article}{
	author={Adam, C.},
	author={S\'{a}nchez-Guill\'{e}n, J.},
	title={The symmetries of the Dirac-Pauli equation in two and three dimensions},
	journal={J. Math. Phys.},
	volume={46},
	date={2005},
	pages={052304},
	}

	\bib{AhaBohm59}{article}{
	author={Aharonov, Y.},
	author={Bohm, D.},
	title={Significance of electromagnetic potentials in the quantum theory},
	journal={Phys. Rev. (2)},
	volume={115},
	date={1959},
	}
	
	\bib{AhaCash79}{article}{
	author={Aharonov, Y.},
	author={Casher, A.},
	title={Ground state of a spin-${\frac{1}{2}}$ charged particle in a two-dimensional magnetic field},
	journal={Phys. Rev. A (3)},
	volume={19},
	date={1979},
	number={6},
	pages={2461--2462},
	}
	

	\bib{AleyTolka}{article}{
   	author={Aleynikov, D.},
  	author={Tolkachev, E.},
   	title={Kustaanheimo-Stiefel transformation and static zero modes of Dirac operator},
   	date={2002},
	url={https://arxiv.org/abs/hep-th/0206211}
	}

	\bib{Arrizetal14}{article}{
   	author={Arrizabalaga, Naiara},
  	author={Mas, Albert},
   	author={Vega, Luis},
   	title={Shell interactions for Dirac operators},
   	journal={J. Math. Pures Appl. (9)},
   	volume={102},
   	date={2014},
   	number={4},
   	pages={617--639},
	}

	\bib{HarmAnDir}{article}{
   	author={Axelson, Andreas},
	author={Grognard, Ren\'{e}},
	author={Hogan, Jeff},
   	title={Harmonic Analysis of Dirac Operators on Lipschitz Domains},
   	journal={Clifford Analysis and its Applications},
   	volume={25},
   	date={2001},
   	pages={231--246},
	}

	\bib{MR1807254}{article}{
	author={Balinsky, A. A.},
	author={Evans, W. D.},
	title={On the zero modes of Pauli operators},
	journal={J. Funct. Anal.},
	volume={179},
	date={2001},
	number={1},
	pages={120--135},
	}

	\bib{MR1874252}{article}{
	author={Balinsky, A. A.},
	author={Evans, W. D.},
	title={On the zero modes of Weyl-Dirac operators and their multiplicity},
	journal={Bull. London Math. Soc.},
	volume={34},
	date={2002},
	number={2},
	pages={236--242},
	}
	
	\bib{BengVdB15}{article}{
	   author={Benguria, R. D.},
	   author={Van Den Bosch, H.},
	   title={A criterion for the existence of zero modes for the Pauli operator
	   with fastly decaying fields},
	   journal={J. Math. Phys.},
	   volume={56},
	   date={2015},
	   number={5},
	   pages={052104, 7},
	}

	\bib{BorgPule03}{article}{
	author={Borg, J. L.},
	author={Pul{\'e}, J. V.},
	title={Pauli approximations to the self-adjoint extensions of the Aharonov-Bohm Hamiltonian},
	journal={J. Math. Phys.},
	volume={44},
	date={2003},
	number={10},
	pages={4385--4410},
	}
	
	\bib{Schroe_op_Cycon_et_al}{book}{
	author={Cycon, H. L.},
	author={Froese, R. G.},
	author={Kirsch, W.},
	author={Simon, B.},
	title={Schr\"odinger operators with application to quantum mechanics and global geometry},
	series={Texts and Monographs in Physics},
	edition={Springer Study Edition},
	publisher={Springer-Verlag, Berlin},
	date={1987},
	}

	\bib{DeTurGlu08}{article}{
	author={DeTurck, D.},
	author={Gluck, H.},
	title={Electrodynamics and the Gauss linking integral on the 3-sphere and in hyperbolic 3-space},
	journal={J. Math. Phys.},
	volume={49},
	date={2008},
	number={2},
	misc={023504},
	}

	\bib{DunneMin}{article}{
 	 title = {Abelian zero modes in odd dimensions},
	  author = {Dunne, G.},
	  author={Min, H.}
 	 journal = {Phys. Rev. D},
 	 volume = {78},
 	 issue = {6},
 	 pages = {067701},
	  year = {2008},
	}

	\bib{Elton00}{article}{
	author={Elton, Daniel M.},
	title={New examples of zero modes},
	journal={J. Phys. A},
	volume={33},
	date={2000},
	number={41},
	pages={7297--7303},
	}
	
	\bib{Elton02}{article}{
	author={Elton, Daniel M.},
	title={The local structure of zero mode producing magnetic potentials},
	journal={Comm. Math. Phys.},
	volume={229},
	date={2002},
	number={1},
	pages={121--139},
	}

	\bib{ErdSol01}{article}{
   	author={Erd{\H{o}}s, L{\'a}szl{\'o}},
   	author={Solovej, Jan Philip},
   	title={The kernel of Dirac operators on $\mathbb{S}^3$ and $\mathbb{R}^3$},
   	journal={Rev. Math. Phys.},
   	volume={13},
   	date={2001},
   	number={10},
   	pages={1247--1280},
	}
	
	\bib{ErdVug02}{article}{
	author={Erd{\H{o}}s, L{\'a}szl{\'o}},
	author={Vougalter, Vitali},
	title={Pauli operator and Aharonov-Casher theorem for measure valued magnetic fields},
	journal={Comm. Math. Phys.},
	volume={225},
	date={2002},
	number={2},
	pages={399--421},
	}
	
	\bib{Fefferman95}{article}{
	author={Fefferman, Charles},
	title={Stability of Coulomb systems in a magnetic field},
	journal={Proc. Nat. Acad. Sci. U.S.A.},
	volume={92},
	date={1995},
	number={11},
	pages={5006--5007},
	}
	
	\bib{Fefferman96}{article}{
	author={Fefferman, Charles},
	title={On electrons and nuclei in a magnetic field},
	journal={Adv. Math.},
	volume={124},
	date={1996},
	number={1},
	pages={100--153},
	}
	
	\bib{FP30}{article}{
   	author={Frankl, F.},
   	author={Pontrjagin, L.},
   	title={Ein Knotensatz mit Anwendung auf die Dimensionstheorie},
   	language={German},
   	journal={Math. Ann.},
   	volume={102},
   	date={1930},
   	number={1},
   	pages={785--789},
	}
	
	\bib{FroLiebLoss86}{article}{
	author={Fr{\"o}hlich, J{\"u}rg},
	author={Lieb, Elliott H.},
	author={Loss, Michael},
	title={Stability of Coulomb systems with magnetic fields. I. The one-electron atom},
	journal={Comm. Math. Phys.},
	volume={104},
	date={1986},
	number={2},
	pages={251--270},
	}

	\bib{GeylGrish02}{article}{
	author={Geyler, V. A.},
	author={Grishanov, E. N.,},
	title={Zero Modes in a Periodic System of Aharonov-Bohm Solenoids},
	journal={JETP Letters},
	volume={75},
	number={7},
	date={2002},
	pages={354-356},
	}

	\bib{GeylStov04}{article}{
	author={Geyler, V. A.},
	author={{\v{S}}{\soft{t}}ov{\'{\i}}{\v{c}}ek, P.},
	title={On the Pauli operator for the Aharonov-Bohm effect with two solenoids},
	journal={J. Math. Phys.},
	volume={45},
	date={2004},
	number={1},
	pages={51--75},
	}

	\bib{HiroOgur01}{article}{
	author={Hirokawa, Masao},
	author={Ogurisu, Osamu},
	title={Ground state of a spin-$1/2$ charged particle in a two-dimensional magnetic field},
	journal={J. Math. Phys.},
	volume={42},
	date={2001},
	number={8},
	pages={3334--3343},
	}

	\bib{HardSpacSingInt}{article}{
   	author={Hofmann, Steve},
	author={Marmolejo-Olea, Emilio},
	author={Mitrea, Marius},
	author={P\'{e}rez-Esteva, Salvador},
	author={Taylor, Michael},
   	title={Hardy spaces, singular integrals and the geometry of euclidean domains of locally finite perimeter},
   	journal={Geom. Funct. Anal.},
   	volume={19},
   	date={2009},
	number={3},
   	pages={842--882},
	}

	\bib{JackPi00}{article}{
	  title={Creation and evolution of magnetic helicity},
	  author = {Jackiw, R.},
	  author={Pi, S.-Y.},
 	 journal = {Phys. Rev. D},
 	 volume = {61},
 	number = {10},
 	 pages = {105015},
 	 year = {2000},
	}

	\bib{Kato}{book}{
	author={Kato, Tosio},
	title={Pertubation Theory for Linear Operators},
	edition={2},
	publisher={Springer, Berlin, Heidelberg},
	date={1976},
	}
	
	\bib{LiebLossSol95}{article}{
	author={Lieb, Elliott H.},
	author={Loss, Michael},
	author={Solovej, Jan Philip},
	title={Stability of matter in magnetic fields},
	journal={Phys. Rev. Lett.},
	volume={75},
	date={1995},
	number={6},
	pages={985--989},
	}
	
	\bib{LiebLoss86}{article}{
	author={Lieb, Elliott H.},
	author={Loss, Michael},
	title={Stability of Coulomb systems with magnetic fields. II. The
	many-electron atom and the one-electron molecule},
	journal={Comm. Math. Phys.},
	volume={104},
	date={1986},
	number={2},
	pages={271--282},
	}
	
	\bib{LossYau86}{article}{
	author={Loss, Michael},
	author={Yau, Horng-Tzer},
	title={Stabilty of Coulomb systems with magnetic fields. III. Zero energy
	bound states of the Pauli operator},
	journal={Comm. Math. Phys.},
	volume={104},
	date={1986},
	number={2},
	pages={283--290},
	}
	
	\bib{MR2912458}{article}{
	author={Marmolejo-Olea, Emilio},
	author={Mitrea, Irina},
	author={Mitrea, Marius},
	author={Shi, Qiang},
	title={Transmission boundary problems for Dirac operators on Lipschitz
	domains and applications to Maxwell's and Helmholtz's equations},
	journal={Trans. Amer. Math. Soc.},
	volume={364},
	date={2012},
	number={8},
	pages={4369--4424},
	}

	\bib{Min}{article}{
	author = {Min, Hyunsoo},
	year = {2009},
	title = {Fermion Zero Modes in Odd Dimensions},
	journal={J. of Physics A},
	volume = {43},
	number={9},
	}

	\bib{Persson_on_Pauli}{article}{
	author={Persson, Mikael},
	title={On the Aharonov-Casher formula for different self-adjoint
	extensions of the Pauli operator with singular magnetic field},
	journal={Electron. J. Differential Equations},
	date={2005},
	number={55},
	pages={pp.~16},
	note={(electronic)},
	}
		
	\bib{Persson_dirac_2d}{article}{
  	author = {Persson, Mikael},
	title = {On the Dirac and Pauli Operators with Aharonov-Bohm Solenoids},
  	journal = {Lett. in Math. Phys.},
  	volume = {78},
	date={2006},
  	pages = {139--156},
	}
	
	\bib{Persson_even_dim}{article}{
	author={Persson, Mikael},
	title={Zero modes for the magnetic Pauli operator in even-dimensional Euclidean space},
	journal={Lett. Math. Phys.},
	volume={85},
	date={2008},
	number={2-3},
	pages={111--128},
	}
	
	\bib{Philips_spectral_flow}{article}{
  	author = {Phillips, John},
	title = {Self-adjoint Fredholm operators and spectral flow},
  	journal = {Canad. Math. Bull.},
  	volume = {39},
	date={1996},
	number={4},
  	pages = {460--467},
	}
	
	\bib{dirac_s3_paper1}{article}{
	author={Portmann, Fabian},
	author={Sok, J\'er\'emy},
	author={Solovej, Jan Philip},
	title={Self-adjointness \& spectral properties of Dirac operators with magnetic links},
	note={To appear in J. Math. Pures Appl., ArXiv:1701.04987},
	date={2017}
	}
	
	\bib{dirac_s3_paper2}{article}{
	author={Portmann, Fabian},
	author={Sok, J\'er\'emy},
	author={Solovej, Jan Philip},
	title={Spectral flow for Dirac operators with magnetic links},
	note={ArXiv:1701.05044},
	date={2017}
	}

	
	
	\bib{Rham55}{book}{
	author={de Rham, Georges},
	title={Vari{\'e}t{\'e}s diff{\'e}rentiables},
	language={French},
	year={1955},
	publisher={Hermann, Paris},
	}
	
	\bib{Rolfsen}{book}{
	author={Rolfsen, Dale},
	title={Knots and links},
	series={Mathematics Lecture Series},
	volume={7},
	note={Corrected reprint of the 1976 original},
	publisher={Publish or Perish, Inc., Houston, TX},
	date={1990},
	}
	
	\bib{RozenShiro06}{article}{
	author={Rozenblum, Grigori},
	author={Shirokov, Nikolai},
	title={Infiniteness of zero modes for the Pauli operator with singular magnetic field},
	journal={J. Funct. Anal.},
	volume={233},
	date={2006},
	number={1},
	pages={135--172},
	}
	
	\bib{MR916076}{article}{
	author={Scharlemann, Martin},
   	author={Thompson, Abigail},
   	title={Finding disjoint Seifert surfaces},
   	journal={Bull. London Math. Soc.},
   	volume={20},
   	date={1988},
   	number={1},
   	pages={61--64},
	}
	
	\bib{Seifert35}{article}{
   	author={Seifert, H.},
   	title={\"Uber das Geschlecht von Knoten},
   	language={German},
   	journal={Math. Ann.},
   	volume={110},
   	date={1935},
   	number={1},
   	pages={571--592},
	}
	
	\bib{shigekawa91}{article}{
	author={Shigekawa, Ichir{\=o}},
	title={Spectral properties of Schr\"odinger operators with magnetic
	fields for a spin $\frac12$ particle},
	journal={J. Funct. Anal.},
	volume={101},
	date={1991},
	number={2},
	pages={255--285},
	}
	
	\bib{Spivakvol2}{book}{
	author={Spivak, Michael},
	title={A comprehensive introduction to differential geometry. Vol. II},
	edition={3},
	publisher={Publish or Perish, Inc., Huston, Texas},
	date={1999},
	}
	
	\bib{Spivakvol4}{book}{
	author={Spivak, Michael},
	title={A comprehensive introduction to differential geometry. Vol. IV},
	edition={3},
	publisher={Publish or Perish, Inc., Huston, Texas},
	date={1999},
	}
	
	\bib{Tamura}{article}{
   	author={Tamura, Hideo},
	title={Resolvent convergence in norm for Dirac operator with Aharonov-Bohm field},
	journal={J. Math. Phys..},
   	volume={44},
   	date={2003},
   	number={7},
   	pages={2967--2993},
	}

	\bib{Wahl08}{incollection}{
   	author={Wahl, Charlotte},
	title={A New Topology on the Space of Unbounded Selfadjoint Operators, K-theory and Spectral Flow},
	booktitle={C\textsuperscript{*}-Algebras and Elliptic Theory II},
	editor={Burghelea, Melrose, Mishchenko, Troitsky},
   	year={2008},
   	pages={297--309},
	publisher={Birkh\"auser}
	}
	
	\bib{White69}{article}{
	   author={White, James H.},
	   title={Self-linking and the Gauss integral in higher dimensions},
	   journal={Amer. J. Math.},
	   volume={91},
	   date={1969},
	   pages={693--728},
	}
	
\end{biblist}
\end{bibdiv}

\end{document}